%% file: main.tex
\pdfoutput=1
\documentclass[runningheads]{llncs}

\usepackage[T1]{fontenc}
\usepackage{amsfonts,amsmath,amssymb}
\usepackage{thm-restate}   

\usepackage{graphicx,xcolor}
\usepackage{capt-of}      
\usepackage{algorithm}
\usepackage{algpseudocode}
\usepackage{booktabs}

\usepackage{tikz}
\usetikzlibrary{quantikz}

\usepackage[hidelinks]{hyperref}

\newcommand{\F}{\mathbb{F}}
\newcommand{\xor}{\oplus}
\newcommand{\CD}{\mathcal{MACD}}
\newcommand{\stdnorm}{\tfrac{1}{\sqrt{N}}}
\newcommand{\N}{\mathcal{N}}

\newcommand{\E}{\mathbb{E}}
\newcommand{\Var}{\operatorname{Var}}
\newcommand{\adv}{\operatorname{adv}}
\newcommand{\bad}{\mathtt{LOWSEEN}}
\newcommand{\cor}{\operatorname{Cor}}
\renewcommand{\Cap}{\operatorname{Cap}}

\begin{document}

\title{Masked Differential-linear Distinguishers and Quantum Approaches}
\titlerunning{Masked Differential-linear Distinguishers}

\author{Shobhit Pandey \and
Sarbani Sen \and
Debajyoti Bera \and
Ravi Anand}
\authorrunning{S. Pandey et al.}

\institute{Indraprastha Institute of Information Technology, Delhi, India}

\maketitle

\begin{abstract}
We introduce \emph{masked auto-correlation}, a new primitive for the
cryptanalysis of symmetric-key primitives, together with a quantum attack
pipeline built on it. For a permutation $f$, output masks $\alpha,\beta$, and an
input difference $w$, masked auto-correlation (MAC) measures the correlation
between the masked outputs $\alpha\cdot f(x)$ and $\beta\cdot f(x\oplus w)$. The
associated masked differential-linear (MDL) approximations strictly generalize
several classical techniques; ordinary linear cryptanalysis, differential-linear
cryptanalysis, and the differential-linear connectivity table all arise as
special cases.
Our central object of study is the problem of \emph{finding} mask pairs with
large masked cross-correlation---those that yield powerful
distinguishers---which we call MAC Fishing. We give a constant-query quantum
algorithm that samples such pairs according to their squared correlation, and we
prove an exponential classical lower bound of
$\Omega(N/\log N)$ queries, by adapting the hardness of Fourier Fishing. To our
knowledge this is the first result pairing a quantum upper bound with a classical
lower bound for the core task of identifying high-correlation approximations,
making quantum algorithms an absolute necessity.
Building on this, we analyse the distribution of masked auto-correlation for
random permutations, and then construct capacity-based
distinguishers and key-recovery attacks, both classically and with a quadratic
quantum speed-up using amplitude estimation. We validate our claims with
experiments on reduced-round mini-AES.

\keywords{Cryptanalysis \and Quantum cryptanalysis \and Masked auto-correlation
\and Differential-linear \and Boolean functions.}
\end{abstract}

\input{final/introduction}
\input{final/background}
\input{final/masked_auto_cor}

\input{final/mac_fishing}
\input{final/mdl_approx}

\input{final/experiments}

\input{final/attacks}

\bibliographystyle{splncs04}
\bibliography{references}

\appendix
\input{final/appendix_lower-bound-proof}

\input{final/appendix}

\input{final/appendix-simple-proofs}

\end{document}

%% file: final/introduction.tex
\section{Introduction}

The security of symmetric-key primitives---block ciphers, stream ciphers, and
their associated modes---has been studied extensively in the classical paradigm.
The two foundational classical attack paradigms, linear
cryptanalysis~\cite{matsui1993linear} and differential
cryptanalysis~\cite{biham1991differential}, were introduced in the early
1990s and remain the primary benchmarks against which new designs are evaluated.
Subsequent work has produced a rich body of classical techniques that interpolate
and extend these two paradigms: differential-linear
cryptanalysis~\cite{langford1994differential}, multiple and multidimensional
linear approximations~\cite{HCN09,HCN19}, and algebraic
variants~\cite{courtois-meier-algebraic}, among others. Despite this
diversity, virtually all of these techniques share a common classical
algorithmic trait: the attacker processes keystream or ciphertext samples
sequentially, accumulates statistical evidence, and constructs a distinguishing
attack or recovers key. In the classical setting, the bottleneck is
always the \emph{number of plaintext--ciphertext pairs} required to drive the
distinguishing attack to significance, and the design community has
responded by designing ciphers whose linear and differential probabilities are as
low as possible.

The advent of quantum computing forces a re-examination of this
picture. Quantum algorithms can offer quadratic to superpolynomial speedups for
certain structured search and sampling problems, so it is natural to ask whether
the statistical properties underlying symmetric-key cryptanalysis admit similar
quantum speed-ups. Early results showed that
Simon's algorithm can break many constructions with a secret period
structure~\cite{simon-algorithm,kuwakado-morii-simon}, and subsequent work has
turned classical statistical attacks---linear, differential, and
meet-in-the-middle---into the quantum
setting~\cite{boneh-zhandry-q1q2,kaplan-q1q2,hosoyamada-sasaki-quantum-linear}.
Most recently, Hosoyamada studied quantum multi-linear
approximations~\cite{hosoyamada2023multidim}, showing that the quantum
query complexity of multidimensional linear distinguishers can be significantly
lower than their classical counterparts. Despite this progress, most existing
quantum attacks inherit the structure of their classical counterparts,
porting a known statistical distinguisher into a quantum query model rather than
identifying \emph{new} properties that are inherently quantum in nature.
In this paper we introduce a new cryptanalytic primitive, \emph{masked
auto-correlation}, whose most powerful instantiation is provably beyond the reach
of efficient classical algorithms.

\paragraph{Masked auto-correlation.}
At the base of linear cryptanalysis is the Walsh--Hadamard transform: for
$f:\F_2^n\to\F_2$, the coefficient $\widehat f(\lambda)$ measures the correlation
between $f$ and the linear function $x\mapsto\lambda\cdot x$, and the design goal
is to keep all coefficients small. \emph{Cross-correlation} generalizes this:
instead of correlating $f$ against a fixed linear function, one correlates it
against a shifted copy of another function, sweeping over all shifts, and
\emph{auto-correlation} is the special case of a function against a shifted copy
of itself. We study this in the vectorial setting natural for block ciphers,
where the round function is a permutation $f:\F_2^n\to\F_2^n$, and ask where the
auto-correlations among different output bits are statistically significant.
Rather than fix a single output bit, we take linear combinations: fixing output
masks $\alpha,\beta\in\F_2^n$, the \emph{masked auto-correlation} measures the
cross-correlation between $\alpha\cdot f(x)$ and $\beta\cdot f(x\oplus w)$ as the
difference $w$ ranges over $\F_2^n$,
\[
  C^{w}_{f}(\alpha,\beta) \;\propto\; \sum_x (-1)^{\alpha\cdot f(x)\,\oplus\,\beta\cdot f(x\oplus w)}
\]
(formalized in Section~\ref{sec:mac}). Since $f$ is a permutation,
$C^w_f(\alpha,\beta)$ is proportional to the expectation of a $\pm1$ random
variable, so it can be estimated accurately for any \emph{given} $\alpha,\beta,w$.
Two obstacles stand in the way of exploiting it, however. First, \emph{finding}
the mask pairs $(\alpha,\beta)$ whose cross-correlation is large---those that
yield powerful distinguishers---is a search over an exponential space, infeasible
by naive enumeration. Second, even granted such masks, \emph{sampling} according
to the cross-correlation distribution has no obvious efficient classical solution.
The central thesis of this paper is that both obstacles fall to a quantum
algorithm, and that the resulting pipeline is provably beyond efficient classical
reach.

\subsection{Technical Overview}\label{sec:overview}

Our development rests on three objects, defined formally in
Sections~\ref{sec:mac} and~\ref{sec:mdl}: masked cross-correlation
$C^w_{f,g}(\alpha,\beta)$ of two permutations and its auto-correlation
specialization $C^w_f(\alpha,\beta)$ denoted {\em MAC}; the induced \emph{masked auto-correlation
distribution} $\CD_w$, which places mass $\propto C^w_f(\alpha,\beta)^2$ on
each mask pair, and the \emph{masked differential-linear (MDL)
approximation} $\alpha\cdot f(x)\oplus\beta\cdot g(x\oplus w)$, a two-function,
difference-aware generalization of the classical linear approximation. We
summarize our contributions built on them.

\paragraph{A quantum algorithm for finding good masks.}
Most prior work on linear and differential approximations \emph{assumes} a
highly-correlated approximation is already in hand~\cite{kaplan2016diff_lin}.
We instead give a quantum algorithm that \emph{finds} one: using $3n$ qubits and
two oracle calls to $f$, it prepares a superposition that samples from $\CD_w$ for a fixed key, so
a measurement returns a mask pair $(\alpha,\beta)$ with probability proportional
to $C_w(\alpha,\beta)^2$ --- heavy pairs, exactly the exploitable ones, are the
likely outputs. The algorithm is easily generalizable to sample across all keys with minimal overhead.

\paragraph{Classical hardness of mask finding.}
We then ask whether this search can be done classically. We formalize it as the
\emph{MAC Fishing} problem---output any $(\alpha,\beta)$ with large $|C_w(\alpha,\beta)|$, under the promise that such a pair exists---and prove it requires
$\Omega(N/\log N)$ classical queries. The argument adapts the Aaronson--Chen
hardness proof for \emph{Fourier Fishing}~\cite{aaronson-chen}, the canonical
example of a search that is easy quantumly yet provably hard classically and a
member of the \emph{forrelation} family of maximal query separations; masked
cross-correlation is a natural cross-correlation analogue.
To our knowledge this is the first provably exponential speedup shown 
for the central task of identifying high-correlation approximations.

As part of the proof, we derived sharp concentration bound on a possible statistic based on masked auto-correlation, namely, the total mass of high auto-correlations (those that cross a threshold). It allows us to show later that the statistic indeed differs widely in Mini-AES as compared to that in a random permutation.

\paragraph{A unifying framework and its applications.}
The MDL approximation is a strict generalization of several classical
distinguishers rather than a technique alongside them. Taking $g$ the identity
and $w=0^n$ recovers ordinary linear cryptanalysis; enforcing identical masks
$\alpha=\beta$ recovers differential-linear cryptanalysis and, in the single-bit
case, the DLCT entry; and organizing mask pairs into a vector space recovers the
multidimensional variant of each (Section~\ref{sec:mdl-relations}). What unifies
them is not only the form of the approximation but its \emph{statistics}: in
every case the figure of merit is the capacity of a space of masked
auto-correlations, which we show obeys the same $\chi^2$ law for a random
permutation (Conjecture~\ref{conj:capacity}). A single distributional analysis
therefore governs the whole family, and a single distinguisher exploits it.

On top of this we build an end-to-end pipeline. Classically, a capacity-based
distinguisher thresholds the aggregated masked auto-correlation of a basis of
mask pairs, and a last-round key recovery extends it: a wrong subkey guess
randomizes the partially decrypted cipher, so the correct key is the one whose
correlation profile matches a reference template. Crucially this match is
invisible to the unsigned energy---the subkey enters the masked bits as a
sign---so recovery is resolved by \emph{signed} template matching
(Section~\ref{sec:applications}). Quantumly, both attacks are driven by \emph{MAC
Fishing}: the masks are supplied by the constant-query sampler of
Section~\ref{sec:macfishing} and each correlation is read off by amplitude
estimation, replacing the classical $O(1/\tau)$ verification with
$O(1/\sqrt\tau)$ for a quadratic query advantage---on top of a mask-finding step
with no efficient classical analogue at all.

We validate the two halves on the two ciphers each suits best. The distributional
claims are verified on a $16$-bit 3-round Mini-AES, whose block is small enough to
measure every correlation exactly over the full codebook and to estimate these
distributions across many keys against a known baseline; further, we used a ``small'' shift of $w=0x0003$ that essentially flips the two rightmost bits, yet, we were surprised to see the amount of auto-correlation produced by it. The other half, the attacks, are exercised on a toy Simon cipher, whose Feistel last round inserts the subkey as a
constant phase on the masked output bits and so realizes, in clean form, the
sign-degeneracy that the signed-template key recovery is designed to resolve
(Appendix~\ref{sec:exp-validation}).

\subsection{Our Contributions}\label{sec:contributions}
\begin{itemize}
  \item We extend cross-correlation to vectorial
    Boolean functions and define (multidimensional) MDL approximations
    (Section~\ref{sec:mdl}).  Linear cryptanalysis, differential-linear
    cryptanalysis, their multidimensional variants, and the DLCT are each strict
    special cases, obtained by constraining the output masks or the input
    difference.

  \item We characterize masked auto-correlation
    for a random permutation (Section~\ref{sec:mac}), and conjecture that the capacity of a $t$-dimensional MDL
    approximation with non-trivial masks (i.e. masks \textit{not} of the form $(\alpha, 0)$ or $(0, \beta)$) follows a central $\chi^2$ with $2^t-1$
    degrees of freedom (Conjecture~\ref{conj:capacity}).

  \item We give a
    constant-query quantum algorithm that samples mask pairs in proportion to
    their squared cross-correlation, directly producing high-correlation MDL
    approximations, and a matching classical lower bound: {\em MAC Fishing} requires
    $\Omega(N/\log N)$ classical queries (Theorem~\ref{thm:MAC_lb}), adapting
    the Aaronson--Chen hardness of Fourier Fishing. To our knowledge this is the
    first pairing of a quantum upper bound with a classical lower bound for
    identifying high-correlation approximations.

  \item We build a capacity-based
    distinguisher and a last-round key recovery, the latter using signed template
    matching to resolve a sign-degeneracy invisible to capacity ranking, and give
    {\em MAC-Fishing} driven quantum versions in which the masks come from the quantum
    sampler and each correlation is verified by amplitude estimation, for a
    quadratic query advantage (Section~\ref{sec:applications}).

  \item We verify the framework on the two
    ciphers each part suits. On $16$-bit 3-round Mini-AES we confirm the distributional
    theory: heavy mask pairs exist for the vast majority of keys, the good ones
    are stable across keys, and the random-permutation capacity matches the
    predicted $\chi^2$ (Sections~\ref{sec:macfishing} and~\ref{sec:mdl}, Figures~\ref{fig:v_tau},\ref{fig:mini_AES_3round_example_masks} and \ref{fig:capacity_dist}). On a toy
    Simon cipher we run the pipeline end to end---the capacity distinguisher
    separates the cipher from a random permutation at the predicted data
    complexity, and the signed template recovers the last-round subkey where the
    unsigned energy cannot (Section~\ref{sec:applications},
    Appendix~\ref{sec:exp-validation}).
\end{itemize}
\begin{remark}[Why two ciphers]
    Mini-AES and toy Simon are doing different jobs, and each fits its job structurally. The \textit{distributional} claims — that heavy mask pairs exist, are stable across keys, and that random-permutation capacity is $\chi^2$  — are statements about a population of keys and masks, so they need many keys each measured exactly. Mini-AES's 16-bit block makes that feasible: every correlation is computed over the full codebook and the random baseline sits at a known $\sigma=2^{-8}$. The \textit{attacks}, by contrast, hinge on the last-round structure. Simon's Feistel round inserts the subkey into the masked output bits as a constant XOR, so $\cor_i(k)=(-1)^{\langle k,m_i\rangle}\cor_i^\star$  with key-mask $m_i=\alpha_i\oplus\beta_i$ — the magnitudes are key-blind and all the key information lives in the signs, which is exactly the sign-degeneracy the signed-template recovery is built to resolve. Mini-AES's nonlinear last round gives no such clean phase, so it's the wrong structure for our key-recovery demonstration. Hence: Mini-AES verifies the statistics, Simon exercises the pipeline.
\end{remark}
\subsection{Organization}\label{sec:organization}
Section~\ref{sec:background} reviews correlation, capacity, linear and
differential-linear cryptanalysis, and the Fourier Fishing hardness result.
Section~\ref{sec:mac} develops the properties of masked auto-correlation for an
arbitrary and for a random permutation. Section~\ref{sec:macfishing} introduces
the MAC Fishing problem, the constant-query quantum sampler, and the
classical-hardness theorem (proved in Appendix~\ref{sec:proof-MACfishing-lb}), and
shows experimentally that heavy mask pairs are abundant and independent of the key in
Mini-AES. Section~\ref{sec:mdl} defines MDL approximations, shows how they unify
multidimensional linear, differential-linear, and DLCT cryptanalysis, and
verifies the $\chi^2$ capacity law on random permutations.
Section~\ref{sec:applications} turns these into attacks---a capacity-based
distinguisher and last-round key recovery, demonstrated on Mini-AES, and their
MAC-Fishing-driven quantum counterparts. Appendix~\ref{sec:exp-validation}
reports the end-to-end validation of the full pipeline on a toy Simon cipher; the
remaining appendices collect the proofs and the classical verification routines. Several of our experiments used Wilson's score to produce results with high confidence; a background on using this method is included in Appendix~\ref{app:Wilson}.

%% file: final/background.tex
\section{Background and Related Works}
\label{sec:background}
\subsection{Notation}
Throughout, $n$ is the block size and $N = 2^n$. We work over $\F_2^n$, with
$\cdot$ denoting the inner product over $\F_2$ and $\xor$ bitwise XOR. Lower-case
$\alpha,\beta,\lambda \in \F_2^n$ denote output masks and $w \in \F_2^n$ an input
difference (or shift); $0^n$ is the all-zeroes string. A permutation
$f ~\text{or}~B : \F_2^n \to \F_2^n$ models a (round-reduced) block cipher, and $E_K$ the
full cipher under key $K$, with $R_k^{-1}$ the inverse last-round map under a
subkey guess $k$. We write $C^w_f(\alpha,\beta)$ for the masked auto-correlation
of $f$, abbreviated $C_w(\alpha,\beta)$ when $f$ is clear from context, and
$\CD_w$ for the induced distribution over mask pairs. $\Cap(\cdot)$ denotes
capacity, $\cor(\cdot)$ correlation, and $\N(\mu,\sigma^2)$ the normal
distribution. We use standard asymptotic notation ($O,\Omega,\omega,o$). For a
value distribution on a codomain of size $T$ the capacity carries a multiplier
$T$; in particular a $\F_2^t$-valued function has multiplier $2^t$, which is the
form used in later sections.

\subsection{Correlation, capacity, and approximations}
The correlation of a Boolean function $f : \F_2^n \to \F_2$ is defined as its deviation from the uniform function; mathematically,
$$\cor(f) = \Pr_x[f(x)=0] - \Pr_x[f(x)=1] = \tfrac{1}{2^n}\sum_x (-1)^{f(x)}.$$


Next, let $f:\F_2^n \to \{1, \ldots t\}$. Value distribution ({\it aka.} probability distribution) of $f$ is defined as 
$p_f(y) = \Pr_{x \in_R \F_2^n} [f(x)=y]$ for $y \in \{1, \ldots, t\}$.

Capacity of a value distribution, say $p_f(y)$, was defined by Biryukov et al.~\cite{biryukov2004multiple}.
$$\Cap(p_f) = t \sum_{y} \left(p_f(y) - \tfrac{1}{t} \right)^2$$

Capacity is also known as ``squared Euclidean imbalance''. 
Biryukov et al. \cite{biryukov2004multiple} showed that the data complexity to distinguish between $f$ and a uniform random function is proportional to the inverse of its capacity, $O(1/\Cap(p_f))$.

Finally, consider the case of $f:\F_2^n \to \F_2^n$ being a permutation. Here, of course, $p_f(y) = 1/2^n$ since $f$ is a permutation.

A common approach undertaken for cryptanalysis of ciphers is to study one or more approximations, usually linear in the input and output of the cipher. Suppose, $g^B(x):\F_2^n \to \{1,\ldots, t\}$ is an approximation of a block cipher $B$ that has a bias $\epsilon$, i.e., $|\Pr_x[g^B(x)=0] - 1/2| = \epsilon$. Then, the capacity could act as a distinguisher between $B$ and a random function; e.g., it would be $2\epsilon^2$ for $B$. That implies, that the data complexity of this distinguisher would be $O(1/\epsilon^2)$.

Biryukov et al. demonstrated that defining multiple approximations and then using a suitable notion of capacity for multiple approximations may lead to a smaller data complexity~\cite{biryukov2004multiple,hermelin2008multidim}.

In this work we want to design an efficient distinguisher that is defined in terms of multiple approximations involving masked auto-correlations. However, before that, let's discuss the prominent earlier works along this direction.

\subsection{Linear cryptanalysis}

Linear cryptanalysis, introduced by Matsui~\cite{matsui1993linear}, distinguishes a
block cipher~$B$ from a random permutation by exhibiting a linear approximation
$\alpha\cdot x \oplus \beta\cdot B(x)$ whose correlation
\[
  \mathrm{Cor}(B;\alpha,\beta) \;:=\; \Pr_x[\alpha\cdot x = \beta\cdot B(x)] -
  \Pr_x[\alpha\cdot x \neq \beta\cdot B(x)]
\]
deviates noticeably from~$0$. For an $n$\nobreakdash-bit random permutation
$\mathrm{Cor}$ is approximately $\mathcal{N}(0,2^{-n})$ distributed for any non-zero
mask pair~\cite{DR07}, so an empirical correlation outside
$[-2^{-n/2},2^{-n/2}]$ already suffices to distinguish, with data complexity
$O(\mathrm{Cor}^{-2})$. Early attempts to combine several approximations relied on
(generally invalid) statistical-independence assumptions~%
\cite{kaliski1994multiple,biryukov2004multiple,murphy2006independence}.

\subsection{Multidimensional Linear Cryptanalysis}
\label{subsec:multidim-lin-crypt}

A common approach to designing a distinguisher from several approximations at
once is Pearson's chi-square goodness-of-fit test, so we first recall it. The
chi-square distribution arises as the distribution of a sum of squares of
independent standard normal variables. Formally, $\chi^2_\ell(\delta)$ denotes the
(non-central) chi-square distribution with $\ell$ degrees of freedom and
non-centrality parameter $\delta$, with mean $\ell + \delta$ and variance
$2(\ell + 2\delta)$. Its central role in cryptanalysis stems from Pearson's
goodness-of-fit test: given $k$ multinomial cell counts $x_i$ from $m$ trials
with expected probabilities $p_i$, the statistic
\begin{equation}
    T = \sum_{i=1}^{k} \frac{(x_i - m p_i)^2}{m p_i}
\end{equation}
is asymptotically $\chi^2_{k-1}(\delta)$-distributed with $\delta = \sum_{i=1}^{k}(m p_i)^2$~\cite{RT89}.

Building on this, Hermelin, Cho and Nyberg~\cite{hermelin2008multidim,hermelin2009multiext,HCN19}
proposed \emph{multidimensional linear cryptanalysis} to address the statistical-independence assumption inherent in the early works that combined multiple linear approxiations. Their idea was to select a set of input--output mask pairs $S=\{(\alpha_i,\beta_i)~:~i=1 \ldots l\}$ that form some basis of a vector
space $V\subset\mathbb{F}_2^m\times\mathbb{F}_2^n$. The induced distribution
\[
  p_S^{f}(z) \;:=\; \Pr_x\!\left[\bigl(\alpha_1\cdot x\oplus\beta_1\cdot f(x),\,\ldots,\,
                          \alpha_\ell\cdot x\oplus\beta_\ell\cdot f(x)\bigr) = z\right]
\]
on $\mathbb{F}_2^{\ell}$ deviates from the uniform distribution precisely when the capacity of $p^f_S$ is large.

Furthermore, since 
\[
  \mathrm{Cap}(p_S^{f}) \;=\; \sum_{(\alpha,\beta)\in V\setminus\{0\}}
                              \mathrm{Cor}(f;\alpha,\beta)^2
\]
and $\cor(f;\alpha,\beta)$ is distributed normally for a uniform random permutation, Pearson's $\chi^{2}$ goodness-of-fit test has been proposed as a distinguisher with data complexity
$O\!\bigl(\sqrt{2^{\ell}}/\mathrm{Cap}(p_S^{f})\bigr)$. Further improvements have been proposed, e.g., LLR-based
variants~\cite{baigneres2004how,cho2010linear} can reach $O(1/\mathrm{Cap})$ at the
cost of accurate knowledge of the (key-dependent) target distribution.

The degree of freedom of the chi-square distributions should also be carefully considered. The trivial mask pairs of the form $(\alpha, 0)$ or $(0, \beta)$ have constant zero correlation and reduce the effective degrees of freedom from the naive
$2^l - 1$ to $2^l - 2\cdot 2^n + 1$~\cite{KN19}. Approximations free of trivial components~\cite{Nyb19} recover the full $2^l - 1$ degrees of freedom. 

The multidimensional linear cryptanalysis framework branches in several directions in the subsequent works.
\emph{Zero\nobreakdash-correlation} linear cryptanalysis~%
\cite{bogdanov2014zero,bogdanov2012zerodata} exploits approximations with
$\mathrm{Cor}(B;\alpha,\beta)=0$ exactly, distinguishing $B$ from random
because the squared correlation of a random permutation is $\Theta(2^{-n})$ on
average. Bogdanov, Leander, Nyberg and Wang~\cite{bogdanov2012integral}, refined by
Sun et al.~\cite{sun2015links}, established a tight correspondence between
multidimensional zero\nobreakdash-correlation approximations with linearly
independent masks and integral distinguishers based on balanced functions, linking
two attack families that were previously studied in isolation.
\emph{Generalised} linear cryptanalysis on arbitrary finite abelian
groups~\cite{baigneres2007linear} replaces bit masks with group characters and
underlies, e.g., Beyne's analysis of the \textsf{FF3-1} and \textsf{FEA}
structures~\cite{beyne2021linear}.
 
In the quantum setting, Kaplan et al.~\cite{kaplan2016diff_lin} showed how quantum
counting~\cite{brassard2002amplitude} gives a quadratic speed-up for the
\emph{one-dimensional} variant: estimating $\#\{x\mid \alpha\cdot x=\beta\cdot
B(x)\}$ to the precision required for distinguishing takes time $O(1/|\mathrm{Cor}|)$
rather than $O(1/\mathrm{Cor}^{2})$. Lifting this speed-up to the multidimensional
regime is non-obvious, since the chi-squared statistic on $p_S^{B}$ is not the
population count of any natural Boolean function. Hosoyamada~%
\cite{hosoyamada2023multidim} bridged this gap with a minor modification of Simon's
subroutine---retaining the output register and Hadamard-transforming it---which
yields a state whose squared amplitude on $\ket{\alpha}\ket{\beta}$ equals
$\mathrm{Cor}(f;\alpha,\beta)^{2}/2^{n}$; combined with amplitude amplification,
this~\emph{correlation extraction algorithm} delivers speed-ups for multidimensional
linear, zero-correlation and balanced-property integral distinguishers, and on
occasion exceeds the quadratic barrier when several orthogonal integral properties
coexist.

\subsection{Differential-linear cryptanalysis}

Differential-linear cryptanalysis, introduced by Langford and Hellman~\cite{langford1994differential},
combines differential~\cite{biham1991differential} and linear~\cite{matsui1993linear}
cryptanalysis by splitting the cipher as $E = E_1 \circ E_0$, where $E_0$ has a
strong (truncated) differential and $E_1$ a biased linear approximation. For
an input difference $\delta$ and output mask $w$, the bias of the differential-linear approximation is
\begin{equation}
   \mathcal{E}_{\delta,w}
   := \Pr\!\bigl[\,w \cdot (E(x+\delta) + E(x)) = 0\,\bigr] - \tfrac{1}{2}.
\end{equation}

The classical estimate $\mathcal{E}_{\delta,w} \approx \varepsilon_{\delta,v}\, c_{v,w}^{2}$,
obtained via the Piling-up lemma~\cite{langford1994differential,biham2002enhancing},
combines the bias $\varepsilon_{\delta,v}$ of the truncated differential over $E_0$ with the
squared correlation $c_{v,w}^{2}$ of the linear approximation over $E_1$ where $v$ is the truncated differential output mask for the first half and input mask for the second half. Using the
Chabaud--Vaudenay link~\cite{chabaud1994links}, Blondeau, Leander, and
Nyberg~\cite{blondeau2014differential} showed that, assuming independence of $E_0$ and $E_1$,
\begin{equation}
   \mathcal{E}_{\delta,w}
   = \sum_{v \in \mathbb{F}_2^n} \varepsilon_{\delta,v}\, c_{v,w}^{2},
   \label{eq:dl-hull}
\end{equation}
the \emph{differential-linear hull} of $E$.

\paragraph{Multidimensional variant.}
For input differences in $U^{\perp}$ and output masks in $W$, the bias generalises to
\begin{equation}
   \mathcal{E}_{U,W}
   = \frac{2}{|W|} \sum_{v \neq 0} \varepsilon_{U,v}\, C_{v,W},
\end{equation}
where, $C_{v,W} = \sum_{w \in W\setminus\{0\}} c_{v,w}^{2}$ is the capacity of the
multidimensional linear approximation~\cite{blondeau2014differential}. Multiple input
differences enable structures that reduce data complexity by $|U^{\perp}|$, while
multidimensional output masks aggregate correlated approximations into a single capacity.

Bar-On et al. \cite{cryptoeprint:2019/256} proved that this independence assumption fails in many real-world cases and introduced a new tool, the Differential-Linear Connectivity Table (DLCT), to capture the dependency between the differential and linear parts at the boundary.  

\subsection{Fourier Fishing and quantum query lower bounds}
The hardness side of our results draws on a separate line of work concerning the
difficulty of \emph{finding} structure in a Boolean function, rather than
exploiting structure already known. Aaronson and
Chen~\cite{aaronson-chen}, in the course of laying complexity-theoretic
foundations for quantum supremacy, studied the \emph{Fourier Fishing} problem:
given oracle access to a Boolean function, output a point $z$ at which the
(squared, normalised) Walsh--Hadamard coefficient $|\widehat f(z)|^2$ is large.
They showed that, while a quantum algorithm solves this with a single query by
Hadamard-transforming the function and measuring, any classical randomized
algorithm needs $\Omega(N/\log N)$ queries to do appreciably better than chance.
This separation is the canonical example of a search problem that is easy
quantumly yet provably hard classically, and it sits within the broader
\emph{forrelation} family of problems known to exhibit maximal quantum--classical
query gaps. As we show in Section~\ref{sec:macfishing}, identifying mask pairs
with large masked cross-correlation is precisely a cross-correlation analogue of
Fourier Fishing; our classical lower bound (Theorem~\ref{thm:MAC_lb}) adapts the
Aaronson--Chen argument to this setting, and is to our knowledge the first such
hardness result for the mask-identification task underlying linear-style
distinguishers.

%% file: final/masked_auto_cor.tex
\section{Masked Auto-Correlation}\label{sec:mac}

In this section we establish the properties of masked auto-correlation used later
in the paper; several are of independent interest. We separate the \emph{exact}
identities, which hold for any permutation $f$, from the \emph{distributional}
behavior of $C_w(\alpha,\beta)$ for a random permutation. Throughout we abbreviate
$C^w_f(\alpha,\beta)$ to $C_w(\alpha,\beta)$ when $f$ is clear from context.

We first recall the central definitions formally; they were introduced
informally in Section~\ref{sec:overview}.

\begin{definition}[Masked cross-correlation]\label{def:mcc}
Let $f,g:\F_2^n\to\F_2^n$ be permutations, $\alpha,\beta\in\F_2^n$ output masks,
and $w\in\F_2^n$ a difference. The \emph{masked cross-correlation} of $f,g$ is
\[
  C^w_{f,g}(\alpha,\beta)
  = \stdnorm \sum_{x\in\F_2^n} (-1)^{\alpha\cdot f(x)}(-1)^{\beta\cdot g(x\xor w)}
  \;\in\; [-\sqrt N,\sqrt N].
\]
\end{definition}

\begin{definition}[Masked auto-correlation]\label{def:mac}
The \emph{masked auto-correlation} of a permutation $f$ is the case $g=f$ of
Definition~\ref{def:mcc},
\[
  C^w_f(\alpha,\beta)
  = \stdnorm \sum_{x\in\F_2^n} (-1)^{\alpha\cdot f(x)}(-1)^{\beta\cdot f(x\xor w)},
\]
written $C_w(\alpha,\beta)$ when $f$ is clear from context.
\end{definition}

\begin{definition}[Masked auto-correlation distribution]\label{def:macd}
For a fixed $w$, the \emph{masked auto-correlation distribution} $\CD_w$ over
mask pairs $(\alpha,\beta)\in\F_2^n\times\F_2^n$ assigns
\[
  \Pr_{\CD_w}[\alpha,\beta] = \tfrac{1}{N^2}\,C^w_f(\alpha,\beta)^2,
\]
which is a valid distribution by Lemma~\ref{lem:parseval-mac}.
\end{definition}
\subsection{Few identities}

We begin with a Parseval-type identity for masked auto-correlation.

\begin{lemma}\label{lem:parseval-mac}
    For any $w$, $\displaystyle\sum_{\alpha, \beta} C^{w}_f(\alpha,\beta)^2 = 2^{2n} = N^2$,
    and hence $\displaystyle\sum_{w, \alpha, \beta} C^{w}_f(\alpha,\beta)^2 = 2^{3n} = N^3$.
\end{lemma}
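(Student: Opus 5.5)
Fix $w$ and expand the square directly. By Definition~\ref{def:mac},
\[
  C^w_f(\alpha,\beta)^2 = \frac{1}{N}\sum_{x,y\in\F_2^n} (-1)^{\alpha\cdot(f(x)\xor f(y))}\,(-1)^{\beta\cdot(f(x\xor w)\xor f(y\xor w))}.
\]
Next I sum over all $\alpha,\beta\in\F_2^n$ and swap the order of summation. The two mask sums then separate, and each is a character sum over $\F_2^n$. The standard orthogonality relation is $\sum_{\alpha}(-1)^{\alpha\cdot z}=N\,[z=0^n]$. Applying it twice gives
\[
  \sum_{\alpha,\beta} C^w_f(\alpha,\beta)^2 = \frac{1}{N}\sum_{x,y} N\,[f(x)=f(y)]\cdot N\,[f(x\xor w)=f(y\xor w)].
\]

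The only step that needs care is evaluating the indicators, and it is where the permutation hypothesis enters. Since $f$ is injective, $f(x)=f(y)$ forces $x=y$. In that case $f(x\xor w)=f(y\xor w)$ holds automatically, so the second indicator is redundant. The double sum over $(x,y)$ therefore counts exactly the $N$ diagonal pairs, and the total is $\frac{1}{N}\cdot N\cdot N\cdot N = N^2 = 2^{2n}$.

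This already uses only injectivity of $f$. A general $g$ in place of the second copy of $f$ would give the same value, because one injective factor suffices to collapse the sum to the diagonal.

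For the second claim, the first identity holds for every $w$, independent of $w$. Summing it over the $N$ choices of $w\in\F_2^n$ gives $N\cdot N^2=N^3=2^{3n}$. Finally, since each $C^w_f(\alpha,\beta)^2\ge 0$ and the total is $N^2$, $\CD_w$ in Definition~\ref{def:macd} is a genuine probability distribution.

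I do not expect a real obstacle. The one point to state explicitly is the collapse to the diagonal via injectivity, since it is exactly where the identity would fail for a non-bijective $f$.
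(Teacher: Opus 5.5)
Your proof is correct and follows essentially the same route as the paper: expand the square, swap the order of summation, and use character orthogonality together with injectivity of $f$ to collapse the double sum over $(x,y)$ to the $N$ diagonal terms. Your indicator formulation also makes explicit that only one injective factor is needed, which is slightly sharper than the paper's remark that both differences $f(x)\xor f(y)$ and $f(x\xor w)\xor f(y\xor w)$ are non-zero.
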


\begin{proof}
\begin{align*}
N\sum_{\alpha,\beta} C_f^{w}(\alpha, \beta)^2
  &= \sum_{\alpha,\beta}
     \Bigl(\sum_x (-1)^{\alpha \cdot f(x) \xor \beta \cdot f(x \xor w)}\Bigr)
     \Bigl(\sum_y (-1)^{\alpha \cdot f(y) \xor \beta \cdot f(y \xor w)}\Bigr) \\
  &= \sum_{x,y} \sum_{\alpha,\beta}
     (-1)^{\alpha \cdot (f(x)\xor f(y)) \,\xor\, \beta \cdot (f(x \xor w)\xor f(y\xor w))} \\
  &= \sum_{x=y} \sum_{\alpha,\beta}
     (-1)^{\alpha \cdot (f(x)\xor f(y)) \,\xor\, \beta \cdot (f(x \xor w)\xor f(y\xor w))} \\
  &\qquad
     + \sum_{x\neq y} \sum_{\alpha}(-1)^{\alpha \cdot (f(x)\xor f(y))}
       \sum_{\beta}(-1)^{\beta \cdot (f(x \xor w)\xor f(y\xor w))} \\
  &= N^3 + 0,
\end{align*}
since the $x=y$ terms each contribute $N^2$ over $N$ choices of $x$, while for
$x\neq y$ both $f(x)\xor f(y)$ and $f(x\xor w)\xor f(y\xor w)$ are non-zero strings
(as $f$ is a permutation), so each inner character sum over $\alpha$, resp.\
$\beta$, vanishes. Dividing by $N$ gives $N^2$.
\end{proof}

The next identity concerns zero-valued masks and follow from
$\sum_{x\in\F_2^n}(-1)^{a\cdot x}=0$ for any non-zero $a$.

\begin{lemma}\label{lem:zero-masks}
    For any $w$ and any non-zero $\alpha,\beta$,
    $C^w(0^n,0^n)=\sqrt{N}$, $\;C^w(0^n,\beta)=0$, and $C^w(\alpha,0^n)=0$.
\end{lemma}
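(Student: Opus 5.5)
The plan is to substitute each choice of masks directly into Definition~\ref{def:mac} and reduce everything to the orthogonality relation $\sum_{y\in\F_2^n}(-1)^{a\cdot y}=0$ for $a\neq 0^n$ (and $=N$ for $a=0^n$). The only structural input is that $f$ is a permutation of $\F_2^n$, and so is $x\mapsto f(x\xor w)$.

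For $(\alpha,\beta)=(0^n,0^n)$ every summand equals $1$. The sum is therefore $N$, and after normalisation $C^w(0^n,0^n)=\stdnorm\cdot N=\sqrt N$.

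For $(0^n,\beta)$ with $\beta\neq 0^n$, the first factor is identically $1$, leaving
\[
  C^w(0^n,\beta)=\stdnorm\sum_{x}(-1)^{\beta\cdot f(x\xor w)}.
\]
We reindex by $y=f(x\xor w)$. Because $x\mapsto x\xor w$ is a bijection and $f$ is a permutation, $y$ runs over all of $\F_2^n$ exactly once. The sum thus becomes $\sum_y(-1)^{\beta\cdot y}=0$.

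The case $(\alpha,0^n)$ is symmetric: we substitute $y=f(x)$, again a bijection, which gives $\sum_y(-1)^{\alpha\cdot y}=0$.

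There is no real obstacle here. The one point needing care is to state explicitly that the composed map $x\mapsto f(x\xor w)$ is a bijection, which justifies the change of variables. Without the permutation hypothesis the balancedness of $\beta\cdot f$ would fail.
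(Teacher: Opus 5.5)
Your proof is correct and takes essentially the same approach as the paper, which justifies the lemma only by citing $\sum_{x}(-1)^{a\cdot x}=0$ for $a\neq 0^n$. You also spell out the change of variables $y=f(x\oplus w)$ (resp.\ $y=f(x)$), which is exactly where the permutation hypothesis enters.
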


Combining Lemmas~\ref{lem:parseval-mac} and~\ref{lem:zero-masks} gives
$\sum_{\substack{\alpha \not= 0^n\\\beta \not= 0^n}} C_w(\alpha,\beta)^2 = N^2 - N$, so there must
exist $(\alpha,\beta)\neq 0^{2n}$ with $C_w(\alpha,\beta)^2 \ge \tfrac{N-1}{N}$.
A useful consequence for the sampling distribution $\CD_w$ is the following.

\begin{corollary}\label{cor:support}
    The $\CD_w$ distribution never samples a pair $(\alpha,\beta)$ in which exactly
    one of $\alpha,\beta$ is the all-zeroes string: its support consists of pairs
    where either both are $0^n$, which happens with probability $\tfrac{1}{N}$, or both are non-zero, which happens with probability $\tfrac{N-1}{N}$.
\end{corollary}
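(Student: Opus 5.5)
The plan is to read off Corollary~\ref{cor:support} directly from Definition~\ref{def:macd} together with Lemmas~\ref{lem:parseval-mac} and~\ref{lem:zero-masks}. Split $\F_2^n\times\F_2^n$ into three classes: the single pair $(0^n,0^n)$; the \emph{mixed} pairs, where exactly one mask is $0^n$; and the pairs with both masks non-zero. Then compute the $\CD_w$-mass of each class.

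\textbf{Mixed pairs.} For $(0^n,\beta)$ with $\beta\neq 0^n$, Lemma~\ref{lem:zero-masks} gives $C_w(0^n,\beta)=0$. This holds because $f$ is a permutation, so $x\mapsto f(x\xor w)$ is a bijection and $\sum_x(-1)^{\beta\cdot f(x\xor w)}=\sum_y(-1)^{\beta\cdot y}=0$. The symmetric argument gives $C_w(\alpha,0^n)=0$. Hence $\Pr_{\CD_w}[\alpha,\beta]=C_w(\alpha,\beta)^2/N^2=0$ for every mixed pair, so no mixed pair is in the support.

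\textbf{The zero pair.} Lemma~\ref{lem:zero-masks} gives $C_w(0^n,0^n)=\sqrt N$. So $\Pr_{\CD_w}[0^n,0^n]=N/N^2=1/N$.

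\textbf{Both masks non-zero.} By Lemma~\ref{lem:parseval-mac}, $\CD_w$ is a genuine probability distribution of total mass $1$. Subtracting the two masses above, the remaining class has mass
\[
1-\tfrac1N-0=\tfrac{N-1}{N}.
\]
Equivalently, $\sum_{\alpha\neq0^n,\beta\neq0^n}C_w(\alpha,\beta)^2=N^2-N$, which is the identity noted just before the corollary.

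There is no real obstacle here. The only points needing care are that the vanishing of the mixed terms uses that $f$ is a permutation (Lemma~\ref{lem:zero-masks}), and that the normalization $1/N^2$ in Definition~\ref{def:macd} matches the Parseval total $N^2$.
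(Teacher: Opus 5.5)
Your proposal is correct and takes the paper's route: the paper derives the corollary by combining Lemma~\ref{lem:parseval-mac} with Lemma~\ref{lem:zero-masks} to get $\sum_{\alpha\neq 0^n,\beta\neq 0^n} C_w(\alpha,\beta)^2 = N^2-N$, which is your three-class mass computation. Your explicit note that the mixed terms vanish because $f$ is a permutation is a detail the paper leaves implicit.
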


\subsection{Distribution of MAC for a random permutation}

We now describe how $C_w(\alpha,\beta)$ behaves when $f=B$ is a random
permutation. The arguments are the standard central-limit heuristic, treating the
relevant masked outputs as sums of many near-independent $\pm1$ variables; we take
$N\to\infty$ so the limit applies. There are four cases, organised by whether
$w=0^n$ and whether $\alpha=\beta$.

\paragraph{Case $w=0^n$.}
If $\alpha=\beta$ then $C_w(\alpha,\beta)=\sqrt{N}$ identically. If
$\alpha\neq\beta$ then
\[
  C_w(\alpha,\beta) = \tfrac{1}{\sqrt{N}}\sum_x (-1)^{(\alpha\xor\beta)\cdot B(x)},
\]
and since $\alpha\xor\beta\neq 0^n$, the bit $(\alpha\xor\beta)\cdot B(x)$ behaves
as a uniform random Boolean function of $x$. The sum is therefore $N$
near-independent $\pm1$ terms, distributed as $\N(0,N)$, so after the $1/\sqrt N$
normalization $C_{0^n}(\alpha,\beta)\sim\N(0,1)$.

\begin{lemma}\label{lem:w0}
    For a random permutation $B$: $C_{0^n}(\alpha,\alpha)=\sqrt{N}$ always, and
    $C_{0^n}(\alpha,\beta) = 0,$ for distinct $\alpha,\beta$.
\end{lemma}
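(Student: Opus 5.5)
The plan is to prove both claims exactly, for every permutation $B$ and not only in distribution. Substituting $w=0^n$ into Definition~\ref{def:mac} puts both characters at the same point $x$, so the two exponents merge:
\[
  C_{0^n}(\alpha,\beta)=\stdnorm\sum_{x}(-1)^{\alpha\cdot B(x)\,\xor\,\beta\cdot B(x)}
  =\stdnorm\sum_{x}(-1)^{(\alpha\xor\beta)\cdot B(x)} .
\]
This uses only bilinearity of the inner product over $\F_2$.

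For $\alpha=\beta$ the mask $\alpha\xor\beta$ is $0^n$. Every summand is then $1$, the sum equals $N$, and after normalization $C_{0^n}(\alpha,\alpha)=N/\sqrt N=\sqrt N$. This holds identically, with no randomness involved.

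For $\alpha\neq\beta$ the key step is that $B$ is a bijection of $\F_2^n$. We reindex by $y=B(x)$, so the sum becomes $\sum_{y\in\F_2^n}(-1)^{(\alpha\xor\beta)\cdot y}$. Since $\alpha\xor\beta\neq 0^n$, this character sum vanishes. This is the same orthogonality fact that underlies Lemma~\ref{lem:zero-masks}, which we may in fact invoke directly as $C^{0^n}(\alpha\xor\beta,0^n)=0$ for the permutation $B$. Hence $C_{0^n}(\alpha,\beta)=0$ exactly.

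I expect no real obstacle; the only point needing care is the reindexing by bijectivity. It shows that the central-limit heuristic sketched before the lemma is superseded: the sum is exactly zero, rather than merely $\N(0,1)$-distributed, because the masked output of a permutation is perfectly balanced. As a consistency check, Lemma~\ref{lem:parseval-mac} gives $\sum_{\alpha,\beta}C_{0^n}(\alpha,\beta)^2=N\cdot(\sqrt N)^2=N^2$, which matches the diagonal contribution alone and so confirms that all off-diagonal terms must vanish.
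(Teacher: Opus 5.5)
Your proof is correct, and it takes a different route from the paper. The paper supports the lemma only through the central-limit heuristic in the paragraph preceding it. For $\alpha\neq\beta$ that heuristic treats $(\alpha\xor\beta)\cdot B(x)$ as a uniformly random Boolean function of $x$, that is, as $N$ near-independent $\pm1$ terms. It concludes $C_{0^n}(\alpha,\beta)\sim\N(0,1)$, and this is also the off-diagonal entry in the $w=0^n$ column of Table~\ref{tab:C_w_a_b}. That reasoning in effect models $B$ as a random function and ignores bijectivity. It therefore cannot deliver the exact value $0$ asserted in the lemma; at most it gives mean $0$, and it contradicts the statement it is meant to justify.

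Your reindexing $y=B(x)$ uses bijectivity directly. It shows that every nonzero-masked output of a permutation is perfectly balanced, so the sum vanishes identically for \emph{every} permutation $B$, with no randomness, independence, or limiting argument. Your Parseval check is in fact a second complete proof. Lemma~\ref{lem:parseval-mac} holds verbatim at $w=0^n$, and the $N$ diagonal terms already contribute $N\cdot(\sqrt N)^2=N^2$, which forces every off-diagonal square to be $0$.

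Your argument is the one that actually establishes the statement as written. It also shows that the $\N(0,1)$ entry in the $w=0^n$ column of the table should read $0$, and that the phrase ``for a random permutation'' in the lemma is unnecessary.
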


\paragraph{Case $w\neq 0^n$.}
The trivial subcase $\alpha=\beta=0^n$ again gives $C_w(0^n,0^n)=\sqrt{N}$. 

The two
non-trivial subcases differ in their variance. If $\alpha=\beta\neq 0^n$,
\[
  C_w(\alpha,\alpha) = \tfrac{1}{\sqrt{N}}\sum_x (-1)^{\alpha\cdot(B(x)\xor B(x\xor w))},
\]
and $B(x)\xor B(x\xor w)$ is a non-zero uniformly random string for each $x$. Pairing each
$x$ with $x\xor w$ leaves $N/2$ $\pm 1$ terms that can be assumed to be uniformly random and independent as $N \to \infty$, so $C_w(\alpha, \alpha)$ can be approximated as $\tfrac{2}{\sqrt{N}}$ times $\N(0,N/2)$; after the normalization, we get 
$C_w(\alpha,\alpha)\sim\N(0,2)$. 

If instead $\alpha\neq\beta$,
\[
  C_w(\alpha,\beta) = \tfrac{1}{\sqrt{N}}\sum_x (-1)^{\alpha\cdot B(x)\,\xor\,\beta\cdot B(x\xor w)},
\]
and for large $n$ the bits $\alpha\cdot B(x)$ and $\beta\cdot B(x\xor w)$ act as
independent uniform Boolean functions, giving $N$ near-independent $\pm1$ terms and
$C_w(\alpha,\beta)\sim\N(0,1)$. The two subcases are summarised as follows.

\begin{lemma}\label{lem:cd_w_normal_dist_case2}
    Let $w\neq 0^n$. Then $C_w(0^n,0^n)=\sqrt{N}$, $\;C_w(\alpha,\alpha)\sim\N(0,2)$,
    and $C_w(\alpha,\beta)\sim\N(0,1)$ for $\alpha\neq\beta$.
\end{lemma}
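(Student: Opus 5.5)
We treat the statement in the same heuristic regime as the preceding discussion: $B$ is a uniformly random permutation, $n\to\infty$, and the normal limits are read off from the first two moments plus a central-limit argument for weakly dependent $\pm1$ summands. The three claims are handled separately. The identity $C_w(0^n,0^n)=\sqrt N$ is exact and is already recorded in Lemma~\ref{lem:zero-masks}: every summand equals $1$, so the sum is $N$ and the normalization gives $\sqrt N$. For the two non-trivial subcases we will (i) compute the exact mean and variance of $\sum_x(-1)^{\cdot}$ over the randomness of $B$, showing they are $0+o(\cdot)$ and $2N$ (respectively $N$) up to lower-order terms, and (ii) invoke a CLT for the resulting sum.

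For $\alpha=\beta\neq0^n$, we first use $B(x)\xor B(x\xor w)=B(x\xor w)\xor B(x)$ to group the summands into the $N/2$ unordered pairs $\{x,x\xor w\}$. This gives $C_w(\alpha,\alpha)=\tfrac{2}{\sqrt N}\sum_{\{x,x\xor w\}}Y_x$, where $Y_x=(-1)^{\alpha\cdot(B(x)\xor B(x\xor w))}$. Each difference $D_x=B(x)\xor B(x\xor w)$ is uniform on $\F_2^n\setminus\{0^n\}$. Hence $\E[Y_x]=-1/(N-1)$, since exactly $N/2-1$ nonzero vectors satisfy $\alpha\cdot D=0$ and $N/2$ satisfy $\alpha\cdot D=1$. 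For two distinct pairs, the four values of $B$ involved are a uniformly random injective $4$-tuple, so $\operatorname{Cov}(Y_x,Y_{x'})=O(1/N)$. Summing, the variance of $\sum Y_x$ is $N/2+O(N)\cdot$(small). We will need to check that the $\binom{N/2}{2}\cdot O(1/N)$ covariance total is in fact $O(1)$ after the leading terms cancel. This is the delicate step: the naive bound gives $O(N)$, comparable to the main term. We intend to compute the covariance exactly by counting injective tuples, and we expect it to be $O(1/N^2)$ plus a term that cancels against the squared mean. After normalization this gives mean $o(1)$ and variance $\tfrac{4}{N}\cdot\tfrac N2(1+o(1))=2$.

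For $\alpha\neq\beta$, both nonzero, we define $Z_x=(-1)^{\alpha\cdot B(x)\xor\beta\cdot B(x\xor w)}$. Now $x$ and $x\xor w$ do not give equal terms, so all $N$ summands are kept. The pair $(B(x),B(x\xor w))$ is a uniform pair of distinct elements. A direct count gives $\E[Z_x]=-\tfrac{1}{N-1}[\alpha=\beta]=0$, since $\alpha\neq\beta$. The summands $Z_x$ and $Z_{x\xor w}$ share the same two $B$-values, so they are correlated, with $\E[Z_xZ_{x\xor w}]=\E[(-1)^{(\alpha\xor\beta)\cdot(B(x)\xor B(x\xor w))}]=-1/(N-1)$. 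This contributes only $O(1)$ to the variance of the sum. Other pairs again involve injective $4$-tuples with covariance $O(1/N^2)$ by the same counting. The variance is therefore $N(1+o(1))$, and $C_w(\alpha,\beta)$ has variance $\to1$.

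Finally, to pass from moments to normality we plan to use the method of moments, or a dependency-graph/Stein bound for sampling without replacement. The dependence structure is that of a random injection, which is known to behave like independent uniform sampling up to $O(k^2/N)$ corrections for $k$-wise moments. Matching the paper's stated level of rigor, we will present this last step as the standard central-limit heuristic already adopted for Lemma~\ref{lem:w0}, with the exact first and second moment computations above as the substantiated part. The main obstacle is controlling the aggregate of the $\Theta(N^2)$ small covariances in the $\alpha=\beta$ case, where sign cancellation must be exhibited explicitly.
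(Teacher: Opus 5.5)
Your proposal is correct and follows the paper's own argument: pairing $x$ with $x\oplus w$ to get $N/2$ terms (so variance $\tfrac4N\cdot\tfrac N2=2$) when $\alpha=\beta\neq0^n$, treating the $N$ summands as near-independent when $\alpha\neq\beta$ are both nonzero, and appealing to the same central-limit heuristic for normality, except that you compute the first two moments exactly where the paper simply assumes independence. The covariance step you flag as delicate does close: for distinct orbits $\{x,x\oplus w\}\neq\{x',x'\oplus w\}$ one gets $\E[Y_xY_{x'}]=\tfrac{3}{(N-1)(N-3)}$, hence $\operatorname{Cov}(Y_x,Y_{x'})=\tfrac{2N}{(N-1)^2(N-3)}=O(N^{-2})$ and $\Var\bigl[\sum_{\{x,x\oplus w\}}Y_x\bigr]=\tfrac N2+\tfrac12+O(N^{-1})$, giving $\Var[C_w(\alpha,\alpha)]=2+O(N^{-1})$.
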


The above results are summarised in Table~\ref{tab:C_w_a_b}. 
\begin{table}[h]
\setlength{\arrayrulewidth}{0.5mm}
\setlength{\tabcolsep}{18pt}
\renewcommand{\arraystretch}{1.5}
    \centering
    \begin{tabular}{ccc}
                            & $w=0^n$ & $w \not= 0^n$ \\ [1ex]
         \hline \hline
    $\alpha = \beta = 0^n$  & $\sqrt{N}$ & $\sqrt{N}$ \\
    $\alpha \not= \beta = 0^n$ & 0 & 0 \\
    $0^n = \alpha \not= \beta$ & 0 & 0 \\ \hline
    $\alpha = \beta \not= 0^n$ & $\sqrt{N}$ & $\N(0,2)$ \\
    $0^n \not= \alpha \not= \beta \not= 0^n$ & $\N(0,1)$ & $\N(0,1)$ \\
    \end{tabular}
    \caption{Values of $C_w(\alpha,\beta)$. The last two rows  correspond to a random permutation.}
    \label{tab:C_w_a_b}
\end{table}

Next we present a few similar facts about $C_w(\alpha, \beta)^2$ which will be used in the hardness analysis in Theorem~\ref{thm:MAC_lb} (see
Appendix~\ref{sec:proof-MACfishing-lb}).

\begin{restatable}{lemma}{lemsimple}
\label{moments}
    Let $w, \alpha, \beta \neq 0^n$. Then:
    \begin{enumerate}
        \item For $\alpha \neq \beta$, $C_w(\alpha,\beta) \sim \N(0,1)$,
        $\mathbb{E}_B[C_w(\alpha,\beta)^2] = 1$, and $\Var_B[C_w(\alpha,\beta)^2]=2$.
        \item For $\alpha = \beta$, $C_w(\alpha,\beta) \sim \N(0,2)$,
        $\mathbb{E}_B[C_w(\alpha,\beta)^2] = 2$, and $\Var_B[C_w(\alpha,\beta)^2] = 8$.
        \item $\mathbb{E}_{\alpha,\beta}\mathbb{E}_B[C_w(\alpha,\beta)^2] = 1 + \tfrac{1}{N}$
        and
        $\Var_{\alpha,\beta,B}[C_w(\alpha,\beta)^2] = 2 + \tfrac{7}{N} - \tfrac{1}{N^2}$.
    \end{enumerate}
\end{restatable}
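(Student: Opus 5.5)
The plan is to read parts 1 and 2 directly off the Gaussian limits established earlier, and then obtain part 3 by mixing the two cases.

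\textbf{Parts 1 and 2.} By Lemma~\ref{lem:cd_w_normal_dist_case2}, in the regime $N\to\infty$ (under the same central-limit heuristic used there) we have $C_w(\alpha,\beta)\sim\N(0,1)$ for $\alpha\neq\beta$ and $C_w(\alpha,\alpha)\sim\N(0,2)$, since $w,\alpha,\beta\neq 0^n$. So the only ingredient needed is the moments of a centred Gaussian. If $Z\sim\N(0,\sigma^2)$, then
\[
\E[Z^2]=\sigma^2, \qquad \E[Z^4]=3\sigma^4, \qquad \Var[Z^2]=2\sigma^4 .
\]
Taking $\sigma^2=1$ gives mean $1$ and variance $2$. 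Taking $\sigma^2=2$ gives mean $2$ and variance $8$. This settles items 1 and 2. One could cross-check the first moment exactly rather than heuristically: expanding $C_w(\alpha,\beta)^2$ as a double sum over $x,y$ and averaging over $B$ should reproduce $1$ and $2$ up to $O(1/N)$ corrections. I would only mention this, since the lemma is stated in the same asymptotic sense as Lemma~\ref{lem:cd_w_normal_dist_case2}.

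\textbf{Part 3.} Here $(\alpha,\beta)$ is drawn uniformly as a pair, so the event $\alpha=\beta$ has probability $1/N$ and $\alpha\neq\beta$ has probability $1-1/N$. The boundary effect of excluding $0^n$ is absorbed into the same asymptotic convention. Conditioning on these two events gives the mean:
\[
\E_{\alpha,\beta}\E_B[C_w^2]=\tfrac1N\cdot 2+\bigl(1-\tfrac1N\bigr)\cdot 1 = 1+\tfrac1N .
\]
For the variance I would compute the fourth moment first, which is cleaner than using the law of total variance:
\[
\E[C_w^4] = \tfrac1N(8+4)+\bigl(1-\tfrac1N\bigr)(2+1) = 3+\tfrac9N ,
\]
using $\E[C^4]=\Var[C^2]+(\E C^2)^2$ in each case. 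Subtracting $(1+1/N)^2$ then gives
\[
\Var_{\alpha,\beta,B}[C_w^2] = 2+\tfrac7N-\tfrac1{N^2}.
\]
As a sanity check, the law of total variance gives the same result: the within-case part is $8/N+2(1-1/N)$, and the between-case part is $\tfrac1N(1-\tfrac1N)(2-1)^2$.

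\textbf{Main obstacle.} The computations themselves are routine. The delicate point is justifying the sampling model in part 3. If one insisted on $\alpha,\beta$ uniform over nonzero masks, the diagonal weight would be $1/(N-1)$ rather than $1/N$, giving $1+1/(N-1)$ instead of $1+1/N$. I would therefore state explicitly that $\alpha=\beta$ occurs with probability $1/N$ (uniform pairs, with zero masks negligible as $N\to\infty$), which matches the form of the claimed constants.
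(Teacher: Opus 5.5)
Your proposal is correct and follows the paper's proof essentially verbatim. Items 1 and 2 come from the mean $1$ and variance $2$ of $\chi^2(1)$ after rescaling by $\sigma^2$; item 3 weights the diagonal case by $1/N$, computes $\mathbb{E}[C_w^4]=3+9/N$, and subtracts $(1+1/N)^2$.
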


The proof essentially uses properties of normally distributed variables and is available in Appendix~\ref{app:lem1}.

Even though we considered $w=0^n$ in this section, we will {\em always} choose a non-zero shift $w$ for our algorithms and experiments.

%% file: final/mac_fishing.tex
\section{Masked Auto-Correlation (MAC) Fishing}\label{sec:macfishing}

Masked auto-correlation gives rise to a natural search problem: among the
exponentially many mask pairs $(\alpha,\beta)$, find one whose
$|C_w(\alpha,\beta)|$ is large. We call this \emph{MAC Fishing}. As we will see
in Section~\ref{sec:applications}, such heavy mask pairs are exactly what our
distinguishers use; but the problem is of independent interest, and we study
it here on its own terms. Throughout this section $B$ denotes the block cipher,
and we collect the heavy pairs into the set $V_w^\tau$ defined below, where
$\tau$ is an appropriately chosen threshold.

\begin{definition} \label{def:heavy_pairs}
    Define $V_w^\tau = \{ (\alpha, \beta) \in \F_2^n \times \F_2^n ~:~ C_w(\alpha, \beta)^2 \ge \tau \}$.
\end{definition}

Mask pairs in $V^\tau_w$, for a non-negligible $\tau$, may be useful towards distinguishing a cipher from a random permutation. The following lemma analyzes the number of mask pairs in $V^\tau_w$ for a random permutation.

\begin{restatable}{lemma}{vtaulemma}[Distribution of $|V_w^\tau|$]
\label{lem:vtaulemma}
Let $w\ne0^n$ and $B$ be a n-bit random permutation. For a fixed threshold $\tau>0,$ each pair of masks $(\alpha, \beta)$ belongs to the set $V^\tau_w$ with independent probability $p_{\alpha,\beta}$ given by
\begin{equation*}
    p_{\alpha,\beta}= 
    \begin{cases}
        2\Phi(-\sqrt{\tau}) 
        & \text{if } \alpha \neq \beta,\\
        2\Phi\!\left(-\sqrt{\tau/2}\right) 
        & \text{if } \alpha = \beta \neq 0^n.
    \end{cases}
\end{equation*}
    The expectation and variance of $|V_w^\tau|$ is given as
    \begin{align*}
        \mathbb{E}_B \left[|V_w^\tau|\right] &= N^2 \cdot p_1 + N \cdot (p_2-p_1), \\
         \text{Var}_B \left[|V_w^\tau|\right] &= (N^2-N) \cdot p_1(1-p_1) + N \cdot p_2(1-p_2)
    \end{align*}
    where, $p_1= 2\Phi(-\sqrt\tau)$ and $p_2= 2\Phi\left(-\sqrt{\frac{\tau}{2}}\right).$
\end{restatable}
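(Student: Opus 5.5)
The plan is to write $|V_w^\tau|$ as a sum of indicator variables $I_{\alpha,\beta}=\mathbf{1}[C_w(\alpha,\beta)^2\ge\tau]$ over all mask pairs. The first step is to compute each success probability $p_{\alpha,\beta}=\Pr_B[I_{\alpha,\beta}=1]$ from the distributional facts already established. By Lemma~\ref{lem:cd_w_normal_dist_case2} (equivalently items 1--2 of Lemma~\ref{moments}), $w\neq 0^n$ gives $C_w(\alpha,\beta)\sim\N(0,1)$ when $\alpha\neq\beta$, and $C_w(\alpha,\alpha)\sim\N(0,2)$ when $\alpha\neq 0^n$.

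For a standard normal $Z$ we have $\Pr[Z^2\ge\tau]=\Pr[|Z|\ge\sqrt\tau]=2\Phi(-\sqrt\tau)$ by symmetry, which gives $p_1$. For the diagonal case I will write $C_w(\alpha,\alpha)=\sqrt2\,Z$. Then $\Pr[2Z^2\ge\tau]=2\Phi(-\sqrt{\tau/2})$, which gives $p_2$.

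Independence of the indicators is not something I can prove exactly. I will adopt it as the same modelling assumption used throughout Section~\ref{sec:mac}: the masked outputs are treated as near-independent uniform bits, and the correlations for distinct mask pairs as asymptotically independent Gaussians as $N\to\infty$. To justify this at the level of the paper's heuristics, I would check two things.

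\begin{enumerate}
\item The covariance $\E_B[C_w(\alpha,\beta)C_w(\alpha',\beta')]$ vanishes for distinct pairs. Expanding as in the proof of Lemma~\ref{lem:parseval-mac}, the surviving terms require coincidences among $\alpha\cdot B(x)$, $\beta\cdot B(x\xor w)$, $\alpha'\cdot B(y)$ and $\beta'\cdot B(y\xor w)$ that occur only with probability $O(1/N)$.
\item The only global constraint, Parseval's $\sum C_w^2=N^2$, induces per-pair dependencies of order $O(1/N^2)$, which are negligible.
\end{enumerate}

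Given this, the statement follows by linearity. For the expectation, there are $N^2-N$ off-diagonal pairs, each contributing $p_1$, and $N$ diagonal pairs, each contributing $p_2$. This gives $\E_B|V_w^\tau|=(N^2-N)p_1+Np_2=N^2p_1+N(p_2-p_1)$. For the variance, independence removes all covariance terms, leaving the sum of Bernoulli variances $(N^2-N)p_1(1-p_1)+Np_2(1-p_2)$.

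The main obstacle is the bookkeeping for the degenerate pairs, which are deterministic rather than Gaussian: $(0^n,0^n)$ has $C_w=\sqrt N$, and the $2(N-1)$ pairs with exactly one zero mask have $C_w=0$ by Lemma~\ref{lem:zero-masks}. Strictly, these should be removed from the binomial count, which would shift the expectation and variance by $O(Np_1)$ and $O(1)$ respectively. I will state explicitly that the displayed formulas apply the generic probabilities uniformly to all $N^2-N$ off-diagonal and $N$ diagonal pairs. The resulting discrepancy is lower order relative to the $N^2p_1$ main term, so the formulas are accurate to leading order, which is how they are used later.
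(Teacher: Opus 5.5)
Your derivation is the paper's own: indicator decomposition, the Gaussian tails of Lemma~\ref{lem:cd_w_normal_dist_case2} for $p_1,p_2$, linearity for the mean, and a sum of Bernoulli variances. The paper simply asserts that the $C_w(\alpha,\beta)$ are independent and silently includes the zero-mask pairs, so flagging both points is an improvement. One small slip in the zero-mask bookkeeping: dropping the $2(N-1)$ one-zero pairs changes the variance by about $2(N-1)p_1(1-p_1)$, not $O(1)$. That is still lower order.

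The real problem is your heuristic defence of independence in item~2, and the claim that the formulas are therefore leading-order accurate.

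First, Parseval is not the only constraint. For every fixed $\alpha$, summing $C_w(\alpha,\beta)^2$ over $\beta$ forces $B(x\oplus w)=B(y\oplus w)$, i.e.\ $x=y$. Hence $\sum_\beta C_w(\alpha,\beta)^2=N$ exactly, and likewise for every column.

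Second, dependencies of size about $1/N^2$ per pair are not negligible for a variance. There are $\Theta(N^4)$ pairs, so their covariances add up to $\Theta(N^2)$, the same order as $(N^2-N)p_1(1-p_1)$. You can see this already for the squares. Independence predicts $\operatorname{Var}_B[\sum_{\alpha,\beta\neq 0^n}C_w(\alpha,\beta)^2]\approx 2N^2$, yet this sum is identically $N^2-N$.

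The same mechanism affects the indicators. For $Z\sim\N(0,1)$ one has $\operatorname{Cov}(\mathbf{1}[Z^2\ge\tau],Z^2)=\mathrm{Succ}_Q(\tau)-p_1>0$ for every $\tau>0$. So pinning $\sum C_w^2$ removes a $\Theta(N^2)$ part of the binomial variance. A Gaussian regression heuristic gives roughly $N^2[p_1(1-p_1)-(\mathrm{Succ}_Q(\tau)-p_1)^2/2]$, which is under half the binomial value at $\tau=1$.

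Item~1 does not rescue this. The $C_w$ behave like coordinates of a uniform point on a sphere: uncorrelated yet dependent.

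So keep the expectation formula, which only uses the marginals. Present the variance formula as a consequence of the independence hypothesis that the statement itself posits, as the paper does, not as a leading-order approximation for a genuine random permutation.
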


The proof of the lemma is presented in Appendix~\ref{sec:vtaulemmasection}; 
the lemma claims the existence of a lot of good mask pairs for a random permutation. 
However, it is not so obvious that there would be any at all for a cipher given the rapid diffusion they exhibit. We show an evidence to the contrary in Figure~\ref{fig:v_tau}. Observe that lots of good mask pairs exist for a large number of keys in mini-AES irrespective of $w$; e.g., the left chart shows that at least $4\%$ pairs are good for nearly 75\% of the sampled keys. This provides a hope of success to our multidimensional approximation attack that requires many good mask pairs.


\begin{figure}[h]
    \centering
    \includegraphics[width=0.45\linewidth]{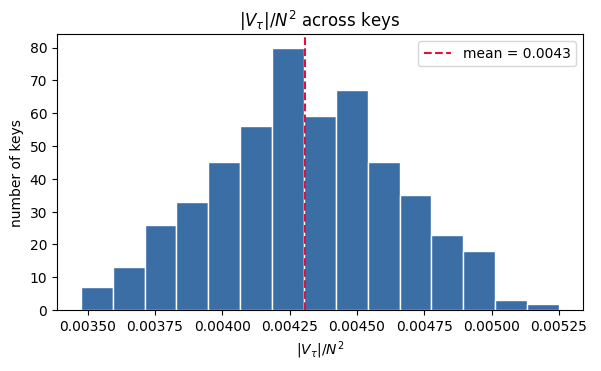} \hfill \includegraphics[width=0.45\linewidth]{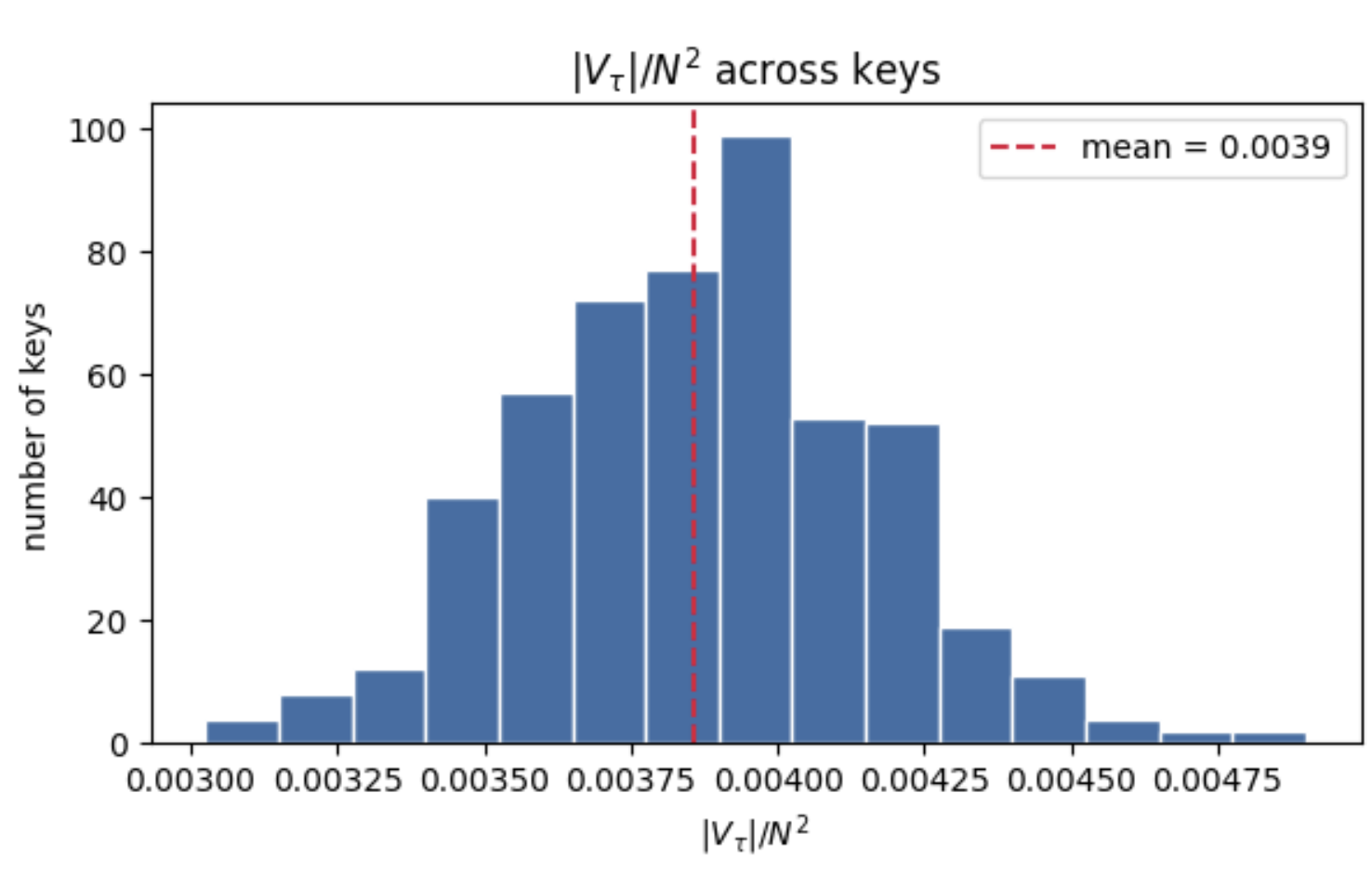}\\
    \includegraphics[width=0.45\linewidth]{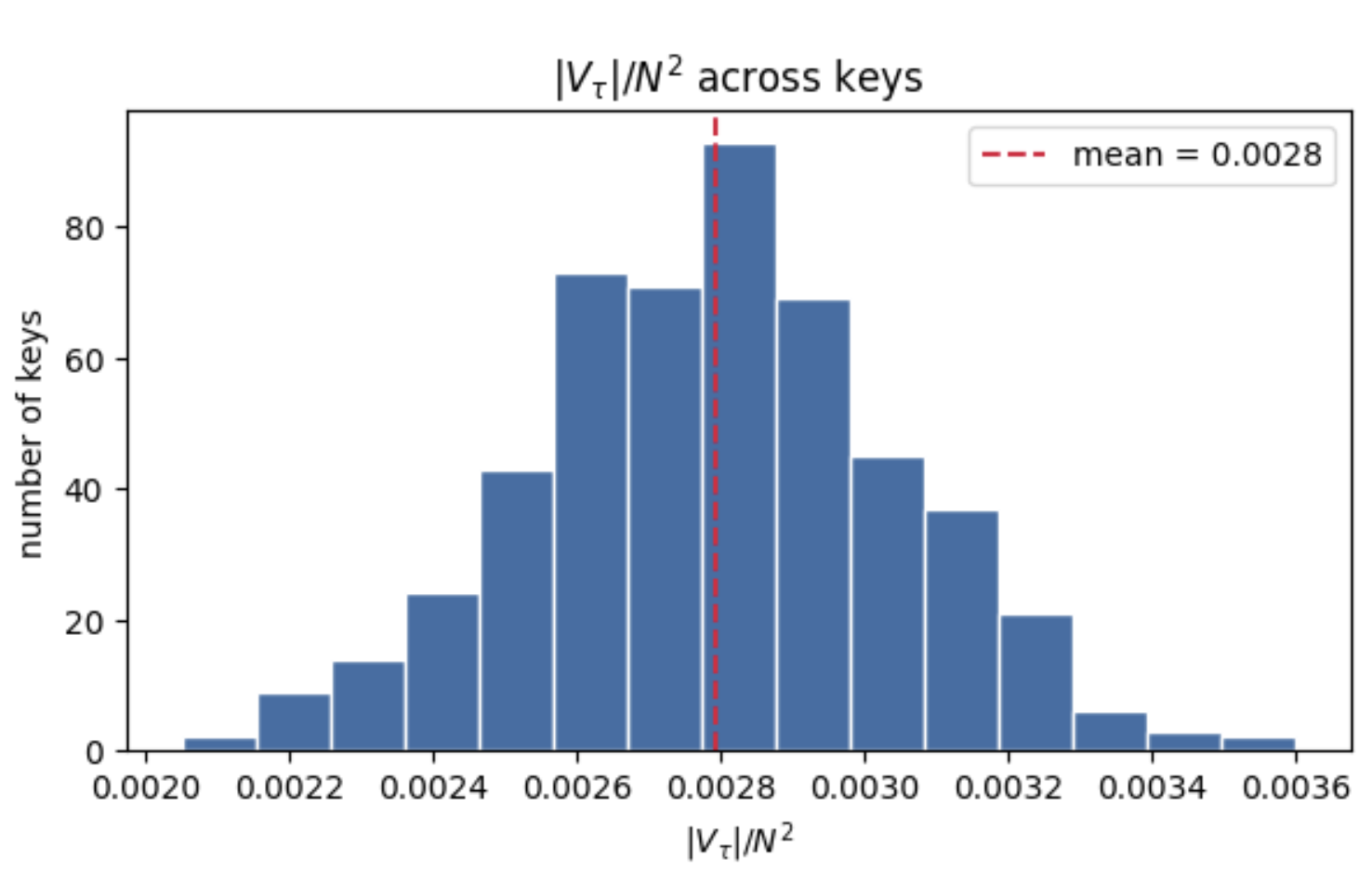} \hfill
    \includegraphics[width=0.45\linewidth]
    {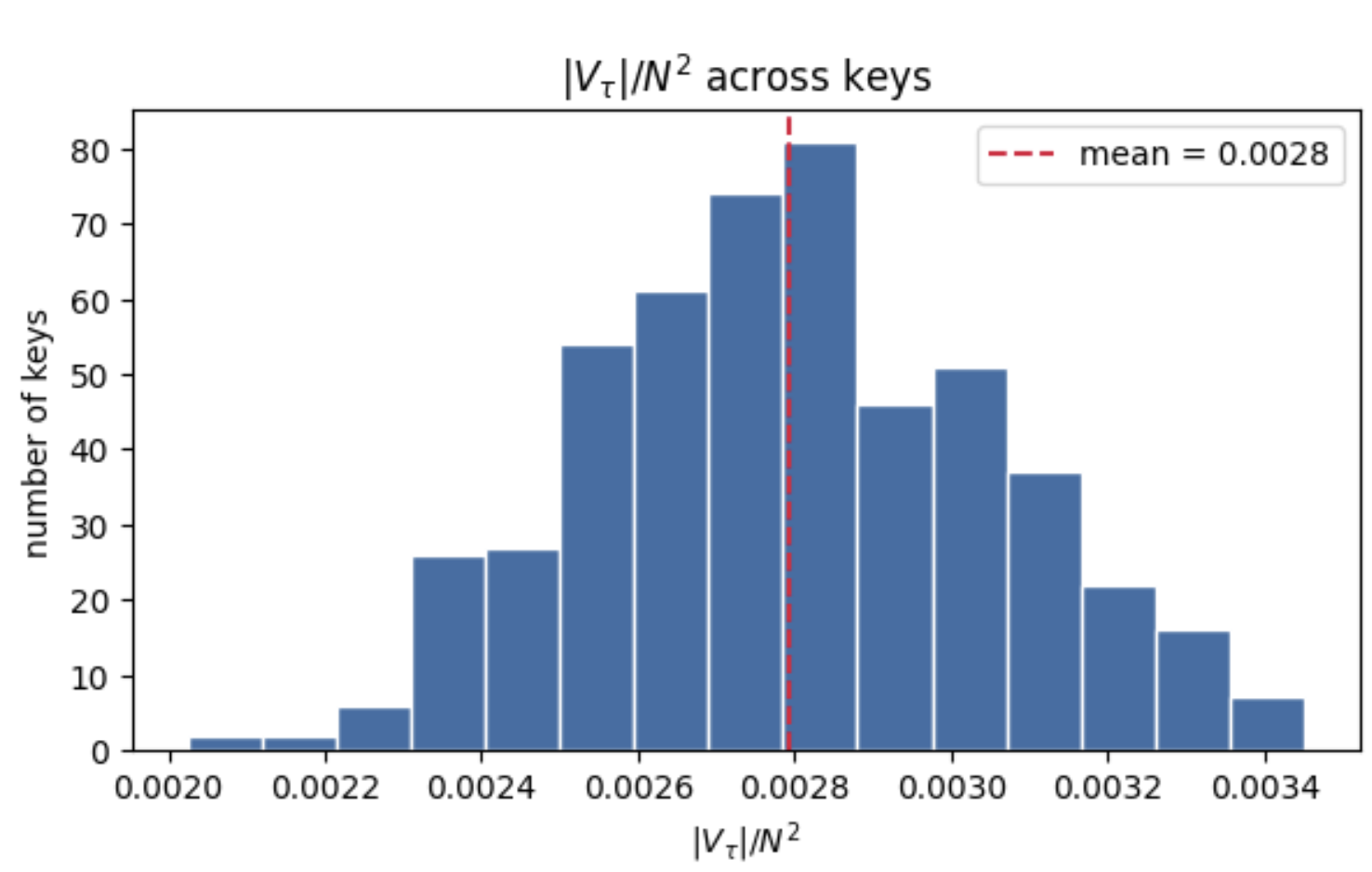}
    \caption{Distribution of the number of keys against the number of good mask pairs in 16-bit mini-AES, averaged over all keys. A total of 512 keys were randomly selected, sufficient number of mask pairs were sampled and the proportion of good pairs estimated to $0.95$ confidence (using Wilson score). $\tau$ was set to $9/2^{16}$. The values of $w$ that were chosen were 0x0003 (top-left), 0x3009 (top-right), 0x0845 (bottom-left), 0x2504 (bottom-right).}
    \label{fig:v_tau}
\end{figure}

We also show that MAC values have a wide-range of variations, even when considered across keys. E.g., the chart in Figure~\ref{fig:mini_AES_3round_example_masks} shows different $(\alpha, \beta, w)$ triplets with high, medium, and low values of MAC. This allows us to create $V^\tau_w$-based distinguishers with different levels of $\tau$.

\begin{figure}[h]
    \centering
    \includegraphics[width=0.8\linewidth]{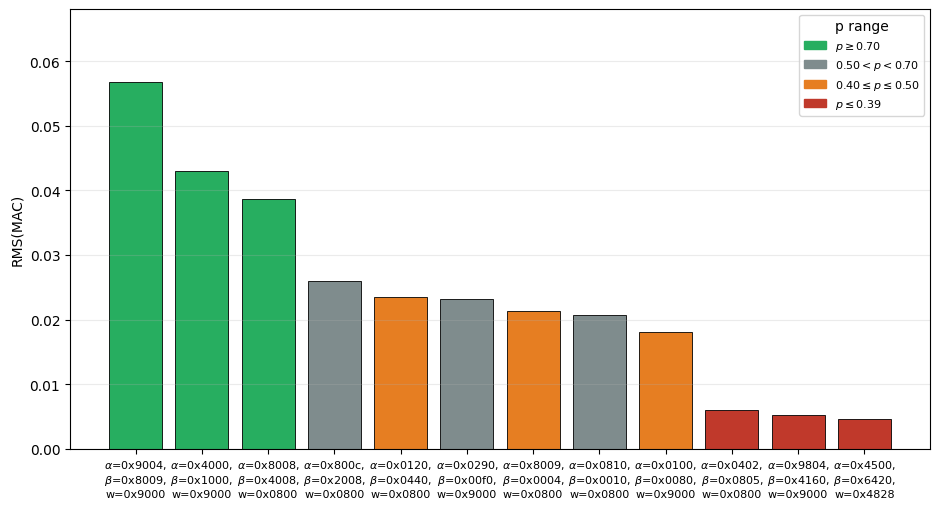}
    \caption{Variation in the values of MAC of sample $(\alpha,\beta,w)$ triplets across 50 random keys for 16-bit mini-AES. The root-mean-square (RMS) of the MAC values, across all keys, and with 95\% confidence (Wilson's score), are shown for all triplets.}
    \label{fig:mini_AES_3round_example_masks}
\end{figure}

Now we move to one of the central tasks towards constructing a MAC based distinguisher --- identifying good mask pairs.

\begin{definition}[Masked Auto-Correlation Fishing (MAC Fishing)]
    Given a constant $\tau$ and non-zero $w$, the task of the MAC Fishing problem is to output any $(\alpha,\beta) \in V^\tau_w$.
\end{definition}

We will be dealing with a promised version of the above problem --- $V^\tau_w$ is not empty.
Following the terminology of Aaronson and Chen~\cite{aaronson-chen} where they showed a lower bound on a similarly defined Fourier Fishing problem (on Walsh-Hadamard-Fourier coefficients), we define
$$\adv(w,\tau) = \tfrac{1}{N^2} \sum_{ (\alpha,\beta) \in V_w^\tau} C_w(\alpha, \beta)^2.$$

In Section~\ref{sec:MACfishing-qalgo} we present our constant-query Algorithm~\ref{algo:cd_w} that can solve the MAC Fishing problem with probability $\adv(w,\tau)$.

On the other hand, we were able to prove that the same task is computationally hard for a (randomized) classical algorithm, even when allowed exponentially more queries. 

\begin{theorem}\label{thm:MAC_lb}
Even given the promise that $\adv(w,1) \ge 0.801 - \tfrac{1}{n}$, any $o(N/\log N)$ classical randomized algorithm for MAC Fishing succeeds with probability at most $0.317 + o(1)$.
\end{theorem}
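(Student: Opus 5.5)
We follow the Aaronson--Chen template for Fourier Fishing and work with a random permutation $B$. The first step is to show that the promise holds with high probability over $B$. Here $\adv(w,1)$ equals $\tfrac{1}{N^2}$ times the total mass of $C_w(\alpha,\beta)^2$ on pairs whose squared value is at least $1$.

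By Lemma~\ref{moments}, almost all pairs (those with $\alpha\neq\beta$, both nonzero) have $C_w(\alpha,\beta)\sim\N(0,1)$. For these, $\E[C^2\mathbf{1}\{C^2\ge 1\}] = 2\int_1^\infty x^2\phi(x)\,dx = 2(\phi(1)+\Phi(-1))\approx 0.801$. The diagonal $\alpha=\beta$, the trivial pair $(0^n,0^n)$, and the zero-mask rows contribute only $O(N)$ pairs. Their effect on the normalized sum is $O(1/N)$, apart from the $(0,0)$ term, which adds $\tfrac{1}{N}$. Hence $\E_B[\adv(w,1)]\ge 0.801-O(1/N)$.

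For concentration, we would use Chebyshev in the spirit of Lemma~\ref{lem:vtaulemma}, treating the pairs as near-independent (the same heuristic the paper already adopts). The variance of the truncated sum is $O(N^2)$. The normalized statistic therefore has variance $O(1/N^2)$, and $\adv(w,1)\ge 0.801-\tfrac1n$ except with probability $o(1)$. So the promise costs only an $o(1)$ additive term when we condition on it.

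The second step is the classical bound. Fix a deterministic algorithm making $T=o(N/\log N)$ queries (Yao's principle lets us average over $B$). After it queries $x_1,\dots,x_T$ and sees $B(x_i)$, it outputs some $(\alpha,\beta)$.

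Write $C_w(\alpha,\beta)=\tfrac{1}{\sqrt N}(S_{\mathrm{known}}+S_{\mathrm{unk}})$. The known part involves at most $2T$ terms, namely those $x$ for which both $B(x)$ and $B(x\oplus w)$ are revealed. A query to $x$ without one to $x\oplus w$ reveals nothing about the product term. So $|S_{\mathrm{known}}|\le 2T=o(N/\log N)$, and hence $S_{\mathrm{known}}/\sqrt N$ could still be large. This is exactly where the Aaronson--Chen $\log N$ factor enters.

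Conditioned on the transcript, the unknown part is a sum of about $N-2T$ near-uniform $\pm1$ terms. Sampling without replacement from a permutation only changes the bias by $O(T/N)$ per term. Hence $C_w(\alpha,\beta)\mid\text{transcript}\approx\N(\mu,1-o(1))$ with $\mu=S_{\mathrm{known}}/\sqrt N$.

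The success probability for outputting a pair with $\mu$ is $\Pr[|\N(\mu,1)|\ge 1]$. This is maximized by making $|\mu|$ large, so we must bound the size of $\mu$ the algorithm can achieve. Following Aaronson--Chen, a query bound only caps the shift at $O(\sqrt{T\log N/N})$-type quantities when averaged over the best choice among all pairs. We would show, via a union bound over the $N^2$ candidate pairs and Hoeffding on the revealed bits, that with high probability every pair has $|\mu|\le O(\sqrt{T\log(N^2)}/\sqrt N)=o(1)$ when $T=o(N/\log N)$.

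The main obstacle is this step. The revealed bits are chosen adaptively and are correlated across pairs, but for a fixed $(\alpha,\beta)$ the masked bits of a random permutation on distinct inputs are almost unbiased regardless of how the inputs were chosen. We would handle adaptivity by arguing martingale-style over the query sequence (Azuma) for each fixed pair, then take the union bound.

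Given $\mu=o(1)$ uniformly, success is at most $\Pr[|\N(0,1)|\ge1]+o(1)=2\Phi(-1)+o(1)\approx 0.317+o(1)$. Conditioning on the promise event adds only $o(1)$, which completes the bound.
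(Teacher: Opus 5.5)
Your plan follows the paper's route: expectation $\mathrm{Succ}_Q(1)\approx 0.801$ plus Chebyshev for the promise, then a seen/unseen split with a union bound over the $N^2$ pairs, then a $1-o(1)$ conditioning step. The final step, however, fails. You assert $C_w(\alpha,\beta)\mid\text{transcript}\approx\N(\mu,1-o(1))$ for every pair. But the algorithm chooses its output, and this law is false for two kinds of pairs.

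First, $C_w(0^n,0^n)=\sqrt N$ deterministically. So $(0^n,0^n)\in V^1_w$ for every $B$, and the zero-query algorithm that outputs it always succeeds. As MAC Fishing is defined, the statement is false unless this pair is excluded.

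Second, take a diagonal pair $(\alpha,\alpha)$ with $\alpha\neq 0^n$. The terms at $x$ and $x\xor w$ coincide, so the unrevealed part is a sum of about $N/2$ near-independent $\pm 2$ terms, and the conditional law is $\approx\N(\mu,2)$ (Lemma~\ref{lem:cd_w_normal_dist_case2}). You set the diagonal aside in your first step because it is negligible for $\adv$. It is not negligible for an algorithm that may output a diagonal pair. Outputting a fixed $(\alpha,\alpha)$ with no queries succeeds with probability $\Pr[|\N(0,2)|\ge 1]=2\Phi(-1/\sqrt2)\approx 0.48$ under your hard distribution (uniform $B$; conditioning on the promise changes this by only $o(1)$). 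So no argument based on that distribution can reach $0.317$. Your argument yields $2\Phi(-1)+o(1)$ only if MAC Fishing requires $\alpha\neq\beta$ with both masks non-zero. If diagonal outputs are allowed, the same argument gives $2\Phi(-1/\sqrt2)+o(1)$, which still separates from the quantum $\approx 0.80$.

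The paper's proof has the same slip. In its claim about $C^u_w$ it applies the standard-normal tail ``for both cases'' right after stating $C^u_w(\alpha,\alpha)\sim\N(0,2)$, and it never excludes $(0^n,0^n)$. What is needed is a corrected problem statement, not an idea you missed.

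A secondary point concerns the adaptivity step you flag. The per-term bias bound $O(T/N)$ is too coarse to give $\mu=o(1)$. Summed over the up to $T$ revealed terms, it allows an Azuma drift of order $T^2/N^{3/2}$ in $\mu$, which is not $o(1)$ when $T$ is close to $N/\log N$. Summed over the roughly $N$ unrevealed terms, it only bounds their mean shift by $O(T/\sqrt N)$, so it does not justify the $\N(\mu,1-o(1))$ law either.

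Use instead the fact that, for a uniformly random permutation, the revealed outputs form a uniformly random sequence of distinct values however adaptively the inputs are chosen. Hence, with high probability, every mask's revealed parity imbalance is $O(\sqrt{T\log N})$ simultaneously. This gives a per-term bias of $O(\sqrt{T\log N}/N)$, a total drift in $\mu$ of $O\bigl((T/N)^{3/2}\sqrt{\log N}\bigr)=o(1)$, and a negligible mean shift of the unrevealed part.
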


Using similar steps as was done for \cite[Theorem 9]{aaronson-chen}, we are able to prove the above claim whose proof is given later in Appendix~\ref{sec:proof-MACfishing-lb} (use $\tau=1$ to obtain the above theorem; similar separation claims can be easily obtained for any positive $\tau \ge 1$). The proof of the theorem also shows that there many functions that satisfy the promise of $\adv(w,1) \ge 0.801 - \tfrac{1}{n}$.


\subsection{Quantum Algorithm for MAC Fishing}\label{sec:MACfishing-qalgo}

We describe an efficient quantum sampling algorithm from $\CD_w$ in Algorithm~\ref{algo:cd_w}. Here we use the fact that $\ket{x} \mapsto \ket{B(x)}$ is a unitary mapping to save some qubits. 
We will later use this algorithm in our distinguisher.

\begin{algorithm}[h]

\caption{\textsc{MACSample}: sample from $\CD_w$}
\label{algo:cd_w}
\begin{algorithmic}[1]
\Require quantum oracle $U_B$ for the permutation $B$; difference $w\in\F_2^n$
\Ensure a mask pair $(\alpha,\beta)$ drawn with probability
  $|C_w(\alpha,\beta)|^2/N^2$
\State initialise three $n$-qubit registers as $\ket{w}\ket{0^n}\ket{0^n}$
\State apply $H^{\otimes n}$ to register 2
  \Comment{uniform superposition over $x\in\F_2^n$}
\State apply $\mathrm{CNOT}(\text{reg }2,\text{reg }3)$  \Comment{reg 3 holds $\ket{x}$}
\State apply $\mathrm{CNOT}(\text{reg }1,\text{reg }3)$  \Comment{reg 3 holds $\ket{x\oplus w}$}
\State apply $U_B$ to registers 2 and 3
  \Comment{$\ket{B(x)}\ket{B(x\oplus w)}$}
\State apply $H^{\otimes n}$ to registers 2 and 3
\State measure registers 2 and 3, obtaining $(\alpha,\beta)$
\State \textbf{if} exactly one of $\alpha,\beta$ equals $0^n$, or both do,
  \textbf{then} discard and repeat
\State \Return $(\alpha,\beta)$
\end{algorithmic}
\end{algorithm}

\paragraph{Correctness.}
On input $\ket{w}\ket{0^n}\ket{0^n}$, the registers evolve through
Algorithm~\ref{algo:cd_w} as
\begin{align*}
     \ket{w} \ket{0^n} \ket{0^n} 
\to & \stdnorm \sum_{x \in \mathbb{F}_2^n} \ket{w} \ket{x} \ket{0}
\to  \stdnorm \sum_{x \in \mathbb{F}_2^n} \ket{w} \ket{x} \ket{x \oplus w}
    \\
\to & \stdnorm \sum_{x \in \mathbb{F}_2^n} \ket{w} \ket{B(x)} \ket{B(x \oplus w)}
    \tag{apply $U_B$ to registers 2 and 3}
    \\
\to & \frac{1}{\sqrt{2^{3n}}} \sum_{x,\alpha,\beta \in \mathbb{F}_2^n}
    (-1)^{B(x) \cdot \alpha \oplus B(x \oplus w) \cdot \beta}
    \ket{w} \ket{\alpha} \ket{\beta}
    \\
\to & \tfrac{1}{N} \sum_{\alpha, \beta \in \mathbb{F}_2^n}
    C_w(\alpha, \beta) \ket{w} \ket{\alpha} \ket{\beta}.
\end{align*}

\begin{theorem}\label{lem:sample-prob}
Algorithm~\ref{algo:cd_w} makes at most three queries, in expectation, and outputs a non-zero pair of masks $(\alpha,\beta)$ with probability $C_w(\alpha,\beta)^2/N(N-1)$.

If line 8 is removed, it always makes two queries to $U_B$ and outputs a pair of masks $(\alpha,\beta)$ with probability $C_w(\alpha,\beta)^2/N^2$.
\end{theorem}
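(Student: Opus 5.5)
The plan is to read everything off the state computed in the correctness derivation above, and then account for the rejection step on line~8.

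First, the no-rejection claim. Lines 2--4 use no oracle calls. Line 5 applies $U_B$ separately to register 2 and to register 3, which is two queries; because $B$ is a permutation, $\ket{x}\mapsto\ket{B(x)}$ is a legitimate unitary. So one pass makes exactly two queries. The derivation shows that just before measurement the state is
\[
\tfrac{1}{N}\sum_{\alpha,\beta}C_w(\alpha,\beta)\ket{w}\ket{\alpha}\ket{\beta}.
\]
The only step to justify there is collecting the phase $(-1)^{\alpha\cdot B(x)\oplus\beta\cdot B(x\oplus w)}$ over $x$. This gives $\sqrt N\,C_w(\alpha,\beta)$ by Definition~\ref{def:mac}, and combined with the prefactor $N^{-3/2}$ it yields amplitude $C_w(\alpha,\beta)/N$. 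By the Born rule, measuring registers 2 and 3 returns $(\alpha,\beta)$ with probability $C_w(\alpha,\beta)^2/N^2$. This sums to $1$ by Lemma~\ref{lem:parseval-mac}, consistent with Definition~\ref{def:macd}.

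Second, the rejection version. By Lemma~\ref{lem:zero-masks}, pairs with exactly one zero mask have probability $0$. The pair $(0^n,0^n)$ has probability $N/N^2=1/N$. So by Corollary~\ref{cor:support}, one round is accepted with probability $p=(N-1)/N$. Rounds are independent fresh runs, so the output is the single-round distribution conditioned on acceptance:
\[
\frac{C_w(\alpha,\beta)^2/N^2}{(N-1)/N}=\frac{C_w(\alpha,\beta)^2}{N(N-1)}
\]
for nonzero $\alpha,\beta$.

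The number of rounds is geometric with mean $1/p=N/(N-1)$. The expected number of queries is therefore $2N/(N-1)$. This is at most $3$ whenever $N\ge 3$, i.e.\ $n\ge 2$; for $n=1$ it equals $4$, so I will state the $n\ge2$ assumption explicitly.

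I expect no real obstacle. The only delicate point is the query count. One has to be clear that the two applications of $U_B$ count as two queries, and that the "at most three in expectation" bound comes from the geometric expectation $2N/(N-1)$ rather than from a worst-case count.
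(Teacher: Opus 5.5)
Your proposal is correct and follows the paper's own argument: you read the sampling distribution $C_w(\alpha,\beta)^2/N^2$ off the final state, then condition on rejecting the all-zero pair (probability $1/N$), and the geometric number of rounds gives $2N/(N-1)$ expected queries. Your explicit $n\ge 2$ caveat is a fair sharpening, since the paper's bound $N/(N-1)\le 1.5$ also needs it and leaves it unstated.
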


\begin{proof}
The final state above is the definition of $C_w(\alpha,\beta)$, so measuring the
last two registers samples $(\alpha,\beta)$ with probability
$|C_w(\alpha,\beta)|^2/N^2$, i.e.\ exactly from $\CD_w$. 

Line 8 is used to filter out the zero masks. By
Corollary~\ref{cor:support} the amplitude vanishes whenever exactly one of
$\alpha,\beta$ is $0^n$, so such pairs are never observed; the all-zero pair
$\alpha=\beta=0^n$ occurs with probability $1/N$ and is discarded by the repeat
step, after which the output is a pair of non-zero masks. The expected number of such repetitions is $N/(N-1) \le 1.5$.
\end{proof}

It is straightforward to extend the circuit to sample over all $w,\alpha,\beta$
jointly, by first preparing a uniform superposition over $\ket{w}$ and then
running the same steps, yielding
\[
  \frac{1}{N^{1.5}} \sum_{w,\alpha,\beta \in \mathbb{F}_2^n}
  C_w(\alpha, \beta)\,\ket{w}\ket{\alpha}\ket{\beta}.
\]

It is also possible to create a similar state but now including all possible keys in the superposition; the $U_B$ oracles now needs to act on the key register as well: $U_B \ket{k}\ket{x} = \ket{k}\ket{E_k(x)}$.
\[
  \frac{1}{N^{2}} \sum_{k,w,\alpha,\beta \in \mathbb{F}_2^n}
  C_w(\alpha, \beta)\,\ket{k} \ket{w}\ket{\alpha}\ket{\beta}.
\]

Doing so, and measuring the $w$-register too, will allow us to sample for $(w,\alpha,\beta)$ such that $|C_w(\alpha, \beta)|$ is high across all keys.

%% file: final/mdl_approx.tex
\section{Masked Differential-Linear (MDL) Approximation}\label{sec:mdl}

Let's formally introduce the notion of MDL approximations of the block cipher $B$ which is designed similar to linear approximations of vectorial functions.

\begin{definition}[Masked Differential Linear (MDL) Approximation]
    A (one-dimensional) masked-differential-linear approximation is the map
    $$g_w^{\alpha,\beta}:=(x) \mapsto \alpha \cdot f(x) \xor \beta \cdot f(x \xor w)$$
    where $x \in \F_2^n$ is an input, $\alpha, \beta \in \F_2^n$ are output masks and $w \in \F_2^n$ is an input-shift/difference.
\end{definition}

The correlation, $Cor(f,w;\alpha, \beta)$, of such an approximation is defined as
$$Cor(f,w;\alpha, \beta) = \Pr_x[\alpha \cdot f(x) = \beta \cdot f(x \xor w)]- \Pr_x[\alpha \cdot f(x) \not= \beta \cdot f(x \xor w)].$$

Of course, $Cor(f,w; \alpha, \beta) = \tfrac{1}{N}\sum_x (-1)^{g_w^{\alpha,\beta}(x)} = C^w_f(\alpha,\beta)/\sqrt{N}$.


\subsection{Single Approximation}\label{subsec:single-approx}
Let's discuss how to distinguish between a cipher and a random permutation using $g_w^{\alpha,\beta}$ for some pair of non-zero masks $(\alpha,\beta) \in \F_2^n \times \F_2^n$. We will assume at this point that we know the approximating masks $(\alpha,\beta)$ and a lower bound $\epsilon$ on the approximations, i.e., say, $|\cor(f,w;\alpha,\beta) |\ge \epsilon$. Recall that our MAC Fishing algorithm was designed to provide these information.

Observe that we can write $C^{w}_f(\alpha,\beta)/\sqrt{N}$ as

$$\E_x \left[ (-1)^{\alpha\cdot f(x) \oplus \beta \cdot f(x \oplus w)} \right] = 1 - 2 \E_x \left[ \alpha\cdot f(x) \oplus \beta \cdot f(x \oplus w) \right],$$
so $|\cor(f,w;\alpha,\beta)| \ge \epsilon$ implies that 
$$\E_x[\alpha \cdot f(x) \xor \beta \cdot f(x \xor w)] \le (1-\epsilon)/2 \text { or } \ge (1+\epsilon)/2.$$

On the other hand, if $f$ is a random permutation, then it is not difficult to show that $\E_x[\alpha \cdot f(x) \xor \beta \cdot f(x \xor w)] = 1/2$.

To launch an attack, randomly choose a set $S$ plaintexts, and then compute the average of $\alpha \cdot f(x) \xor \beta \cdot f(x \xor w)$ for $x \in S$, denoted $\widehat{\cor}$. If $\widehat{\cor} \le \epsilon/2$ then the data can be assumed to come from a random permutation, else it can be predicted to have come from a block cipher and high-correlation approximation.
It can shown using Chernoff's bound that $O(1/\epsilon^2)$ classical samples are sufficient for guessing with exponentially high confidence. Using the standard quantum algorithm for counting, the number of quantum queries (to the block cipher oracle) can be reduced to $O(1/\epsilon)$~\cite[Sec.~7.1.1,7.2.1]{kaplan2016diff_lin}.

\subsection{Multidimensional Approximation}

Let $g^w$ denote a collection of (single) approximations. We will formally explain this below, but when we have to deal with multiple approximations, the idea is to focus on the value distribution $\Pr_{z \in \F_2^n} ~[g^w(x) = z]$ of linear-combinations of the approximations, denoted $g^w$, instead of $f$~\footnote{Value distribution of $f$ is always uniform since $f$ represents a permutation.}. Let $p(z)$ denote this distribution. Following existing approaches (see Section~\ref{subsec:multidim-lin-crypt}) and making use of Lemma~\ref{lem:zero-masks}, the capacity of this distribution can be determined from the correlations of $g^w$. Now we formalize this notion.

\begin{definition}[Multi-dimensional MDL Approximation]\label{defn:MDL}
    Fix a difference $w \in \mathbb{F}_2^n$ and a collection of $t$ non-zero linearly-independent mask pairs $M=(\alpha_i, \beta_i) \in \mathbb{F}_2^n \times \mathbb{F}_2^n$ for $i = 1,\dots,t$. 
    A multi-dimensional masked-differential-linear approximation is a $t$-bit function $g^w:\F_2^n \to F_2^t$ that is constructed by stacking the $t$ masked differential bits,
\begin{equation}\label{eq:gw-explicit}
  g^w(x) \;=\; \bigl(
    \alpha_1 \cdot f(x) \oplus \beta_1 \cdot f(x \oplus w),\;
    \dots,\;
    \alpha_t \cdot f(x) \oplus \beta_t \cdot f(x \oplus w)
  \bigr) \;\in\; \mathbb{F}_2^t.
\end{equation}
\end{definition}

So, each output coordinate corresponds to one mask pair $(\alpha_i,\beta_i)$. Equivalently, $g^w(x) = (\alpha_i,\beta_i) \cdot (f(x), f(x\oplus w))$ which is essentially a linear projection of $(f(x),f(x\xor w))$ determined by the masks.

The next lemma is well known in the literature; it connects the capacity of an MDL to the value distribution of $g^w$ and the sum of masked auto-correlations for a special set of mask pairs. It relies on a crucial property of this construction that every linear combination of the
coordinates of $g^w$ is again a single masked differential approximation of $f$.

\begin{lemma}\label{lemma:boxed}
Capacity of a multi-dimensional approximation can be expressed as the following equivalent forms. Here, $p_z$ denotes $\Pr_x[g^w(x)=z]$.
$$
  \mathrm{Cap}(g^w)
  \;=\;
  2^t \sum_{z\in\mathbb{F}_2^t}\Bigl(p_z - 2^{-t}\Bigr)^2
  \;=\;
  \sum_{S\neq 0}\bigl(\mathrm{Cor}(S\cdot g^w)\bigr)^2
  \;=\;
  \tfrac{1}{N}\sum_{S\neq 0}\bigl(C_f^w(\alpha_S,\beta_S)\bigr)^2.
$$
\end{lemma}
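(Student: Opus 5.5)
The plan is to prove the two equalities separately. The first equality is the definition of capacity for an $\F_2^t$-valued function (multiplier $2^t$). The second is Parseval on $\F_2^t$ applied to the value distribution $p$. The third re-expresses each linear combination $S\cdot g^w$ as a single MDL approximation and invokes the normalization $\cor(f,w;\alpha,\beta)=C^w_f(\alpha,\beta)/\sqrt N$.

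\emph{Step 1 (Fourier expansion of $p$).} For $S\in\F_2^t$, define $\widehat p(S)=\sum_z p_z(-1)^{S\cdot z}$. Then
\[
\widehat p(S)=\E_x\bigl[(-1)^{S\cdot g^w(x)}\bigr]=\cor(S\cdot g^w),
\]
and in particular $\widehat p(0)=1$. Write $q_z=p_z-2^{-t}$. Its transform is $\widehat q(S)=\widehat p(S)$ for $S\neq 0$, and $\widehat q(0)=1-1=0$. Parseval on $\F_2^t$ gives $\sum_S \widehat q(S)^2 = 2^t\sum_z q_z^2$. Hence
\[
2^t\sum_z (p_z-2^{-t})^2=\sum_{S\neq0}\cor(S\cdot g^w)^2.
\]
The Parseval identity itself is checked by expanding the square and using $\sum_S(-1)^{S\cdot(z\oplus z')}=2^t[z=z']$.

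\emph{Step 2 (linear combinations are MDL approximations).} For $S=(s_1,\dots,s_t)$, set $\alpha_S=\bigoplus_i s_i\alpha_i$ and $\beta_S=\bigoplus_i s_i\beta_i$. Bilinearity of the inner product gives
\[
S\cdot g^w(x)=\alpha_S\cdot f(x)\oplus\beta_S\cdot f(x\oplus w)=g_w^{\alpha_S,\beta_S}(x).
\]
Therefore $\cor(S\cdot g^w)=\cor(f,w;\alpha_S,\beta_S)=C^w_f(\alpha_S,\beta_S)/\sqrt N$. Squaring and summing over $S\neq0$ yields the last expression with factor $\tfrac1N$.

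\emph{Remarks and main obstacle.} There is no real technical obstacle. The only care needed is in the normalization of $\widehat q(0)$: subtracting the uniform distribution is exactly what removes the $S=0$ term, where $C^w_f(0,0)=\sqrt N$ by Lemma~\ref{lem:zero-masks}. Linear independence of the mask pairs is not needed for the identity. It only guarantees that $S\mapsto(\alpha_S,\beta_S)$ is injective, so the sum ranges over $2^t-1$ distinct nonzero mask pairs. Such a pair could still have exactly one zero component, in which case it contributes $0$ by Lemma~\ref{lem:zero-masks}.
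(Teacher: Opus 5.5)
Your proof is correct and follows the same route as the paper: you rewrite each $S\cdot g^w$ as the single MDL approximation with masks $(\alpha_S,\beta_S)$ and apply Parseval on $\F_2^t$ to the value distribution, and you also write out the computation $\widehat q(0)=0$ that the paper leaves implicit. Your normalization $\cor(S\cdot g^w)=C^w_f(\alpha_S,\beta_S)/\sqrt N$ is the one that produces the stated $\tfrac1N$ factor, whereas the paper's proof writes that identity without the $1/\sqrt N$.
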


\begin{proof}
For any nonzero $S = (S_1,\dots,S_t) \in \mathbb{F}_2^t$,
\begin{equation}\label{eq:linear-combo}
  S \cdot g^w(x)
  \;=\;
  \Bigl(\sum_{i} S_i \alpha_i\Bigr) \cdot f(x)
  \;\oplus\;
  \Bigl(\sum_{i} S_i \beta_i\Bigr) \cdot f(x \oplus w)
  \;=\; \alpha_S \cdot f(x) \oplus \beta_S \cdot f(x\oplus w),
\end{equation}
with $\alpha_S = \sum_i S_i\alpha_i$ and $\beta_S = \sum_i S_i\beta_i$, so that
$\mathrm{Cor}(S\cdot g^w) = C_f^w(\alpha_S,\beta_S)$, the
$(\alpha_S,\beta_S,w)$-masked auto-correlation of $f$. By the Parseval identity
for $\mathbb{F}_2^t$-valued functions, the capacity of $g^w$ therefore decomposes
exactly into the masked cross-correlations of $f$.
\end{proof}

The set of masks used to calculate capacity consists of all possible non-empty linear combinations of $[(\alpha_1, \beta_1), \ldots, (\alpha_t, \beta_t)]$. Thus, to ensure that every combination is used exactly once, the original set of masks should be linearly independent. That was the reason for requiring the masks to be linearly independent in Definition~\ref{defn:MDL}.




\subsubsection{MDL Capacity}

Let's discuss how the capacity of an MDL approximation behaves for a random permutation. We follow the approach of Khan and Nyberg~\cite{KN19}.

Let $L\subset \mathbb{F}_2^n \times \mathbb{F}_2^n$ denote the $t$-dimensional vector subspace for our masks. Without loss of generality we can assume that $L$ does not contain any $(\alpha,\beta)$ with non-zero $\alpha$ or non-zero $\beta$ (Lemma~\ref{lem:zero-masks}). Further, $C^w_f(\alpha, \beta)$ is 0, on expectation, for a random $f$ and any pair of masks; thus, $\Cap(L)=0$, implying that $p_z = \tfrac{1}{2^t}$. Thus, the capacity of $L$ follows a central $\chi^2$-distribution with at most $2^t-1$ degrees of freedom.






However, we believe that the following conjecture is true, similar to one proposed by Ashur et al.~\cite{Ashur2022,hosoyamada2023multidim}.

\begin{conjecture}\label{conj:capacity}
    Fix any $w \not= 0^n$. Let $L \subseteq \F_2^n \times \F_2^n$ be a vector space of non-zero mask pairs. 
    For a random block cipher $B$, its capacity approximately follows the central $\chi^2$ distribution with $2^t - 1$ degrees of freedom.
\end{conjecture}

The above conjecture should hold in case all the mask pairs in $L$ are non-zero; indeed, that holds true for the masks returned by our \textsc{MACSample} algorithm (Algorithm~\ref{algo:cd_w}). In case there are zero masks in $L$, the degree of freedom can be appropriately reduced~\cite[Conjecture~1]{Ashur2022}. We also observed the conjecture to hold in our experiments, e.g., see Figure~\ref{fig:capacity_dist}.

\begin{figure}[h]
    \centering
    \includegraphics[width=0.75\linewidth]{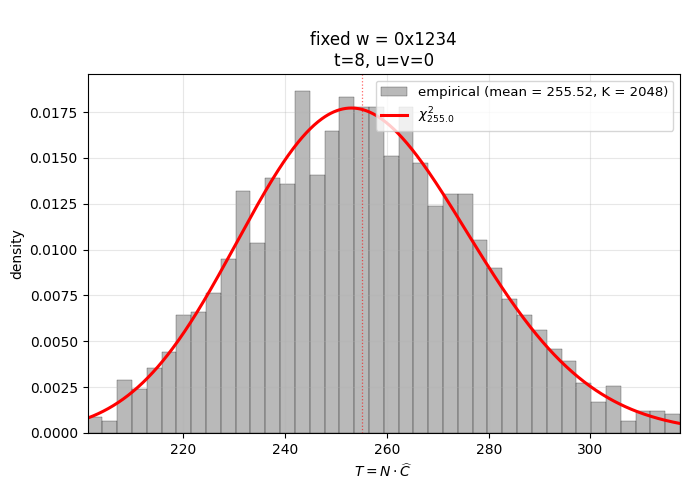}
    \caption{Distribution of capacity for 16-bit random permutations follow $\chi^2(255)$ for an 8-dimensional vector space of mask pairs; $u,v$ denotes the dimensions of zero-masks which were not used in this experiment. $X$-axis indicates capacity (scaled by $N$) and density indicate the probability density function.}
    \label{fig:capacity_dist}
\end{figure}

\subsection{Relationship to Existing Techniques}
\label{sec:mdl-relations}
It is important to note that our technique, for different classes of linear attacks, is a strict generalisation. It successfully captures not only the structure for these cryptanalytic attacks but even shows that the same statistical properties hold within the generalisation. 

\subsubsection{Multidimensional Linear Cryptanalysis}

Classical linear cryptanalysis studies approximations of the form
\(
\alpha \cdot x \oplus \beta \cdot f(x),
\)
whose bias is measured through the correlation between input and output masks. In our setting, if we fix one of the functions to be the identity map and choose $w=0$, i.e.,
\[
f(x)=x, \qquad g(x)=f(x), \qquad w=0,
\]
then the MDL approximation
\(
\alpha \cdot f(x) \oplus \beta \cdot g(x \oplus w)
\)
reduces exactly to the standard linear approximation. Consequently, the associated multidimensional setting, where masks form a vector space and the distinguisher is based on the sum of squared correlations (capacity), is recovered naturally within our framework. Thus, multidimensional linear cryptanalysis corresponds to the special case where no input difference is introduced and one side of the approximation acts directly on the plaintext.

\subsubsection{Differential-Linear Cryptanalysis}

Differential-linear cryptanalysis studies the propagation of an input difference through a cipher followed by a linear approximation on the output. In its standard form, for an input difference $\delta$ and output mask $\gamma$, one considers the bias of
\(
\gamma \cdot \bigl(f(x) \oplus f(x \oplus \delta)\bigr).
\)
Our framework generalizes this viewpoint by allowing \emph{two independent output masks} across the shifted evaluations of the cipher:
\(
\alpha \cdot f(x) \oplus \beta \cdot f(x \oplus w).
\)

The classical differential-linear setting is recovered by taking $\alpha=\beta=\gamma$ and $w=\delta$. Hence, differential-linear cryptanalysis appears as a restricted subclass of MDL approximations in which identical masks are enforced across the two correlated branches. By permitting arbitrary mask pairs $(\alpha,\beta)$ and organizing them into vector spaces, our framework further extends naturally to a multidimensional differential-linear setting.

\subsubsection{Differential-Linear Connectivity Table (DLCT)}
The Differential-Linear Connectivity Table (DLCT) can be viewed as a special case of the proposed MDL framework. Recall that for a vectorial Boolean function $S : \mathbb{F}_2^n \to \mathbb{F}_2^n$, the DLCT entry corresponding to an input difference $\Delta$ and output mask $\lambda$ is defined as
\[
\mathrm{DLCT}_S(\Delta,\lambda)
=
\left|\{x \in \mathbb{F}_2^n :
\lambda \cdot S(x)
=
\lambda \cdot S(x \oplus \Delta)\}\right|
-2^{n-1}.
\]
Now consider the masked cross-correlation in our framework with $f=g=S$, $\alpha=\beta=\lambda$, and shift $w=\Delta$. Then,
\[
\cor(S,\Delta; \lambda, \lambda)
=
\frac{1}{\sqrt{N}}
\sum_{x \in \mathbb{F}_2^n}
(-1)^{\lambda \cdot S(x)}
(-1)^{\lambda \cdot S(x \oplus \Delta)},
\]
which measures the correlation between two boolean functions by masking the output. By separating the summation into agreeing and disagreeing terms, one obtains that the DLCT entry is proportional to this differential-linear correlation value, up to normalization. Hence, the DLCT naturally emerges as a restricted instance of our MDL framework where the same function and identical masks are used. In this sense, MDL approximations strictly generalise differential-linear cryptanalysis and the DLCT formalism.

%% file: final/experiments.tex
\newcommand{\phat}{\hat{p}}
\newcommand{\rhohat}{\hat{\rho}}

%% file: final/attacks.tex
\section{Distinguishers and Key-Recovery Attacks}\label{sec:applications}

The two preceding sections give us, respectively, a way to find high-correlation
mask pairs (MAC Fishing) and a framework in which those pairs define
multidimensional approximations (MDL). We now turn these into concrete
cryptanalysis. We present two families of attacks. The first is a capacity-based
attack whose processing is classical once masks are supplied: given a set of mask
pairs it distinguishes a cipher from a random permutation and recovers a
last-round subkey (Section~\ref{sec:cap-attacks}). The second obtains good masks
with the quantum MAC Fishing primitive, yielding a distinguisher and key-recovery
attack whose mask-finding step has no efficient classical analogue
(Section~\ref{sec:mac-attacks}). In both families the masks originate from the
same quantum sampler (Algorithm~\ref{algo:cd_w}); the MAC-Fishing family additionally verifies each
correlation quantumly. We also validate our results on a toy Simon cipher, the results described in Appendix~\ref{sec:exp-validation}.

\paragraph{Correlation convention.}
Throughout this section we measure approximations by the normalised correlation
$\cor(B,w;\alpha,\beta)=C^w_B(\alpha,\beta)/\sqrt N\in[-1,1]$ of
Section~\ref{sec:mdl}, so that for a random permutation
$\cor(B,w;\alpha,\beta)\sim\N(0,2^{-n})$ and $\E[\cor^2]=2^{-n}$. Capacities and
decision thresholds are stated in these units. Thresholding
$\cor(B,w;\alpha,\beta)^2\ge\tau$ selects the heavy set $V_w^{N\tau}$ of
Section~\ref{sec:macfishing} (defined there in the Walsh scaling $C^w_B$), and a
single draw of \textsc{MACSample} lands in it with probability
$\mathrm{adv}:=\mathrm{adv}(w,N\tau)$. The distinguishers below use
$\tau=\omega(\log N/N)$, for which a random permutation has no heavy pair with high
probability (Section~\ref{sec:mac-dist}); this is sparser than the
constant-threshold regime $N\tau=\Theta(1)$ in which the hardness theorem is
proved.

\subsection{A Capacity-Based Block Cipher Distinguisher and Key Recovery}
\label{sec:cap-attacks}

\subsubsection{Capacity Based Distinguisher Algorithm}
The distinguisher is built upon Lemma~\ref{lemma:boxed} that expresses $\Cap(g^w)$ as sum of squares of masked auto-correlations for a set of carefully created masks. 
\begin{equation}
  \mathrm{Cap}(g^w)
  \;=\;
  \sum_{S \neq 0}
  \cor(B,w;\alpha_S,\beta_S)^2 .
\end{equation}

It is well known that $\cor(B,w;\alpha, \beta)$ is identical to the Fourier coefficient of $\Vec{p}$ (value distribution of $g^w$) at $(\alpha,\beta)$; thus, capacity measures how far that distribution deviates from uniform: a random
permutation has $C_B^{\alpha,\beta}(w)\approx 0$ for all masks, hence
$\mathrm{Cap}(g^w) \approx 0$ (Conjecture~\ref{conj:capacity}), whereas a structured cipher accumulates nonzero correlations across the mask pairs and pushes $\mathrm{Cap}(g^w)$ above the random baseline.


Algorithm~\ref{alg:cap-dist} formalizes the distinguisher. It estimates
$\mathrm{Cap}(g^w)$ empirically and thresholds it: given $N$ plaintext pairs under
a fixed difference $w$, it estimates the capacity of $g^w$ and compares it against
a threshold $\tau$ to decide whether the oracle behaves as a structured cipher or
a random permutation.

\begin{algorithm}[h]
\caption{\textsc{CapacityDistinguisher}}
\label{alg:cap-dist}
\begin{algorithmic}[1]
\Require permutation $B:\mathbb{F}_2^n\to\mathbb{F}_2^n$; difference $w$; mask
  pairs $(\alpha_i,\beta_i)_{i=1}^t$; sample count $C$; threshold $\tau$
\Return ``cipher'' or ``random permutation''
\State Initialize $\widehat{\mathrm{Cap}}(g^w) \gets 0$
\For{$s \in \F_2^t \setminus \{0^t\}$}
    \State $\alpha_s \gets \sum_{i=1}^t s_i \alpha_i$
    \State $\beta_s \gets \sum_{i=1}^t s_i \beta_i$
    \State Choose $C$ many $x \in \F_2^n$ uniformly at random
    \State Use the sampled $x$ to estimate $c \gets 1-2\Pr[\alpha_s \cdot B(x) \xor \beta_s \cdot B(x \xor w)]$
    \State Add $c^2$ to $\widehat{\mathrm{Cap}}(g^w)$
\EndFor
\If{$\widehat{\mathrm{Cap}}(g^w) > \tau$}
  \State \Return ``cipher''
\Else
  \State \Return ``random permutation''
\EndIf
\end{algorithmic}
\end{algorithm}





The estimation of the probability in Line 6 of the algorithm can be done using both classical techniques, requiring $C=O(1/\epsilon^2)$, or quantum techniques, requiring $C=O(1/\epsilon)$ queries to an oracle for $B$.

\begin{figure}[h]
    \centering
    \includegraphics[width=\linewidth]{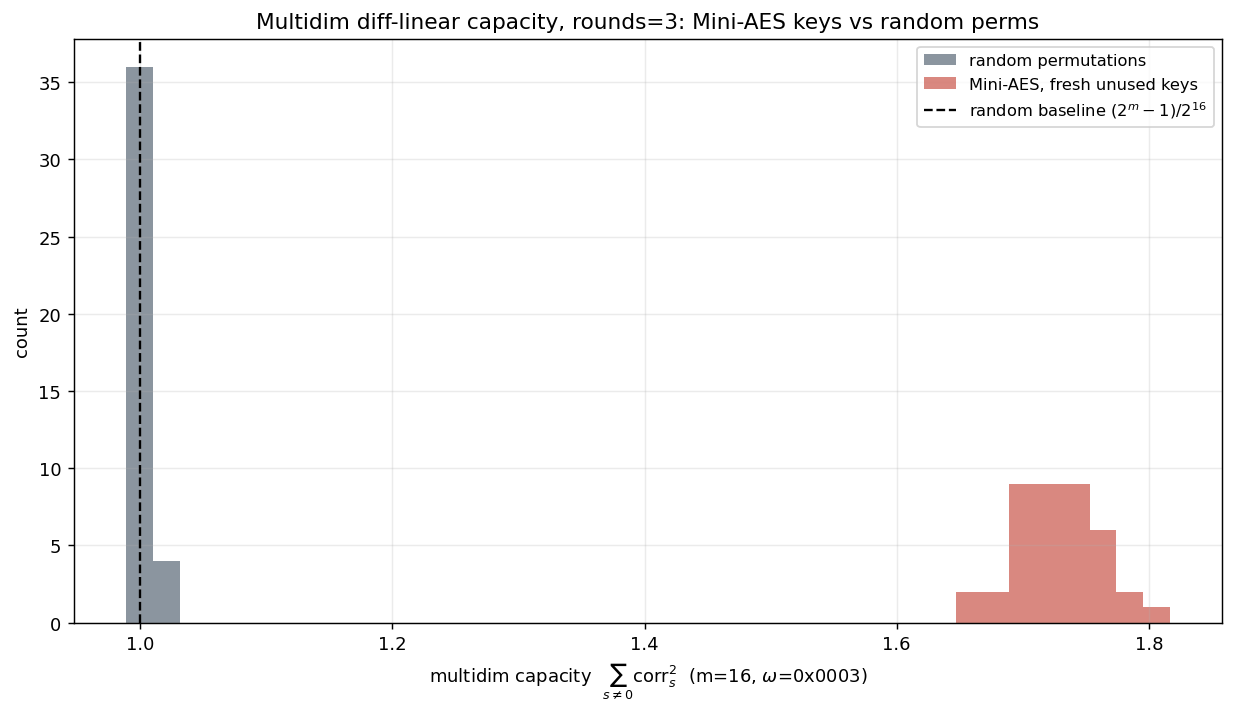}
    \caption{Distributions of keys that are vulnerable to capacity based attack. The 16-dimensional multidimensional approximation was chosen earlier by identifying good mask pairs across a set of randomly chosen keys. A fresh set of 40 keys was used for this experiment to prevent any bias arising from using the same keys for attacking as that for identifying the masks. $X$-axis shows the capacity.}
    \label{fig:Capacity distinguisher Mini-AES}
\end{figure}

We validate the effectiveness of \textsc{CapacityDistinguisher} on Mini-AES. 
In Figure~\ref{fig:Capacity distinguisher Mini-AES} we show how capacity distinguisher can identify between a random 16-bit permutation and Mini-AES for the following set of masks and $w=0x0003$. 

\begin{tabular}{cc|cc|cc|cc}
\hline
$\alpha$ & $\beta$ & $\alpha$ & $\beta$ & $\alpha$ & $\beta$ & $\alpha$ & $\beta$ \\
\hline
0x0060 & 0x0080 & 0x3000 & 0x6000 & 0x0080 & 0x0060 & 0x0003 & 0x0007 \\
0x0250 & 0x0890 & 0x2000 & 0x1008 & 0x1002 & 0xc003 & 0x0020 & 0x0050 \\
0x0060 & 0x0050 & 0x0100 & 0x0200 & 0xa000 & 0x3000 & 0x0050 & 0x0020 \\
0x0410 & 0x04a0 & 0x0200 & 0x0100 & 0x0120 & 0x0890 & 0xe000 & 0x9000 \\
\hline
\end{tabular}

Almost all keys led to MLD capacity of nearly 1 for a random permutation but close to 1.7 for Mini-AES. Further, the use of Wilson's lower bound for confidence lets us confidently extend the inference across all keys.


An entire pipeline of key-recovery using the capacity-based distinguisher on Simon is given in Appendix~\ref{sec:exp-capacity}.

\subsubsection{Key Recovery on the $(r+1)$-Round Cipher}\label{sec:cap-kr}

The distinguisher above, effective against the $r$-round reduced cipher $B$,
extends to a last-round key-recovery on the $(r+1)$-round cipher $E_K$ in the
standard way. Write $E_K = R_{k^\star}\circ B$, where $R_{k^\star}$ is the final
round under the (unknown) last-round subkey $k^\star$ and $R_k^{-1}$ is its
inverse under a guess $k$. For each candidate $k$ define the partially decrypted
oracle
\[
  B_k(x) \;=\; R_k^{-1}\!\bigl(E_K(x)\bigr),
\]
so that $B_{k^\star}=B$ recovers the $r$-round structure, while for $k\neq
k^\star$ the extra mis-decrypted round randomizes $B_k$ (wrong-key randomization
hypothesis): its masked correlations sit at the random baseline,
$\cor(B_k,w;\alpha,\beta)\sim\N(0,2^{-n})$.

Let $\{(w_i,\alpha_i,\beta_i)\}_{i=1}^m$ be the heavy tuples produced by
\textsc{MACSample} (Algorithm~\ref{algo:cd_w}) on the reduced cipher --- the same
draws that drive the distinguisher --- and abbreviate
$\cor_i(k)=\cor(B_k,w_i;\alpha_i,\beta_i)$. For each $k$ we estimate the signed
correlation vector $v(k)=\bigl(\cor_1(k),\dots,\cor_m(k)\bigr)\in[-1,1]^m$.

\paragraph{Correlation-filter.}
The unsigned statistic is the average squared correlation over the template
tuples,
\[
  \widehat{\mathrm{adv}}(k)
  \;=\; \frac{1}{m}\sum_{i=1}^m \cor_i(k)^2
  \;=\; \frac{1}{m}\,\|v(k)\|^2 ,
\]
the empirical analogue of the fishing advantage restricted to the template. Note
this is the energy over the $m$ chosen tuples. Under wrong-key randomisation $\E[\widehat{\mathrm{adv}}(k)]\approx
2^{-n}$ for $k\neq k^\star$, whereas
$\widehat{\mathrm{adv}}(k^\star)=\tfrac1m\sum_i\cor_i^{\star2}\ge\tau$ since each
template pair is heavy ($\cor_i^{\star2}\ge\tau$). Filtering using $\widehat{\mathrm{adv}}(k)\ge\tau$ therefore removes the bulk of wrong guesses.

\paragraph{Sign degeneracy.}
Screening by $\widehat{\mathrm{adv}}$ alone does not isolate $k^\star$. When the
last-round subkey enters the masked output bits linearly --- the generic case ---
each component correlation factors as
\[
  \cor_i(k) \;\approx\; (-1)^{\langle k,\,m_i\rangle}\,\cor_i^\star ,
\]
where $m_i$ is the key mask induced by $(\alpha_i,\beta_i)$ and the round
structure and $\cor_i^\star=\cor_i(k^\star)$. The magnitudes $|\cor_i(k)|$ are then
\emph{key-independent}, so every $k$ in the linear class shares $k^\star$'s energy
and survives the screen as an equivalence class. What distinguishes $k^\star$
within the class is not the magnitudes but the signs $(-1)^{\langle k,m_i\rangle}$.

\paragraph{Signed template matching.}
We resolve the class by comparing the full signed vector against a reference
\emph{template} $v^\star=(\cor_1^\star,\dots,\cor_m^\star)$ via cosine similarity,
\begin{equation}\label{eq:cosine}
  \mathrm{Score}(k)
  = \frac{\langle v(k),v^\star\rangle}{\|v(k)\|\,\|v^\star\|},
  \qquad
  \langle v(k),v^\star\rangle = \sum_{i=1}^m \cor_i(k)\,\cor_i^\star ,
\end{equation}
which is maximized exactly when the signs of $v(k)$ align with the template, i.e.\
at $k=k^\star$. The reference $v^\star$ is the signed profile of the $r$-round
reduced cipher; since these signs depend on the inner-round keys, we obtain
$v^\star$ in the offline phase from the reduced-cipher oracle --- built from the
same \textsc{MACSample} draws that fix the template tuples --- which is the
known-key setting of our experiments. A fully online recovery of the sign pattern,
without the reduced-cipher reference, we leave to future work.

\paragraph{The attack.}
Algorithm~\ref{alg:template-key-recovery} combines the two stages: a filter discards candidates whose correlation energy is at the random baseline,
and the survivors $\mathcal{K}'$ are re-ranked by the signed
score~\eqref{eq:cosine}. In our experiments this recovered the correct subkey at
the top of the ranking, where the energy screen alone returned a whole
equivalence class.

\begin{algorithm}[h]
\caption{\textsc{TemplateKeyRecovery}}
\label{alg:template-key-recovery}
\begin{algorithmic}[1]
\Require $(r+1)$-round cipher $E_K$; heavy tuples $\{(w_i,\alpha_i,\beta_i)\}_{i=1}^m$
  from \textsc{MACSample}; candidate subkeys $\mathcal{K}$; reference template
  $v^\star=(\cor_1^\star,\dots,\cor_m^\star)$; threshold $\tau$; sample count $N$
\Ensure ranked list of last-round subkey candidates
\Statex \textbf{Phase 1: score every candidate key}
\For{each $k\in\mathcal{K}$}
  \State $v(k)\gets ()$
  \For{$i = 1$ \textbf{to} $m$}
    \State estimate $\cor_i(k)=\cor(B_k,w_i;\alpha_i,\beta_i)$ from $N$ pairs,
      with $B_k(x)=R_k^{-1}(E_K(x))$; append to $v(k)$
  \EndFor
  \State $\widehat{\mathrm{adv}}(k)\gets\tfrac1m\sum_i \cor_i(k)^2$;\quad
    $\mathrm{Score}(k)\gets\langle v(k),v^\star\rangle/(\|v(k)\|\,\|v^\star\|)$
\EndFor
\Statex \textbf{Phase 2: screen, then rank}
\State $\mathcal{K}'\gets\{\,k\in\mathcal{K}:\widehat{\mathrm{adv}}(k)\ge\tau\,\}$
  \Comment{filter}
\State rank $\mathcal{K}'$ in descending order of $\mathrm{Score}(k)$
\State \Return top-ranked candidates from $\mathcal{K}'$
\end{algorithmic}
\end{algorithm}

\subsubsection{Complexity Analysis}\label{sec:cap-complexity}

We analyse the data, time, and memory of the distinguisher and the key recovery.
Throughout, $n$ is the block size, $\kappa$ the last-round subkey length
($|\mathcal{K}|=2^\kappa$), $t$ the number of mask pairs used by the
distinguisher, $m$ the number of heavy tuples $(w_i,\alpha_i,\beta_i)$ returned by
\textsc{MACSample}, and $\tau$ the correlation threshold defining a heavy pair
($\cor^2\ge\tau$). All capacities and correlations are in the normalised
$[-1,1]$ scale of the convention above.

\paragraph{Distinguisher.}
\textsc{CapacityDistinguisher} (Algorithm~\ref{alg:cap-dist}) spends $O(N\cdot t)$
time --- $t$ inner products per sample --- and $O(2^t)$ memory for the histogram
over $\mathbb{F}_2^t$, which doubles with each added mask pair, so $t$ is kept
small. The binding constraint is statistical: distinguishing the cipher from the
random baseline with the $\chi^2$ capacity test requires
\begin{equation}\label{eq:data-dist}
  N \;=\; \Omega\!\left(\frac{\sqrt{2^t}}{\mathrm{Cap}(g^w)}\right),
  \qquad
  \mathrm{Cap}(g^w)=\sum_{S\neq0}\cor(B,w;\alpha_S,\beta_S)^2 ,
\end{equation}
the multidimensional-linear data complexity of Section~\ref{sec:background}; the
$\sqrt{2^t}$ is the variance penalty of a $2^t$-cell goodness-of-fit statistic
(the test statistic $N\cdot\widehat{\mathrm{Cap}}$ has noncentrality $N\,\mathrm{Cap}$
against a central $\chi^2_{2^t-1}$ of standard deviation $\Theta(\sqrt{2^t})$).

\begin{table}[h]
\centering
\begin{tabular}{ll}
\hline
\textbf{Resource} & \textbf{Complexity} \\
\hline
Data   & $N = \Omega\!\bigl(\sqrt{2^t}/\mathrm{Cap}(g^w)\bigr)$ chosen plaintext pairs \\
Time   & $O(N \cdot t)$ \\
Memory & $O(2^t)$ \\
\hline
\end{tabular}
\caption{Complexity of \textsc{CapacityDistinguisher}
  (Algorithm~\ref{alg:cap-dist}).}
\label{tab:dist-complexity}
\end{table}

\paragraph{Key recovery.}
\textsc{TemplateKeyRecovery} (Algorithm~\ref{alg:template-key-recovery}) estimates,
for each of the $2^\kappa$ candidate subkeys, the $m$ component correlations
$\cor_i(k)$ over the heavy tuples. Each estimate uses $N=O(1/\tau)$ pairs:
averaging $\pm1$ values, Hoeffding gives additive error $\sqrt\tau$ with constant
confidence, which resolves a heavy correlation ($|\cor_i^\star|\ge\sqrt\tau$) and,
crucially, its \emph{sign} for the template score. This distinct $N$ should not be
confused with the distinguisher's aggregate sample size~\eqref{eq:data-dist}. The
leading time term is therefore
\[
  O\!\left(2^\kappa \cdot m \cdot N\right) \;=\; O\!\left(2^\kappa\, m/\tau\right),
\]
one partial decryption $R_k^{-1}$ per (key, pair) and $m$ masked-bit reads each.
The ciphertext pairs are queried \emph{once} and reused across all candidates ---
only $R_k^{-1}$ is recomputed per key --- so the data cost is $O(W/\tau)$ chosen
plaintext pairs, where $W$ is the number of distinct differences among the tuples
($W=1$ in the common fixed-$w$ case, where a single batch of $O(1/\tau)$ pairs
serves all $m$ correlations). The screen and signed re-ranking cost
$O(2^\kappa\log 2^\kappa)$, dominated by the estimation term. Memory is
$O(2^\kappa+m)$: one score per candidate plus the working correlation vector.

\begin{table}[h]
\centering
\begin{tabular}{ll}
\hline
\textbf{Resource} & \textbf{Complexity} \\
\hline
Data   & $O(W/\tau)$ chosen plaintext pairs, $N=O(1/\tau)$ per correlation, reused across keys \\
Time   & $O\!\left(2^\kappa \cdot m / \tau\right)$ \\
Memory & $O\!\left(2^\kappa + m\right)$ \\
\hline
\end{tabular}
\caption{Complexity of \textsc{TemplateKeyRecovery}
  (Algorithm~\ref{alg:template-key-recovery}). $W\le m$ is the number of distinct
  input differences.}
\label{tab:kr-complexity}
\end{table}

\noindent
\emph{Offline, once.} Collecting the $m$ heavy tuples costs $O(m/\mathrm{adv})$
\textsc{MACSample} draws ($O(1)$ quantum queries each), and building the signed
reference $v^\star$ on the reduced cipher costs $O(m/\tau)$ classical queries (or
$O(m/\sqrt\tau)$ via amplitude estimation, Section~\ref{sec:mac-attacks}). Both
are paid once and amortized over all $2^\kappa$ candidates.

\paragraph{Tradeoffs.}
Each $R_k^{-1}$ is a single-round map, so the attack recovers the $\kappa$-bit
subkey with $O(2^\kappa\,m/\tau)$ one-round operations on data collected once ---
far below the cost of exhaustive search over the full key, and the standard payoff
of a last-round attack. The mask budget $m$ trades linearly: more heavy tuples
sharpen the separation between the correct key and the equivalence class but raise
both time and memory in proportion. The threshold $\tau$ trades in the opposite
direction: a larger $\tau$ admits only stronger approximations (fewer pairs per
correlation, $N=1/\tau$ smaller) but requires the cipher to exhibit heavier
correlations, while a smaller $\tau$ exploits weaker structure at higher data
cost. The bounds above are expected costs; a rigorous treatment would bound the
variance of the energy and template estimators and derive the $N$ needed for
target false-positive and false-negative rates, which we leave to future work.

\subsection{A MAC-Fishing-Based Block Cipher Distinguisher and Key Recovery}
\label{sec:mac-attacks}

The attacks of this subsection use the quantum sampler of
Section~\ref{sec:macfishing} and therefore operate in the \emph{Q2} model, in
which the attacker has superposition (quantum) oracle access to the
cipher~\cite{kaplan-q1q2,boneh-zhandry-q1q2}: \textsc{MACSample} applies $U_B$ to a
uniform superposition, so the mask-finding step has no classical analogue. Where
we report a ``classical verification'' cost we mean replacing the
amplitude-estimation substep with classical sampling against the same oracle ---
the fishing step still requires Q2 access. 

\subsubsection{MAC-Fishing-Based Block Cipher Distinguisher}\label{sec:mac-dist}

The distinguisher separates a structured cipher from a random permutation in two
phases. Phase~1 runs \textsc{MACSample} (Algorithm~\ref{algo:cd_w}), which with a
constant number of queries returns a candidate $(\alpha,\beta)$ drawn
$\propto\cor(B,w;\alpha,\beta)^2$; for a structured cipher it lands in the heavy
set $\{\cor^2\ge\tau\}$ with probability $\mathrm{adv}:=\mathrm{adv}(w,N\tau)$.
Fishing alone does not certify the candidate --- a single draw could be a spurious
low-correlation pair --- so Phase~2 \emph{verifies} it by estimating
$\cor(B,w;\alpha,\beta)$ and testing $\cor^2\ge\tau$. The full distinguisher is assembled in
Algorithm~\ref{alg:mac-fishing-dist}.

\begin{algorithm}[h]
\caption{\textsc{MACFishingDistinguisher}}
\label{alg:mac-fishing-dist}
\begin{algorithmic}[1]
\Require quantum oracle $\mathcal{O}_B$; difference $w$; threshold $\tau$;
  precision $\varepsilon=\sqrt{\tau}$
\Ensure ``cipher'' or ``random permutation''
\State $(\alpha,\beta) \gets$ \textsc{MACSample}$(\mathcal{O}_B, w)$
  \Comment{Algorithm~\ref{algo:cd_w}, $O(1)$ quantum queries}
\State $\widehat{\cor} \gets \textsc{QuantumVerify}(\mathcal{O}_B, w, \alpha,
  \beta, \varepsilon)$
  \Comment{or \textsc{ClassicalVerify} (App.~\ref{app:classical-mac})}
\If{$\widehat{\cor}^2 \ge \tau$}
  \State \Return ``cipher''
\Else
  \State \Return ``random permutation''
\EndIf
\end{algorithmic}
\end{algorithm}

Verification is where the precision speed-up enters. To place the decision at
$\tau$ on the squared correlation we need $\cor(B,w;\alpha,\beta)$ to additive
error $\Theta(\sqrt\tau)$. Classically this costs $O(1/\tau)$ samples (Hoeffding);
quantumly, amplitude estimation reaches the same precision in $O(1/\sqrt\tau)$
queries. Algorithm~\ref{alg:quantum-verify} states the quantum verifier: it
phase-encodes $f_{w,\alpha,\beta}(x)=\alpha\cdot B(x)\oplus\beta\cdot B(x\oplus w)$
into a uniform superposition, so the amplitude on $\ket{+^n}$ is exactly
$\tfrac1{2^n}\sum_x(-1)^{f_{w,\alpha,\beta}(x)}=\cor(B,w;\alpha,\beta)$, and reads
that signed amplitude off by a Hadamard test with amplitude estimation. When only
classical query access is available the verification falls back to classical
sampling (\textsc{ClassicalVerify}, Appendix~\ref{app:classical-mac}) at the
quadratically worse cost $O(1/\tau)$.

\begin{algorithm}[h]
\caption{\textsc{QuantumVerify}}
\label{alg:quantum-verify}
\begin{algorithmic}[1]
\Require quantum oracle $\mathcal{O}_B$ for $B$; difference $w$; mask pair
  $(\alpha,\beta)$; precision $\varepsilon$
\Ensure signed estimate $\widehat{\cor}(B,w;\alpha,\beta)$ to additive error $\varepsilon$
\State define the Boolean phase function
  $f_{w,\alpha,\beta}(x) \gets \alpha\cdot B(x)\oplus\beta\cdot B(x\oplus w)$
\State prepare $\ket{\psi} \gets \tfrac{1}{2^{n/2}}\sum_{x}
  (-1)^{f_{w,\alpha,\beta}(x)}\ket{x}$
  \Comment{$O(1)$ calls to $\mathcal{O}_B$; $\langle +^n|\psi\rangle=\cor(B,w;\alpha,\beta)$}
\State estimate the \emph{signed} overlap $\langle +^n|\psi\rangle$ to additive
  error $\varepsilon$ by a Hadamard test with amplitude estimation
  \Comment{$O(1/\varepsilon)$ queries}
\State \Return $\widehat{\cor}(B,w;\alpha,\beta)$
\end{algorithmic}
\end{algorithm}

Completeness of the distinguisher rests on the fishing advantage; soundness rests on a random
permutation having no heavy pair to find. For uniformly random $B$ and
$w\neq0^n$, $\cor(B,w;\alpha,\beta)\sim\N(0,2^{-n})$, so for any fixed nonzero pair
\[
  \Pr[\cor(B,w;\alpha,\beta)^2\ge\tau]
  = \Pr\!\big[|\N(0,1)|\ge\sqrt{\tau\,2^n}\big]
  \;\le\; e^{-\tau\,2^n/2}
\]
by the Gaussian tail bound. A union bound over the $2^{2n}$ mask pairs gives
\[
  \Pr\!\big[\exists(\alpha,\beta):\cor(B,w;\alpha,\beta)^2\ge\tau\big]
  \;\le\; 2^{2n}\,e^{-\tau\,2^n/2},
\]
negligible once $\tau=\omega(\log N/N)$ (e.g.\ $\tau\ge 5n\,2^{-n}$). The heavy set
is then empty for a random permutation with high probability, Phase~2 rejects
whatever Phase~1 returns, and the false-positive rate is negligible --- the
threshold regime announced in the convention above.

\paragraph{Complexity of the MAC-Fishing distinguisher.}
\begin{proposition}[Complexity of \textsc{MACFishingDistinguisher}]
\label{prop:mac-dist}
Let $\mathrm{adv}=\mathrm{adv}(w,N\tau)$ be the probability that one
\textsc{MACSample} draw satisfies $\cor^2\ge\tau$, and let $\tau=\omega(\log N/N)$.
Algorithm~\ref{alg:mac-fishing-dist} uses $O(1)$ quantum queries in Phase~1 and
$O(1/\sqrt\tau)$ in Phase~2, succeeding per run with probability $\mathrm{adv}$.
Amplifying by $O(1/\mathrm{adv})$ repetitions gives total query complexity
\begin{equation}\label{eq:mac-dist-total}
  O\!\left(\frac{1}{\sqrt\tau\,\mathrm{adv}}\right)\ \text{(quantum)},
  \qquad
  O\!\left(\frac{1}{\tau\,\mathrm{adv}}\right)\ \text{(classical verification)}.
\end{equation}
\end{proposition}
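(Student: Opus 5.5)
The plan is to bound the cost of a single run of Algorithm~\ref{alg:mac-fishing-dist}, find the probability that one run certifies a heavy pair, and then amplify by independent repetition. The soundness side is already covered by the union bound just before the proposition. There are three steps.

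\emph{Phase~1.} By Theorem~\ref{lem:sample-prob}, \textsc{MACSample} makes at most three queries in expectation. To get a worst-case bound, cap the number of retries at a constant and accept the failure probability; alternatively, drop line~8 and use exactly two queries. Either way the output is distributed (up to the trivial pair) as $\CD_w$. A single draw therefore lands in the heavy set $V_w^{N\tau}$ with probability $\tfrac{1}{N^2}\sum_{(\alpha,\beta)\in V_w^{N\tau}} C_w(\alpha,\beta)^2=\mathrm{adv}(w,N\tau)$. This uses the convention $\cor^2\ge\tau \iff C_w^2\ge N\tau$.

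\emph{Phase~2.} \textsc{QuantumVerify} prepares $\ket{\psi}$ with $O(1)$ oracle calls, namely two calls to $B$ plus uncomputation. It then estimates the signed overlap $\langle +^n|\psi\rangle=\cor(B,w;\alpha,\beta)$ with a Hadamard test and amplitude estimation, to additive error $\varepsilon$ using $O(1/\varepsilon)$ invocations. Standard median amplification makes the confidence a constant close to $1$.

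The step I expect to need care is the choice of decision threshold. With $\varepsilon=\sqrt\tau$ exactly, a pair with $\cor^2$ just above $\tau$ might not be accepted. I would therefore set $\varepsilon=c\sqrt\tau$ for a small constant $c$ and compare $\widehat{\cor}^2$ against $(1-c)^2\tau$. This shifts the heavy threshold by a constant factor, which does not change the asymptotics. It also preserves soundness, because the union bound $2^{2n}e^{-c'\tau N}$ is still negligible when $\tau=\omega(\log N/N)$. So Phase~2 costs $O(1/\sqrt\tau)$ queries, and a single run outputs ``cipher'' on a structured $B$ with probability at least $\mathrm{adv}\cdot(1-\delta)=\Omega(\mathrm{adv})$.

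\emph{Amplification.} Repeat the run independently $O(1/\mathrm{adv})$ times and output ``cipher'' if any run does. The completeness failure probability is at most $(1-\Omega(\mathrm{adv}))^{O(1/\mathrm{adv})}$, which is a small constant. Soundness holds because, for a random permutation, with high probability no pair is heavy. Each run then wrongly accepts only with the verifier's error probability, which can be driven down to $O(\mathrm{adv})$ at a logarithmic cost. This makes the union over repetitions still small; I would suppress the logarithmic factor, or note that it appears. The total query count is $O(1/\mathrm{adv})\cdot(O(1)+O(1/\sqrt\tau))=O(1/(\sqrt\tau\,\mathrm{adv}))$.

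For classical verification, replace amplitude estimation with Hoeffding sampling of $\pm1$ values to error $c\sqrt\tau$. This needs $O(1/\tau)$ samples and gives $O(1/(\tau\,\mathrm{adv}))$.
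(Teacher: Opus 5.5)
Your proposal is correct and follows the paper's own argument: $O(1)$ queries to fish a candidate that is heavy with probability $\mathrm{adv}$, verification to precision $\Theta(\sqrt\tau)$ in $O(1/\sqrt\tau)$ quantum queries or $O(1/\tau)$ classical samples, the Gaussian union bound for soundness, and $O(1/\mathrm{adv})$ repetitions; your refinements (precision $c\sqrt\tau$ against the relaxed threshold $(1-c)^2\tau$, and stopping verifier errors from accumulating over the repetitions) fix points the paper's one-paragraph proof skips. Two small notes: in your variant without line~8 you must explicitly reject $(0^n,0^n)$, whose correlation is $1$ for every $B$, since it would pass verification even for a random permutation and it contributes $1/N$ to $\mathrm{adv}(w,N\tau)$; and the $\log(1/\mathrm{adv})$ overhead you flag can be avoided in expectation by re-verifying at increasing confidence only those candidates that pass a cheap first check.
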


\begin{proof}
Phase~1 returns a candidate in $O(1)$ queries (Algorithm~\ref{algo:cd_w}), heavy
with probability $\mathrm{adv}$. Phase~2 estimates $\cor(B,w;\alpha,\beta)$ to error
$\sqrt\tau$: amplitude estimation in $O(1/\sqrt\tau)$ queries, classical sampling
in $O(1/(\sqrt\tau)^2)=O(1/\tau)$. The random-permutation false-positive bound is
the union bound above. Repeating $O(1/\mathrm{adv})$ times amplifies success to a
constant and yields~\eqref{eq:mac-dist-total}.
\end{proof}

The factor $1/\mathrm{adv}$ is the binding cost --- the draws needed to land a heavy
pair --- and is small exactly when the cipher concentrates its correlation mass.
Phase~1 has no classical analogue: locating a heavy pair is the MAC-Fishing
problem, classically hard by Theorem~\ref{thm:MAC_lb}. That theorem is proved in
the dense regime $N\tau=\Theta(1)$, whereas the distinguisher runs at the sparser
$N\tau=\omega(\log N)$; there the classical obstruction remains the exponential
search over the mask space, though the precise $\Omega(N/\log N)$ bound is stated
only for the dense threshold.

\noindent Table~\ref{tab:dist-complexity-mac} summarizes the complexities.

\begin{table}[h]
\centering
\renewcommand{\arraystretch}{1.4}
\begin{tabular}{lcc}
\hline
\textbf{Phase} & \textbf{Quantum queries} & \textbf{Classical verification} \\
\hline
Phase 1: \textsc{MACSample} (Alg.~\ref{algo:cd_w}) & $O(1)$ & --- \\
Phase 2: correlation verification & $O(1/\sqrt\tau)$ & $O(1/\tau)$ \\
\hline
Single run, total & $O(1/\sqrt\tau)$ & $O(1/\tau)$ \\
With repetition (succ.\ $\ge 2/3$)
  & $O\!\left(\tfrac{1}{\sqrt\tau\,\mathrm{adv}}\right)$
  & $O\!\left(\tfrac{1}{\tau\,\mathrm{adv}}\right)$ \\
\hline
\multicolumn{3}{l}{$\mathrm{adv}=\mathrm{adv}(w,N\tau)$;\ false positive negligible
  for $\tau=\omega(\log N/N)$;\ false negative $1-\mathrm{adv}$.} \\
\hline
\end{tabular}
\caption{Query complexity of \textsc{MACFishingDistinguisher}. }
\label{tab:dist-complexity-mac}
\end{table}

\subsubsection{Key Recovery on the $(r+1)$-Round Cipher}\label{sec:mac-kr}

We extend the distinguisher to a last-round key recovery on the $(r+1)$-round
cipher $E_K$, in two phases. The \emph{offline} phase, run once, builds a
correlation \emph{template}: $m$ heavy tuples $(w_i,\alpha_i,\beta_i)$ verified to
lie in $\{\cor^2\ge\tau\}$ for the $r$-round reduced cipher $B$, with their signed
reference correlations $\cor_i^\star=\cor(B,w_i;\alpha_i,\beta_i)$. The
\emph{online} phase, per candidate subkey $k$, forms the partially decrypted
oracle $B_k(x)=R_k^{-1}(E_K(x))$ and scores how well its profile matches the
template. When $k=k^\star$, $B_{k^\star}=B$ and the profile matches; wrong guesses
randomize (wrong-key randomization), giving correlations at the $\N(0,2^{-n})$
baseline.

As in the classical case (Section~\ref{sec:cap-kr}), a single heavy mask suffices
to \emph{distinguish}, but key recovery must separate $k^\star$ from the
equivalence class of wrong keys that share its magnitudes (hence its unsigned
energy) while differing in sign. We score with the signed cosine similarity
\begin{equation}\label{eq:cosine-mac}
  \mathrm{Score}(k)
  = \frac{\langle v(k),v^\star\rangle}{\|v(k)\|\,\|v^\star\|},
  \qquad
  \langle v(k),v^\star\rangle
    = \sum_{i=1}^m \widehat{\cor}_i(k)\,\cor_i^\star ,
\end{equation}
where $v(k)=(\widehat{\cor}_1(k),\dots,\widehat{\cor}_m(k))$ and
$\widehat{\cor}_i(k)$ is the signed estimate of $\cor(B_k,w_i;\alpha_i,\beta_i)$
from \textsc{QuantumVerify} (with classical query access only, the scoring falls
back to \textsc{ClassicalKeyScore}, Appendix~\ref{app:classical-mac}). The signed
reference $v^\star$ carries the inner-key-dependent signs and is obtained offline
from the reduced cipher, as discussed in Section~\ref{sec:cap-kr}.

\paragraph{Offline phase.}
A \textsc{MACSample} draw is heavy with probability
$\mathrm{adv}=\mathrm{adv}(w,N\tau)$, so collecting $m$ verified heavy tuples takes
$O(m/\mathrm{adv})$ iterations, each $O(1)$ to fish and $O(1/\sqrt\tau)$ to verify;
the offline cost is $O\!\big(m/(\mathrm{adv}\sqrt\tau)\big)$ quantum queries to
$\mathcal{O}_B$, paid once (Algorithm~\ref{alg:template-establishment}).

\begin{algorithm}[h]
\caption{\textsc{TemplateEstablishment}}
\label{alg:template-establishment}
\begin{algorithmic}[1]
\Require quantum oracle $\mathcal{O}_B$ for the $r$-round reduced cipher; threshold
  $\tau$; template size $m$; precision $\varepsilon=\sqrt\tau$
\Ensure template $\mathcal{T}=\{(w_i,\alpha_i,\beta_i)\}_{i=1}^m$ and reference
  $v^\star=(\cor_1^\star,\dots,\cor_m^\star)$
\State $\mathcal{T}\gets\emptyset$, \quad $v^\star\gets()$
\Repeat
  \State $(w,\alpha,\beta)\gets$ \textsc{MACSample}$(\mathcal{O}_B)$
    \Comment{joint sampler, Section~\ref{sec:macfishing}}
  \State $\widehat{\cor}\gets\textsc{QuantumVerify}(\mathcal{O}_B,w,\alpha,\beta,\varepsilon)$
  \If{$\widehat{\cor}^2\ge\tau$ \textbf{and} $(w,\alpha,\beta)\notin\mathcal{T}$}
    \State append $(w,\alpha,\beta)$ to $\mathcal{T}$ and $\widehat{\cor}$ to $v^\star$
  \EndIf
\Until{$|\mathcal{T}|=m$}
\State \Return $\mathcal{T}$, $v^\star$
\end{algorithmic}
\end{algorithm}

\paragraph{Online phase.}
Given $(\mathcal{T},v^\star)$, each candidate $k$ is scored by estimating $v(k)$ on
$B_k$ and applying~\eqref{eq:cosine-mac}. An unsigned energy pre-filter
$\widehat{\mathrm{adv}}(k)=\tfrac1m\sum_i\widehat{\cor}_i(k)^2\ge\tau$ discards the
non-structured guesses (template correlations at the $2^{-n}$ baseline), and the
survivors $\mathcal{K}'$ are re-ranked by the signed score.

\paragraph{The Attack} Algorithm~\ref{alg:mac-fishing-kr} assembles the offline and online
phases into complete attack
\begin{algorithm}[h]
\caption{\textsc{MACFishingKeyRecovery}}
\label{alg:mac-fishing-kr}
\begin{algorithmic}[1]
\Require $(r+1)$-round quantum oracle $\mathcal{O}_{E_K}$; $r$-round oracle
  $\mathcal{O}_B$ (offline); candidate subkeys $\mathcal{K}$, $|\mathcal{K}|=2^\kappa$;
  threshold $\tau$; template size $m$; precision $\varepsilon=\sqrt\tau$
\Ensure ranked list of last-round subkey candidates
\State $(\mathcal{T},v^\star)\gets\textsc{TemplateEstablishment}(\mathcal{O}_B,\tau,m,\varepsilon)$
  \Comment{offline}
\For{each $k\in\mathcal{K}$}
  \State $v(k)\gets ()$
  \For{$i = 1$ \textbf{to} $m$}
    \State define $\mathcal{O}_{B_k}$ for $B_k(x)=R_k^{-1}(E_K(x))$
    \State $\widehat{\cor}_i(k)\gets\textsc{QuantumVerify}(\mathcal{O}_{B_k},w_i,\alpha_i,\beta_i,\varepsilon)$;
      append to $v(k)$
      \Comment{classical: \textsc{ClassicalKeyScore} (App.~\ref{app:classical-mac})}
  \EndFor
  \State $\widehat{\mathrm{adv}}(k)\gets\tfrac1m\sum_i\widehat{\cor}_i(k)^2$;\quad
    $\mathrm{Score}(k)\gets\langle v(k),v^\star\rangle/(\|v(k)\|\,\|v^\star\|)$
\EndFor
\State $\mathcal{K}'\gets\{\,k\in\mathcal{K}:\widehat{\mathrm{adv}}(k)\ge\tau\,\}$
  \Comment{energy pre-filter}
\State rank $\mathcal{K}'$ in descending order of $\mathrm{Score}(k)$
\State \Return top-ranked candidates from $\mathcal{K}'$
\end{algorithmic}
\end{algorithm}

\paragraph{Complexity of the MAC-Fishing Key Recovery.}
\begin{proposition}[Complexity of \textsc{MACFishingKeyRecovery}]
\label{prop:mac-kr}
With $\varepsilon=\sqrt\tau$ and $\mathrm{adv}=\mathrm{adv}(w,N\tau)$, the offline
phase costs $O\!\big(m/(\mathrm{adv}\sqrt\tau)\big)$ quantum queries to
$\mathcal{O}_B$ and the online phase $O(2^\kappa m/\sqrt\tau)$ quantum queries to
$\mathcal{O}_{E_K}$, for a total of
\begin{equation}\label{eq:mac-kr-total}
  O\!\left(\Big(2^\kappa+\tfrac{1}{\mathrm{adv}}\Big)\frac{m}{\sqrt\tau}\right)
  \ \text{(quantum)},
  \qquad
  O\!\left(\Big(2^\kappa+\tfrac{1}{\mathrm{adv}}\Big)\frac{m}{\tau}\right)
  \ \text{(classical verification)}.
\end{equation}
When $1/\mathrm{adv}\le 2^\kappa$ the offline term is dominated and the total is
$O(2^\kappa m/\sqrt\tau)$; the quantum improvement over classical verification is
the quadratic $1/\sqrt\tau$ factor, with the $2^\kappa$ enumeration unchanged
(Remark~\ref{grrem}).
\end{proposition}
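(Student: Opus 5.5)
The bound splits into an offline term and an online term, and I would charge each to results already stated, then add them.

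\emph{Offline phase.} This is Algorithm~\ref{alg:template-establishment}. One iteration makes one call to the joint \textsc{MACSample} sampler, which costs $O(1)$ quantum queries by Theorem~\ref{lem:sample-prob}; the extension to a superposition over $w$ adds no queries. It then makes one call to \textsc{QuantumVerify} with $\varepsilon=\sqrt\tau$, which costs $O(1/\sqrt\tau)$ queries by amplitude estimation. So each iteration costs $O(1/\sqrt\tau)$.

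Each draw is heavy with probability $\mathrm{adv}$, so the number of draws needed to collect $m$ verified heavy tuples is a negative-binomial-type waiting time with expectation $O(m/\mathrm{adv})$. A Chernoff bound makes this hold with constant probability.

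The one subtlety is the condition $(w,\alpha,\beta)\notin\mathcal{T}$, which rejects repeated tuples. I would handle it with a mild assumption: either the heavy mass is spread so that no single tuple carries more than a constant fraction of $\mathrm{adv}$, or $m$ is at most the number of distinct heavy tuples with comparable weight. Under that assumption, repeats cost only a constant factor. Verification errors would be absorbed by running amplitude estimation with a constant-factor margin and $O(\log m)$ boosting, treated as a hidden constant, as the paper does elsewhere. The offline total is then $O(m/(\mathrm{adv}\sqrt\tau))$.

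\emph{Online phase.} For each of the $2^\kappa$ candidates $k$ and each of the $m$ template tuples, the attack builds $\mathcal{O}_{B_k}$. This uses one call to $\mathcal{O}_{E_K}$ plus $R_k^{-1}$, which is key-guess computation and needs no query, so one query to $\mathcal{O}_{B_k}$ costs $O(1)$ queries to $\mathcal{O}_{E_K}$. The attack then runs \textsc{QuantumVerify} to precision $\sqrt\tau$ at $O(1/\sqrt\tau)$ queries. That gives $O(2^\kappa m/\sqrt\tau)$ in total. The energy filter and the cosine ranking are classical post-processing and add no queries.

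\emph{Combining and the classical column.} Summing the two phases gives $O\big((2^\kappa+1/\mathrm{adv})\,m/\sqrt\tau\big)$. Replacing every \textsc{QuantumVerify} with classical Hoeffding sampling turns each $1/\sqrt\tau$ into $1/\tau$, which gives the classical-verification expression. The fishing step keeps its $O(1)$ Q2 cost and stays dominated.

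If $1/\mathrm{adv}\le 2^\kappa$, the offline term is at most the online term, so the total is $O(2^\kappa m/\sqrt\tau)$. The ratio between the classical and quantum expressions is exactly $1/\sqrt\tau$, and the $2^\kappa$ loop appears unchanged in both. That is the quadratic-improvement claim, with Grover-style speedup of the enumeration deferred to Remark~\ref{grrem}.

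The main obstacle is the offline count, specifically turning "heavy with probability $\mathrm{adv}$" into "$m$ \emph{distinct} verified heavy tuples in $O(m/\mathrm{adv})$ draws". This needs both the distinctness assumption above and a per-tuple verification success probability boosted so that a union bound over the $O(m/\mathrm{adv})$ draws holds. I would state the result as an expected (or constant-probability) bound, which is the level at which the proposition is phrased.
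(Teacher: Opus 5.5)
Your proposal is correct and follows the paper's own argument: $O(m/\mathrm{adv})$ offline fish-and-verify iterations at $O(1)+O(1/\sqrt\tau)$ queries each, plus $2^\kappa m$ online calls to \textsc{QuantumVerify} at $O(1/\sqrt\tau)$ each, summed, with $1/\tau$ replacing $1/\sqrt\tau$ for classical verification. The paper's proof is only this bookkeeping and does not treat the distinctness test $(w,\alpha,\beta)\notin\mathcal{T}$, the decision margin at the threshold, or boosting of the verifier, so your extra assumptions make the bound more honest rather than departing from it (one imprecision you share with the paper: with the joint sampler over $w$, the per-draw heavy probability is really $\tfrac1N\sum_w\mathrm{adv}(w,N\tau)$, not $\mathrm{adv}(w,N\tau)$ for a fixed $w$).
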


\begin{proof}
Template establishment performs $O(m/\mathrm{adv})$ fishing iterations to collect
$m$ verified heavy tuples (each draw heavy with probability $\mathrm{adv}$), at
$O(1)$ to fish and $O(1/\sqrt\tau)$ to verify per iteration, giving
$O(m/(\mathrm{adv}\sqrt\tau))$. Online, \textsc{QuantumVerify} estimates the $m$
fixed template correlations per key at $O(1/\sqrt\tau)$ each over $2^\kappa$ keys
(no fishing), giving $O(2^\kappa m/\sqrt\tau)$; classically each estimate is
$O(1/\tau)$. Summing yields~\eqref{eq:mac-kr-total}.
\end{proof}

\begin{table}[h]
\centering
\renewcommand{\arraystretch}{1.4}
\begin{tabular}{lcc}
\hline
\textbf{Phase} & \textbf{Quantum queries} & \textbf{Classical verification} \\
\hline
Offline: template establishment & $O(m/(\mathrm{adv}\sqrt\tau))$ & $O(m/(\mathrm{adv}\tau))$ \\
Online: score each $k\in\mathcal{K}$ & $O(2^\kappa m/\sqrt\tau)$ & $O(2^\kappa m/\tau)$ \\
Online: ranking & $O(|\mathcal{K}'|)$ & $O(|\mathcal{K}'|)$ \\
\hline
\textbf{Total} & $O\!\big((2^\kappa+\tfrac1{\mathrm{adv}})\,m/\sqrt\tau\big)$
  & $O\!\big((2^\kappa+\tfrac1{\mathrm{adv}})\,m/\tau\big)$ \\
\hline
\end{tabular}
\caption{Query complexity of \textsc{MACFishingKeyRecovery}.}
\label{tab:mac-kr-complexity}
\end{table}

\begin{remark}[Grover-accelerated key search]\label{grrem}
In the Q2 model the linear scan over $\mathcal{K}$ can be replaced by Grover search
with a marking oracle testing the energy pre-filter
$\widehat{\mathrm{adv}}(k)\ge\tau$, reducing the enumeration from $2^\kappa$ to
$\sqrt{2^\kappa}$ for a combined $\sqrt{2^\kappa/\tau}$ speed-up. Two caveats.
First, the energy test is sign-blind, so Grover returns a member of the
high-energy equivalence class rather than $k^\star$; the signed
disambiguation~\eqref{eq:cosine-mac} must still be applied among the marked
candidates. Second, the marking oracle nests amplitude estimation --- a
bounded-error, measurement-based subroutine --- inside Grover, so it must be made
coherent and amplified to failure probability $o(1/\sqrt{2^\kappa})$ (e.g.\ by
probability filtering~\cite{bera_low-space_2024} or a QSVT
reformulation~\cite{Gilyen2019}) for the diffusion to remain valid. We leave a
rigorous treatment of this composition to future work and state the
amplitude-estimation-only bound as our result.
\end{remark}

%% file: final/appendix_lower-bound-proof.tex
\section{Proof of Theorem~\ref{thm:MAC_lb}}\label{sec:proof-MACfishing-lb}

\paragraph{Preface:}
The proof of the theorem closely follows the steps of Theorem 9 of Aaronson and Chen~\cite{aaronson-chen}. That theorem shows a similar hardness for the promise Fourier Fishing problem. In that problem the objective is to ``fish'' some $z$ with a high-valued Fourier coefficient $|\hat{f}(z)|^2 \ge 1$; in contrast, the objective of MAC Fishing is to output some $(\alpha, \beta)$ such that $|C_w(\alpha, \beta)|^2 \ge \tau$.

Below we present the proof for the sake of completeness; in particular, we point out the differences that arise due to masked cross-correlation. We follow a similar line of reasoning and many steps and results will seem repetitive -- those are included in entirety to have the entire proof at one place and further, allow a reader to apply the approach on other similar sampling problems.

\begin{proof}[Proof of Theorem]
The entire proof relies on Lemma~\ref{lemma:lem30} and Lemma~\ref{lemma:lem31} which are given below.

By way of contradiction, suppose there is a randomized algorithm $A$ that solves the promise version of MAC Fishing with $Succ_R + \Omega(1)$ probability yet makes only $o(N/\log N)$ queries to $B$.

Lemma~\ref{lemma:lem30} says that when $B$ is randomly chosen from all $n$-bit permutations, $\adv(w,\tau)$ is at least $Succ_Q - \tfrac{1}{n}$ with at least $1-\tfrac{48.5n^2}{N^2}$ probability. Thus, $1-o(1)$ fraction of $n$-bit permutations satisfy $\adv(w,\tau) \ge Succ_Q - \tfrac{1}{n}$. These fractions satisfy the promise stated in the theorem, and thus, $A$ can solve all of them using $o(N/\log N)$ queries and with a success probability $Succ_Q + \Omega(1)$.

Thus, when $B$ is randomly chosen from all $n$-bit permutations, as is done in Lemma~\ref{lemma:lem31}, $A$ can solve the MAC Fishing problem on it using $o(N/\log N)$ queries with probability at least
$$(1-o(1)) \cdot (Succ_R + \Omega(1)) = Succ_R + \Omega(1).$$

This contradicts Lemma~\ref{lemma:lem31}.
\end{proof}

Next we present the key technical lemmas required for the above theorem. We require the following facts and notations to simplify certain expressions. $\tau$ should be considered as  positive in this section.

\begin{itemize}
    \item $\displaystyle \frac{1}{\sqrt{2\pi}}\int_{-\infty}^{t} 
    x^2 e^{-x^2/2}\,dx = \Phi(t) - t \phi(t)$, where $\Phi(t)$ is the standard normal CDF and $\phi(t)$ is the standard normal PDF
    \item (Total second moment) $\displaystyle \hat{\Phi}(t) := \tfrac{1}{\sqrt{2\pi}} \int_{-\infty}^{t} x^2 e^{-x^2/2} dx = \tfrac{1}{\sqrt{2\pi}} \int_{-t}^{+\infty} x^2 e^{-x^2/2} dx = \Phi(t) - \tfrac{t}{\sqrt{2\pi}} e^{-t^2/2}$
    \item $\displaystyle \mathrm{Succ}_Q(\tau) 
    = \frac{2}{\sqrt{2\pi}}\int_{\sqrt{\tau}}^{\infty} 
    x^2 e^{-x^2/2}\,dx
    = 2\left(1 - \hat{\Phi}(\sqrt{\tau})\right)$ which takes values in $(0,2)$. We will use the fact that $\mathrm{Succ}_Q(\tau)$ is decreasing in $\tau$.
    \item $\displaystyle \Delta(\tau) 
    = 4\left(1-\hat{\Phi}\!\left(\sqrt{\tau/2}\right)\right)
    - 2\left(1-\hat{\Phi}(\sqrt{\tau})\right)
    = 2+ 2\hat{\Phi}(\sqrt{\tau}) 
    - 4\hat{\Phi}\!\left(\sqrt{\tau/2}\right)$
    \begin{itemize}
        \item $\Delta(\tau) > 0$. This is since $\Delta(\tau)$
            \begin{align*}
    & = 4\left(1-\hat{\Phi}(\sqrt{\tau/2})\right)
    - 2\left(1-\hat{\Phi}(\sqrt{\tau})\right) \\
    & = 2+2\hat{\Phi}(\sqrt{\tau}) 
    - 4\hat{\Phi}\!\left(\sqrt{\tau/2}\right) \\
    & = 2\left(1 -  \hat{\Phi}(\sqrt{\tau/2})\right) + 2 \left( \hat{\Phi}(\sqrt{\tau}) - \hat{\Phi}(\sqrt{\tau/2})\right) \\
    & > 0 \tag{$\because$ $\hat{\Phi}$ is monotonically increasing and $\sqrt{\tau/2} < \sqrt{\tau}$}
    \label{eq:delta_tau}
\end{align*}
        \item $\Delta(\tau) < 2$. This is since (i) $\hat{\Phi}(t) < 1$, and so $2 \hat{\Phi}(\sqrt{\tau}) < 2$, and (ii) since the second moment involves an even function, for positive $\tau$, $\hat{\Phi}(\sqrt{\tau/2}) \ge 1/2$, and so $4\hat{\Phi}(\sqrt{\tau/2}) \ge 2$.
    \end{itemize}
\end{itemize}



\begin{lemma}\label{lemma:lem30}
    The following holds for large $n$, positive $\tau$, and a randomly chosen permutation $B$.
    \begin{enumerate}
        \item $\displaystyle \E_B[\adv(w,\tau)] =    \mathrm{Succ}_Q(\tau) + \tfrac{\Delta(\tau)}{N} \text{ which is between } \mathrm{Succ}_Q(\tau)$ and $\mathrm{Succ}_Q(\tau) + 2/N$.
        \item $\displaystyle \Pr_{B \leftarrow \mathcal{P}_n} \left[ \adv(w,\tau) \le \mathrm{Succ}_Q(\tau) - \tfrac{1}{n} \text{ OR } \adv(w,\tau) \ge \mathrm{Succ}_Q(\tau) + \tfrac{2}{n} \right] \le \tfrac{48.5n^2}{N}$.
    \end{enumerate}
\end{lemma}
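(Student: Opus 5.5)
The plan follows Aaronson and Chen's Lemma~30, with Walsh coefficients replaced by the masked auto-correlations. I work under the same Gaussian heuristic as Lemmas~\ref{lem:cd_w_normal_dist_case2} and~\ref{moments}: for a random permutation $B$ and $w\neq 0^n$, I treat $C_w(\alpha,\beta)$ as $\N(0,1)$ when $\alpha\neq\beta$ and as $\N(0,2)$ when $\alpha=\beta\neq0^n$. I write
\[
\adv(w,\tau)=\tfrac{1}{N^2}\sum_{\alpha,\beta}Y_{\alpha,\beta},\qquad Y_{\alpha,\beta}=C_w(\alpha,\beta)^2\,\mathbf{1}[C_w(\alpha,\beta)^2\ge\tau],
\]
and prove part~1 by linearity of expectation and part~2 by Chebyshev's inequality.

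\textbf{Part 1.} For $Z\sim\N(0,1)$, the listed identity for $\hat\Phi$ and the evenness of $x^2e^{-x^2/2}$ give $\E[Z^2\mathbf 1[Z^2\ge s]]=2(1-\hat\Phi(\sqrt s))$. Hence an off-diagonal pair contributes $\E[Y_{\alpha,\beta}]=\mathrm{Succ}_Q(\tau)$. Writing $C_w(\alpha,\alpha)=\sqrt2 Z$, a diagonal pair contributes $2\E[Z^2\mathbf 1[Z^2\ge\tau/2]]=4(1-\hat\Phi(\sqrt{\tau/2}))$.

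I adopt the bookkeeping of Lemma~\ref{lem:vtaulemma}: $N^2-N$ off-diagonal pairs and $N$ diagonal pairs. The pairs with exactly one zero mask, and the pair $(0^n,0^n)$, are deterministic by Lemma~\ref{lem:zero-masks}. They change the sum only by $O(1/N)$ terms, which I absorb into the same $O(1/N)$ correction. Summing gives
\[
\E_B[\adv]=\tfrac{N^2-N}{N^2}\mathrm{Succ}_Q(\tau)+\tfrac{N}{N^2}\cdot 4\bigl(1-\hat\Phi(\sqrt{\tau/2})\bigr)=\mathrm{Succ}_Q(\tau)+\tfrac{\Delta(\tau)}{N}.
\]
The bounds $0<\Delta(\tau)<2$ established just before the lemma then place this between $\mathrm{Succ}_Q(\tau)$ and $\mathrm{Succ}_Q(\tau)+2/N$.

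\textbf{Part 2.} By part~1, for large $n$ both events force $|\adv-\E\adv|\ge 1/n-2/N\ge 1/(1.01n)$. Chebyshev then gives probability at most about $1.02\,n^2\Var_B[\adv]$, so it suffices to show $\Var_B[\adv]\le c/N$ for a constant $c$ with $1.02c\le 48.5$. I expand
\[
\Var[\adv]=\tfrac1{N^4}\sum_{(\alpha,\beta),(\alpha',\beta')}\mathrm{Cov}(Y_{\alpha,\beta},Y_{\alpha',\beta'})
\]
and split the pairs of mask pairs into two classes.
\begin{itemize}
\item Related pairs: those sharing a coordinate ($\alpha=\alpha'$ or $\beta=\beta'$), or involving a diagonal term. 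There are $O(N^3)$ of them. For each I use Cauchy--Schwarz, $|\mathrm{Cov}|\le\sqrt{\E[Y^2]\E[Y'^2]}\le\sqrt{\E C^4\,\E C'^4}\le 12$, since $\E C^4\in\{3,12\}$ by Lemma~\ref{moments}. Their total contribution is $O(N^3)/N^4=O(1/N)$ with an explicit constant.
\item Unrelated pairs: for these I invoke the independence already used in Lemma~\ref{lem:vtaulemma}, so the covariance vanishes, or is at least $O(1/N)$ per pair by a fourth-moment computation on the $\pm1$ sums. Either way the total is again $O(1/N)$.
\end{itemize}
Collecting the explicit constants should give $\Var\le 47.5/N$ or so, and hence the stated $48.5n^2/N$.

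\textbf{Main obstacle.} The hard step is controlling these covariances for a random \emph{permutation}, where the coefficients are not truly independent. For pairs sharing one mask, I would first try to compute $\E[C_w(\alpha,\beta)^2C_w(\alpha',\beta')^2]$ directly as a fourfold character sum over $x,y,x',y'$. Only index patterns where the four output differences cancel survive, and I would count these up to $O(1/N)$ error. Failing that, I would fall back on the crude Cauchy--Schwarz bound for all $O(N^3)$ related pairs, which already suffices for an $O(n^2/N)$ bound. The remaining task is then only to track the constant.
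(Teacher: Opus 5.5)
Your Part~1 matches the paper's computation, including its bookkeeping of $N^2-N$ off-diagonal pairs as $\N(0,1)$ and $N$ diagonal ones as $\N(0,2)$. One caveat: you cannot ``absorb'' the deterministic zero-mask pairs and keep the exact identity. Counting them honestly changes the $1/N$ coefficient from $\Delta(\tau)$ to $1+2\,\mathrm{Succ}_Q(\tau/2)-3\,\mathrm{Succ}_Q(\tau)$. So only the sandwich between $\mathrm{Succ}_Q(\tau)$ and $\mathrm{Succ}_Q(\tau)+2/N$ survives; the paper makes the same simplification.

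Part~2 takes a genuinely different route. The paper computes $\E_B[\adv^2]$ directly:
\begin{itemize}
\item the coincident term $(\alpha,\beta)=(\alpha',\beta')$ via the truncated fourth moments $K_1\le 7$, $K_2\le 16.2$;
\item every distinct pair via the agree/disagree split $C_w(\alpha,\beta)=C_A+C_D$, $C_w(\alpha',\beta')=C_A-C_D$, with $C_A,C_D$ independent Gaussians and an explicit double integral;
\item then it subtracts $\E[\adv]^2$.
\end{itemize}
You expand into covariances and need only the marginal moments of Lemma~\ref{moments} plus a count. Cauchy--Schwarz absorbs arbitrary dependence on an $O(N^3)$ class, and independence is invoked only elsewhere. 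Done consistently, the paper's route would make distinct pairs independent in the Gaussian model and give the sharper $\Var[\adv]=O(1/N^2)$. Yours needs no joint law and is the sturdier argument. The paper's chain as written is fragile: it subtracts $(2\,\mathrm{Succ}_Q(\tau/4))^2$ using $\E_B[\adv]\ge 2\,\mathrm{Succ}_Q(\tau/4)$, which already fails at $\tau=1$ ($0.80$ versus about $1.94$).

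One claim in your split is false as stated, though the repair costs nothing. Substituting $x\mapsto x\xor w$ gives $C_w(\beta,\alpha)=C_w(\alpha,\beta)$ identically. So for $\alpha\neq\beta$, $\mathrm{Cov}(Y_{\alpha,\beta},Y_{\beta,\alpha})=\Var Y_{\alpha,\beta}=\Theta(1)$. These mirror pairs share no coordinate and neither is diagonal, so they fall into your ``unrelated'' class. There you assert the covariance vanishes or is $O(1/N)$, and the independence you borrow from Lemma~\ref{lem:vtaulemma} fails exactly for them. Move the $N^2-N$ mirror pairs into the related class: each covariance is at most $3$, contributing $O(1/N^2)$ in total. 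This is in fact the genuinely dependent family. Pairs sharing a single mask, the ones you worried about, have no exactly cancelling index patterns in the fourfold character sum and are only weakly correlated.

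You also need to pin down the constant rather than estimate ``$47.5/N$ or so''. The uniform bound $12$ on all $\approx 4N^3$ related pairs gives $48/N$, and your extra factor $1.02$ pushes the result past $48.5$. Either fix works:
\begin{itemize}
\item bound by type, $\sqrt{\E C^4\,\E C'^4}\in\{3,6,12\}$, which gives $\Var[\adv]\le 18/N+O(1/N^2)$;
\item or note that both events already force a deviation of at least $1/n$, since $2/n-2/N\ge 1/n$.
\end{itemize}
Either way $48.5\,n^2/N$ follows, but only on your independence branch. The fallback ``$O(1/N)$ per unrelated pair'' would feed an uncomputed constant straight into the $1/N$ coefficient.
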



\begin{proof}
    Let's first compute $\E[\adv(w,\tau)]$.
    
    \paragraph{Distribution of $\adv$:} Recall that $C_w(\alpha,\beta)$ behaves like $\N(0,1)$ if $\alpha \not= \beta$ and like $\N(0,2)$ if $\alpha = \beta$ (Lemma~\ref{lem:cd_w_normal_dist_case2}).
    
    Next define 
    $$\delta(\alpha,\beta) =
    \begin{cases}
    1 & \text{ if } (\alpha,\beta) \in V^\tau_w\\
    0 & \text{ otherwise 0}.
    \end{cases}$$
    
    This allows us to write
    \begin{align*}
        \adv(w,\tau) &= \E_{\alpha,\beta} [C_w(\alpha,\beta)^2 \delta(\alpha,\beta)], ~
    \mathrm{and} \\
    \E_B[\adv(w,\tau)] &= \E_B \E_{\alpha,\beta} [C_w(\alpha,\beta)^2 \delta(\alpha,\beta)]\\
    & = \E_{\alpha,\beta} \E_B [C_w(\alpha,\beta)^2 \delta(\alpha,\beta)].
    \end{align*}
    
    Since $C_w(.,.)$ behaves like a normal variable and $\delta(\alpha, \beta)=1$ iff that normally distributed variable is outside of $[-\sqrt{\tau},\sqrt{\tau}]$, $C_w(\alpha,\beta)\delta(\alpha,\beta)$ is random variable that behaves like $\N(0,d)$ outside of the range and 0 inside; here, $d=1$ if $\alpha \not= \beta$ and 2 if $\alpha=\beta$.

    \paragraph{Expectation of $\adv$:} Consider the case of $\alpha \not= \beta$ which happens with probability $1- \tfrac{1}{N}$.
    \begin{align*}
        \mathbb{E}_B\!\left[C_w(\alpha,\beta)^2\,\delta(\alpha,\beta)
    \right]\bigg|_{\alpha \neq \beta}
    &= \frac{1}{\sqrt{2\pi}}\int_{|x| \geq \sqrt{\tau}} 
    x^2\,e^{-x^2/2}\,dx 
    = \frac{2}{\sqrt{2\pi}}
    \int_{\sqrt{\tau}}^{\infty} x^2\,e^{-x^2/2}\,dx = Succ_Q(\tau).
    \end{align*}

    Next, consider the case of $\alpha = \beta$ which happens with probability $\tfrac{1}{N}$. Here, $C_w(\alpha,\beta) \sim \mathcal{N}(0,2)$ with probability density $\frac{1}{\sqrt{4\pi}}e^{-x^2/4}$.
    

    \begin{align*}
    \mathbb{E}_B\!\left[
    C_w(\alpha,\beta)^2\,\delta(\alpha,\beta)
    \right]\bigg|_{\alpha = \beta}
    &= \frac{2}{\sqrt{4\pi}}
    \int_{\sqrt{\tau}}^{\infty} 
    x^2\,e^{-x^2/4}\,dx \notag\\
    &= \frac{2}{\sqrt{4\pi}}
    \int_{\sqrt{\tau/2}}^{\infty}
    2s^2\,e^{-s^2/2}\cdot\sqrt{2}\,ds \notag\\
    &= \frac{4\sqrt{2}}{\sqrt{4\pi}}\int_{\sqrt{\tau/2}}^{\infty}
    s^2\,e^{-s^2/2}\,ds\\
    &= 4\left(1 - \hat{\Phi}\!\left(
    \sqrt{\tau/2}\right)\right).
    \label{eq:exp_eq}
\end{align*}

    Thus, 
    \begin{align}
    \mathbb{E}_B[\mathrm{adv}(w,\tau)]
    &= \frac{N-1}{N}\cdot
    \mathrm{Succ}_Q(\tau)
    + \frac{1}{N}\cdot 
    4\left(1-\hat{\Phi}\!\left(\sqrt{\tau/2}\right)\right)
    \notag\\
    &= \mathrm{Succ}_Q(\tau) 
    + \frac{\Delta(\tau)}{N},
    \end{align}

Recall that,
    \begin{align*}
    \Delta(\tau) 
    & = 4\left(1-\hat{\Phi}(\sqrt{\tau/2})\right)
    - 2\left(1-\hat{\Phi}(\sqrt{\tau})\right) \\
    & = 2+2\hat{\Phi}(\sqrt{\tau}) 
    - 4\hat{\Phi}\!\left(\sqrt{\tau/2}\right) \\
    & = 2\left(1 -  \hat{\Phi}(\sqrt{\tau/2})\right) + 2 \left( \hat{\Phi}(\sqrt{\tau}) - \hat{\Phi}(\sqrt{\tau/2})\right) \\
    & > 0 \tag{$\because$ $\hat{\Phi}$ is monotonically increasing and $\sqrt{\tau/2} < \sqrt{\tau}$}
    \label{eq:delta_tau}
\end{align*}

Therefore, $\mathbb{E}[\mathrm{adv}(w,\tau)] 
> \mathrm{Succ}_Q(\tau)$.

    \paragraph{Variance of $\adv$:} Next, we will compute $\Var[\adv(w,\tau)] = \E[\adv(w,\tau)^2] - \E[\adv(w,\tau)]^2$. We already have the subtrahend:
    $$\E[\adv(w,\tau)]^2 \ge  Succ_Q^2(\tau)$$

    So, let's compute the minuend $\E[\adv(w,\tau)^2]$.

    \begin{align*}
        \E_B[\adv(w,\tau)^2] & = \E_B \large( \E_{\alpha,\beta} [C_w(\alpha,\beta)^2 \delta(\alpha,\beta)] \large)^2 \\
        & = \E_B \Big( \E_{\alpha,\beta,\alpha',\beta'} [C_w(\alpha,\beta)^2 \delta(\alpha,\beta) C_w(\alpha',\beta')^2 \delta(\alpha',\beta')] \Big)\\
        & = \E_{\alpha,\beta,\alpha',\beta'} \E_B [C_w(\alpha,\beta)^2 \delta(\alpha,\beta) C_w(\alpha',\beta')^2 \delta(\alpha',\beta')]
    \end{align*}

    Let's focus on $\E_B [C_w(\alpha,\beta)^2 \delta(\alpha,\beta) C_w(\alpha',\beta')^2 \delta(\alpha',\beta')]$.

There are multiple possibilities here.

\paragraph{Case $(\alpha, \beta) = (\alpha', \beta')$:} In this case, which happens with probability $1/N^2$, $C_w(\alpha',\beta') = C_w(\alpha, \beta)$ and $\delta(\alpha,\beta) = \delta(\alpha', \beta')$. 
If $\alpha=\beta$, $C_w(\alpha,\beta) \sim \N(0,2)$, otherwise, $C_w(\alpha, \beta) \sim \N(0,1)$. Using the fact that $3\sigma^4$ is the fourth moment of $\N(0,\sigma^2)$, we can bound the desired expectation for different sub-cases.

Thus,
\begin{align*}
    \E_B [C_w(\alpha,\beta)^2 \delta(\alpha,\beta)C_w(\alpha',\beta')^2 \delta(\alpha',\beta')] & = \E_B [C_w(\alpha,\beta)^4 \delta(\alpha,\beta)] \le 12.
\end{align*}


\textit{Sub-case}~$\alpha \neq \beta:$
\begin{align*}
    \E_B [C_w(\alpha,\beta)^2 \delta(\alpha,\beta)C_w(\alpha',\beta')^2 \delta(\alpha',\beta')]
    &= \frac{2}{\sqrt{2\pi}}
    \int_{\sqrt{\tau}}^{\infty} x^4\,e^{-x^2/2}\,dx\\
    &= \frac{2\tau^{3/2} \,e^{-\tau/2}}{\sqrt{2\pi}} 
    + 6\left(1 - \hat{\Phi}(\sqrt{\tau})\right)\\
    &= \frac{2\tau^{3/2}\,e^{-\tau/2}}{\sqrt{2\pi}} 
    + 3\cdot Succ_Q(\tau)\\
    &=K_1(\tau) \tag{Definition of $K_1(\tau)$}
\end{align*}
where, the integration by parts is done using $u=x^3, v=-e^{x^2/2}$.
Using that fact that $\mathrm{Succ}_Q(\tau) \le 2$ and $\tau^{3/2} e^{-\tau/2}$ attains a maximum value less than 1.2 for positive $\tau$, it can be verified that $K_1(\tau) \le 7$.

\textit{Sub-case}~$\alpha = \beta:$
\begin{align*}
    \E_B [C_w(\alpha,\beta)^2 \delta(\alpha,\beta)C_w(\alpha',\beta')^2 \delta(\alpha',\beta')]
    &= \frac{2}{\sqrt{4\pi}}
    \int_{\sqrt{\tau}}^{\infty} x^4\,e^{-x^2/4}\,dx 
    \label{eq:K2_def}\\
    &= \frac{4\sqrt{2}}{\sqrt{\pi}}
    \left[\left(\frac{\tau}{2}\right)^{3/2} 
    \right]e^{-\tau/4} 
    + 12
    \left(1 - \hat{\Phi}\!\left(\sqrt{\tau/2}\right)\right)\\
    &=\frac{4\sqrt{2}}{\sqrt{\pi}}
    \left[\left(\frac{\tau}{2}\right)^{3/2} 
    \right]e^{-\tau/4} + 6~\mathrm{Succ}_Q(\tau/2)\\
    &= K_2(\tau) \tag{Definition of $K_2(\tau)$}
\end{align*}
Using that fact that $\mathrm{Succ}_Q(\tau) \le 2$ and $\tau^{3/2} e^{-\tau/4}$ attains a maximum value less than 3.7 for positive $\tau$, it can be verified that $K_2(\tau) \le 16.2$.

Therefore, combining both sub-cases for case $(\alpha, \beta)=(\alpha', \beta')$, we have: 
\begin{align*}
    \mathbb{E}_B\!\left[
    C_w(\alpha,\beta)^4\,\delta(\alpha,\beta)
    \right]
    = K_1(\tau) 
    + \frac{K_2(\tau)-K_1(\tau)}{N}.
\end{align*}

\paragraph{Case $(\alpha, \beta) \not= (\alpha', \beta')$:} This case happens with probability $(1-\tfrac{1}{N^2})$.

Define two sets A and D as follows:
\begin{align*}
A &= \{x : \alpha \cdot B(x) = \alpha' \cdot B(x),~ \beta \cdot B(x \xor w) = \beta' \cdot B(x \xor w)\} \\
&\quad \cup \{x: \alpha \cdot B(x) \neq \alpha' \cdot B(x),\; \beta \cdot B(x \xor w) \neq \beta' \cdot B(x \xor w)\}, \\[6pt]
D &= \{x : \alpha \cdot B(x) = \alpha' \cdot B(x),\; \beta \cdot B(x \xor w) \neq \beta' \cdot B(x \xor w)\} \\
&\quad \cup \{x: \alpha \cdot B(x) \neq \alpha' \cdot B(x),\; \beta \cdot B(x \xor w) = \beta' \cdot B(x \xor w)\}, \\
\end{align*}



Observe that $A$ and $D$ are disjoint and $|A| +|D|=N$ with $|A| = |D|= N/2$ for a random permutation.  Use these sets to define the following sums.
\begin{align*}
C_A{(\alpha, \beta)} &= \frac{1}{\sqrt{N}} \sum_{x \in A} (-1)^{\alpha \cdot B(x)} \cdot (-1)^{\beta \cdot B(x \xor w)}\\
&=\frac{1}{\sqrt{N}} \sum_{x \in A} (-1)^{\alpha' \cdot B(x)} \cdot (-1)^{\beta' \cdot B(x \xor w)}\\
C_D{(\alpha, \beta)} &= \frac{1}{\sqrt{N}}\sum_{x \in D} (-1)^{\alpha \cdot B(x)} \cdot (-1)^{\beta \cdot B(x \xor w)}\\
&=-\frac{1}{\sqrt{N}} \sum_{x \in D} (-1)^{\alpha' \cdot B(x)} \cdot (-1)^{\beta' \cdot B(x \xor w)}
\end{align*}

\paragraph{Distribution of $C_A$ and $C_D$}
Let's understand these random variables. First, observe that $C_A{(\alpha, \beta)}$ and $C_D{(\alpha, \beta)}$ are independent since they depend on disjoint inputs ($A$ vs $D$). Next, for a random $B$, both following normal distribution with 0 mean. They both have the same variance which is $1$ if $\alpha \not= \beta$ and $2$ if $\alpha = \beta$. Let $\sigma^2$ denote this variance.

Now, let's write
\[
    C_w{(\alpha,\beta)} = C_A{(\alpha, \beta)} + C_D{(\alpha, \beta)}, \quad C_w{(\alpha',\beta')} =  C_A{(\alpha, \beta)} -C_D{(\alpha, \beta)}.
\]

Observe that $|C_w(\alpha, \beta)| \ge \sqrt{\tau}$ implies that $|C_A + C_D| \ge \sqrt{\tau}$; we get a similar bound for the other expression as well.



Next, let's see how to compute the expectation of the following multivariate expression.
$$C_w(\alpha,\beta)^2 \delta(\alpha,\beta)C_w(\alpha',\beta')^2 \delta(\alpha',\beta') =
\begin{cases}
    (C_A + C_D )^2 (C_A - C_D)^2 & \text{ if $|C_A \pm C_D| \ge  \sqrt{\tau}$}\\
    0 & \text{ otherwise }
\end{cases}
$$

Denoting $C_A$ by $A$ and $C_D$ by $D$, 
the integral we want to evaluate is
\[
  \int_{|A+D|\geq \sqrt\tau \atop |A-D|\geq \sqrt\tau}\,(A + D)^2 (A-D)^2 f(A-D) f(A+D) \,dAdD
\]
where $f(u)$ is the probability density of $u\sim N(0,2\sigma^2).$ Two subcases arise due to two different $\sigma^2$ in lemma~\ref{lem:cd_w_normal_dist_case2}. So, we will have two different density functions.

\textbf{Subcase 1:} $\alpha\neq\beta$

In this case, $A, D\sim  \N(0,1/2)$ which implies that $A\pm D \sim \N(0,1)$ with individual marginal probability density function $f(A\pm D)=\frac{1}{\sqrt{2\pi}\cdot 1}e^{-(A\pm D)^2/2}$.

   \begin{align*}
       \E_B [C_w(\alpha,\beta)^2 \delta(\alpha,\beta)C_w(\alpha',\beta')^2 \delta(\alpha',\beta')] \Big|_{\alpha\neq\beta} &= \frac{1}{2\pi} \int_{ |A+D| \ge {\sqrt{\tau}} \atop |A-D| \ge {\sqrt \tau}}(A+D)^2 (A-D)^2 e^{-(A^2+D^2)/2} \cdot dA dD.
   \end{align*}
Taking $x= A+D$, $y=A-D$, we get $x^2+y^2=2(A^2+D^2).$ Thus, the above equation can be written as
\begin{align*}
    & \frac{1}{2\pi} \cdot \int_{ |A+D| \ge {\sqrt{\tau}} \atop |A-D| \ge {\sqrt{\tau}}} (A+D)^2 \cdot (A-D)^2 \cdot e^{-(A^2+D^2)/2} \cdot dA dB \\
    =& \int_{|x| \ge \sqrt{\tau} \atop |y| \ge \sqrt{\tau}} \frac{1}{2\pi}x^2 y^2 e^{-(x^2+y^2)/4} \cdot \frac{1}{2} dx dy\\
    =& \frac{1}{4\pi}\left(\int_{|x| \ge \sqrt{\tau}}x^2 e^{-x^2/4} dx \right)^2\\
    =& \frac{1}{4\pi}\left( \int_{\sqrt{\tau}}^{+\infty} 2x^2 e^{-x^2/4} dx \right)^2\\
    =&\frac{1}{4\pi}\left( \int_{\sqrt{\tau/2}}^{+\infty} 4t^2 e^{-t^2/2} \sqrt{2}dt \right)^2~~(\mathrm{take}~t=x/\sqrt{2})\\
    =& 4~\mathrm{Succ}_Q^2(\tau/2)
\end{align*}
Here, we have used the fact that $\mathrm{Succ}_Q(\tau) 
    = \frac{2}{\sqrt{2\pi}}\int_{\sqrt{\tau}}^{\infty} 
    x^2 e^{-x^2/2}\,dx\implies ~\int_{\sqrt{\tau/2}}^{\infty} 
    x^2 e^{-x^2/2}\,dx=\sqrt{\frac{\pi}{2}}\mathrm{Succ}_Q(\tau/2). $

\textbf{Subcase 2:} $\alpha=\beta$ \\
In this case, $A, D\sim  N(0,1)$ which implies that $A \pm D\sim N(0,2)$ with individual marginal probability density function $f(A\pm D)=\frac{1}{\sqrt{4\pi}} e^{-(A\pm D)^2/4}$. Then, the expectation becomes:

\begin{align*}
\E_B [C_w(\alpha,\beta)^2 \delta(\alpha,\beta)C_w(\alpha',\beta')^2 \delta(\alpha',\beta')] \Big|_{\alpha=\beta}
    &=\frac{1}{4\pi} \cdot \int_{ |A+B| \ge {\sqrt{\tau}} \atop |A-B| \ge {\sqrt{\tau}}} (A+D)^2 \cdot (A-D)^2 \cdot e^{-(A^2+B^2)/4} \cdot dA dD \\
    &= \frac{1}{4\pi} \int_{|x| \ge \sqrt{\tau} \atop |y| \ge \sqrt{\tau}} x^2 y^2 e^{-(x^2+y^2)/8} \cdot \frac{1}{2} dx dy\\
    &= \frac{1}{8\pi} \left(\int_{|x| \ge \sqrt{\tau}} x^2 e^{-x^2/8} dx \right)^2\\
    &= \frac{1}{8\pi} \left(\int_{\sqrt{\tau}}^\infty 2x^2 e^{-x^2/8} dx \right)^2\\
    &= \frac{1}{2\pi} \left(\int_{\sqrt{\tau}}^\infty x^2 e^{-x^2/8} dx \right)^2\\
    &= \frac{1}{2\pi}\left(\int_{\sqrt{\tau/4}}^\infty 4t^2 e^{-4t^2/8} \cdot2~dt \right)^2~(\mathrm{take}~t=x/2)\\
    &=\frac{1}{2\pi}\left( 4\sqrt{2\pi}\mathrm{Succ}_Q(\tau/4)\right)^2\\
    &= 16 \cdot \mathrm{Succ}_Q^2(\tau/4)
\end{align*}


Now, combining the two cases, we have:
\begin{align*}
	{\E_B[\adv(w,\tau)^2] }&= {\frac{1}{N^2} \Big[ {K_1(\tau)}+ \frac{K_2(\tau)-K_1(\tau)}{N}\Big] +\Big( 1-\frac{1}{N^2}\Big) \Big[ \frac{1}{N}\cdot 16~ \mathrm{Succ}_Q^2(\tau/4)+\left(1-\frac{1}{N}\right)\cdot4~\mathrm{Succ}^2_Q(\tau/2)\Big]}\\
 &\le \frac{1}{N^2}\left[K_1(\tau)+\frac{K_2(\tau)-K_1(\tau)}{N} \right]+ \left( 1-\frac{1}{N^2} \right)\left[ \frac{1}{N}\cdot16~ \mathrm{Succ}_Q^2(\tau/4) + 4\left(1-\frac{1}{N}\right)\mathrm{Succ}_Q^2(\tau/4)\right]\\
    &= \frac{1}{N^2}\left[K_1(\tau)+\frac{K_2(\tau)-K_1(\tau)}{N} \right]+ \left( 1-\frac{1}{N^2} \right)\left[ \frac{12}{N}\mathrm{Succ}^2_Q(\tau/4) + 4 \mathrm{Succ}^2_Q(\tau/4)\right]\\ 
      &= \frac{1}{N^2}\left[K_1(\tau)+\frac{K_2(\tau)-K_1(\tau)}{N} \right]+ 4\left( 1-\frac{1}{N^2} \right)\left(1+\frac{3}{N}\right)\mathrm{Succ}^2_Q(\tau/4)\\
\end{align*}
Since, by definition of $\mathrm{Succ}^2_Q(\tau)$ we have  $\mathrm{Succ}^2_Q(\tau)\le \mathrm{Succ}^2_Q(\tau/2)\le \mathrm{Succ}^2_Q(\tau/4)$, we get the following bound on variance. 
\begin{align*}
    \Delta(\tau)&= 4\left(1-\hat{\Phi}\!\left(\sqrt{\tau/2}\right)\right)
    - 2\left(1-\hat{\Phi}(\sqrt{\tau})\right)\\
    &=2~\mathrm{Succ}_Q(\tau/2)-\mathrm{Succ}_Q(\tau)\\
    & \ge2~\mathrm{Succ}_Q(\tau)- \mathrm{Succ}_Q(\tau) = \mathrm{Succ}_Q(\tau)
\end{align*}
From \eqref{eq:exp_eq}, we have $\E_B[\adv(w,\tau)] = \mathrm{Succ}_Q(\tau) + \frac{\Delta(\tau)}{N}\ge 2~\mathrm{Succ}_Q(\tau) \ge 2~ \mathrm{Succ}_Q(\tau/4)$.    
    \begin{align*}
    \Var[\adv(w,\tau)] &= \E[\adv(w,\tau)^2] - \E[\adv(w,\tau)]^2\\ &\leq \frac{1}{N^2}\left[K_1(\tau)+\frac{K_2(\tau)-K_1(\tau)}{N} \right]+ 4\left( 1-\frac{1}{N^2} \right)\left(1+\frac{3}{N}\right)\mathrm{Succ}^2_Q(\tau/4) - \Big( 2~\mathrm{Succ}_Q(\tau/4)\Big)^2\\
    & \leq \frac{1}{N^2}\left[K_1(\tau)+\frac{K_2(\tau)-K_1(\tau)}{N} \right]+\frac{12}{N}~\mathrm{Succ}^2_Q(\tau/4)\\
    \end{align*}

\paragraph{Concentration bound:} 
We have already proved that (Equation~\ref{eq:exp_eq}) 
$$\E_B[\adv(w,\tau)] \in (\mathrm{Succ}_Q(\tau), \mathrm{Succ}_Q(\tau) + 2/N).$$ We want to use Chebyshev's inequality to bound the probability of $\adv(w,\tau)$ lying outside a larger range, namely $\adv \le \mathrm{Succ}_Q(\tau) - \tfrac{1}{n}$ or $\adv \ge \mathrm{Succ}_Q(\tau) + \tfrac{3}{n}$. Observe that $\adv(w,\tau) \ge \mathrm{Succ}_Q(\tau) + \tfrac{3}{n}$ implies that $\adv(w,\tau) \ge \E_B[\adv(w,\tau)] + 1/n$ since $n < N.$

.


\begin{align*}
    & \Pr_{B \leftarrow \mathcal{P}_n} \left[ \adv(w,\tau) \le \mathrm{Succ}_Q(\tau) - \tfrac{1}{n} \text{ OR } \adv(w,\tau) \ge \mathrm{Succ}_Q(\tau) + \tfrac{2}{n} \right] \\
    \leq & 
    \Pr_{B \leftarrow \mathcal{P}_n}\!\left[
    \left|\mathrm{adv}(w,\tau) 
    - \mathbb{E}[\mathrm{adv}(w,\tau)]\right| 
    \geq \frac{1}{n}
    \right] \notag\\
    \leq  &\frac{\mathrm{Var}[\mathrm{adv}(w,\tau)]}{1/n^2}
    \notag\\
    \le & \frac{n^2}{N^2}\left[K_1(\tau)+\frac{K_2(\tau)-K_1(\tau)}{N} \right]+\frac{12n^2}{N}~\mathrm{Succ}^2_Q(\tau/4)\\
    \le & \frac{n^2}{N^2}(7 + \frac{16.2}{N}) + \frac{48n^2}{N}\\
    \le & \frac{7}{N}\frac{n^2}{N} + \frac{16.2}{N^2} \frac{n^2}{N} + 48\frac{n^2}{N} \\
    \le & 48.5\frac{n^2}{N} \tag{For $N \ge 32$}
    \end{align*}
%
\end{proof}

For the next lemma, we use the well-known bound on the CDF of a standard normal random variable $Z$ where $t > 0$.
$$\Phi(-t) = \Pr[Z < -t] = \Pr[Z > t] \le \dfrac{1}{t\sqrt{2\pi}}e^{-t^2/2}$$

\begin{lemma}\label{lemma:lem31}
    For large $n$ and $B$ chosen randomly from $n$-bit permutations, no randomized algorithm with $o(N/\log N)$ queries can solve MAC Fishing with probability $Succ_R + \Omega(1)$.
\end{lemma}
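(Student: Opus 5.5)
The plan follows Lemma~31 of Aaronson and Chen~\cite{aaronson-chen}. Here $\mathrm{Succ}_R$ denotes the probability that a fixed, transcript-independent non-zero pair lands in $V^\tau_w$ for a random $B$, i.e.\ $2\Phi(-\sqrt\tau)\approx 0.317$ at $\tau=1$. The idea is that after $T=o(N/\log N)$ queries the classical algorithm has learned essentially nothing about any $C_w(\alpha,\beta)$. Conditioned on its transcript, every candidate output is therefore still heavy only with probability $\mathrm{Succ}_R+o(1)$.

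First fix the algorithm's random coins and use the principle of deferred decisions. A random permutation answers each fresh query with a uniformly random string not yet used. Let $Q$ be the set of queried points after the algorithm halts, with $|Q|\le T$. For any candidate pair $(\alpha,\beta)$, split
\[
C_w(\alpha,\beta)=\tfrac{1}{\sqrt N}\sum_{x:\,x\in Q\text{ or }x\xor w\in Q}(\cdots)\;+\;\tfrac{1}{\sqrt N}\sum_{\text{remaining }x}(\cdots)=:S_Q(\alpha,\beta)+S_R(\alpha,\beta).
\]
The known part $S_Q$ has at most $2T$ terms. For a \emph{fixed} non-zero $(\alpha,\beta)$, the masked bits of nearly uniform distinct strings are almost unbiased and almost independent. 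A martingale (Azuma--Hoeffding) argument along the adaptive query sequence then gives $\Pr[|S_Q(\alpha,\beta)|\ge\lambda]\le 2e^{-\lambda^2N/(8T)}$, up to $O(T/N)$ corrections from sampling without replacement. Choosing $\lambda=\sqrt{24T\ln N/N}$ makes this at most $N^{-3}$. A union bound over all $N^2$ pairs shows that, with probability $1-o(1)$, every pair has $|S_Q|\le\lambda=o(1)$, precisely because $T\log N/N=o(1)$. This union bound is the only place the $\log N$ is used, and it neutralises the algorithm's ability to select its output adaptively after seeing the answers.

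Second, condition on the full transcript. The unqueried part of $B$ is then a uniformly random bijection between the unqueried inputs and the unused outputs, a set of size $N-O(T)$. For any pair $(\alpha,\beta)$ chosen as a function of the transcript, the central-limit heuristic of Lemma~\ref{lem:cd_w_normal_dist_case2} gives $S_R\approx\N(0,1-o(1))$ when $\alpha\neq\beta$. Combining this with $|S_Q|=o(1)$ yields
\[
\Pr\big[C_w(\alpha,\beta)^2\ge\tau\,\big|\,\text{transcript}\big]\le\Pr\big[|\N(0,1)|\ge\sqrt\tau-o(1)\big]+o(1)=2\Phi(-\sqrt\tau)+o(1).
\]
The final step is to average over transcripts and coins. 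Continuity of $\Phi$ and the tail bound stated before the lemma absorb the $o(1)$ shifts, so the overall success probability is $\mathrm{Succ}_R+o(1)$. This contradicts success $\mathrm{Succ}_R+\Omega(1)$.

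I expect two obstacles. The first is making the conditional normality rigorous for the \emph{permutation} model rather than a random function. The dependence between $\alpha\cdot B(x)$ and $\beta\cdot B(x\xor w)$, and the without-replacement sampling, must be controlled uniformly over adversarially chosen pairs. I would try a coupling with a uniformly random function, whose total-variation cost is $O(T^2/N)$ on the queried part and negligible for the second-moment-level statements needed on the unqueried part. The second, more serious, obstacle is the diagonal $\alpha=\beta$. By Lemma~\ref{lem:cd_w_normal_dist_case2} its variance is $2$, so outputting such a pair succeeds with probability $2\Phi(-\sqrt{\tau/2})$, which exceeds $2\Phi(-\sqrt\tau)$. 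The bound $\mathrm{Succ}_R+o(1)$ therefore does not hold as stated for such outputs. To resolve this I would either define $\mathrm{Succ}_R$ as the maximum of the two values, or restrict MAC Fishing to off-diagonal pairs. The latter restriction costs nothing on the quantum side, since diagonal pairs carry only $O(1/N)$ of the mass in $\CD_w$ by Lemma~\ref{moments}.
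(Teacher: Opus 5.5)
Your plan is essentially the paper's proof. Both split $C_w(\alpha,\beta)$ into a query-dependent part and a remainder. Both kill the first part for all $N^2$ pairs at once by a union bound at a threshold of order $\sqrt{\ln N}$; this is the only use of $T=o(N/\log N)$, and it is what defeats an adaptively chosen output. Both bound the remainder by a Gaussian tail.

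Where you differ, you are more careful than the paper:
\begin{itemize}
\item The paper's $C^u_w$ sums over all $x\notin S$, so it still contains the terms with $x\xor w\in S$. This contradicts its claim that $C^u_w$ does not depend on the answers. Your split by ``$x\in Q$ or $x\xor w\in Q$'' fixes this.
\item The paper treats the query set $S$ as fixed. Your martingale over the adaptive query sequence handles adaptivity.
\end{itemize}

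Your diagonal objection is correct, and it is exactly where the paper's proof breaks. For the seen part the paper does rescale for variance $2$: it uses threshold $\sqrt{2\ln N}$ when $\alpha=\beta$. For the unseen part it states $C^u_w(\alpha,\alpha)\sim\N(0,2)$, but then evaluates the tail \emph{for both the cases} with the standard normal density.

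By Lemma~\ref{lem:cd_w_normal_dist_case2}, a zero-query algorithm that outputs a fixed $(\alpha,\alpha)$ with $\alpha\neq 0^n$ succeeds on a random permutation with probability about $2\Phi(-1/\sqrt2)\approx 0.48$. That is $\mathrm{Succ}_R+\Omega(1)$ for the paper's $\mathrm{Succ}_R=2\Phi(-1)\approx 0.317$. So the lemma is false as intended, and one of your repairs is necessary. With the first repair, the constant in Theorem~\ref{thm:MAC_lb} becomes about $0.48$. This is still below $\mathrm{Succ}_Q(1)\approx 0.801$, so the separation survives.

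Likewise, $(0^n,0^n)$ always lies in $V^\tau_w$, since $C_w(0^n,0^n)^2=N$. The exclusion of zero pairs, which you make only implicitly, has to be written into the problem.

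One tool you propose does not work in the relevant range.
\begin{itemize}
\item Coupling with a random function costs $O(T^2/N)$ in total variation. This bound is vacuous once $T\gg\sqrt N$, which covers most of the range $T=o(N/\log N)$.
\item If your \emph{$O(T/N)$ corrections} are per-step biases of the martingale increments, their sum over the at most $2T$ terms of $S_Q$, each of weight $1/\sqrt N$, is $O(T^2/N^{3/2})$. This is not $o(1)$ for $T\gg N^{3/4}$.
\end{itemize}
Work with the permutation directly instead. The outputs revealed so far form a uniformly random set $U$ of size at most $2T$, regardless of adaptivity. Hence, with high probability, $|\sum_{y\in U}(-1)^{\gamma\cdot y}|=O(\sqrt{T\log N})$ simultaneously for every $\gamma\neq 0^n$ and at every step. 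Each freshly revealed masked bit then has bias $O(\sqrt{T\log N}/N)$. The accumulated drift of $S_Q$ is $O((T/N)^{3/2}\sqrt{\log N})=o(1)$, after which Azuma--Hoeffding applies as you intended.
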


\begin{proof} The proof will be based on contradiction.
    We will prove the claim for a deterministic algorithm $A$ that make exactly $t$ queries, for some $t \in o(N / \log N)$. The result generalizes to randomized algorithms using Yao's minimax principle.

    Let $S \subseteq \F_2^n $ denote the set of $t$ queries made by $A$. For each $(\alpha, \beta)$ define the following:

    \begin{align*}
        C^s_w(\alpha,\beta) & = \tfrac{1}{\sqrt{t}} \sum_{x \in S} (-1)^{\alpha \cdot B(x) \xor \beta \cdot B(x \xor w)}  \text{ (estimate of $C$ based on seen queries, from $S$)}\\
        C^u_w(\alpha,\beta) & = \tfrac{1}{\sqrt{N-t}} \sum_{x \not\in S} (-1)^{\alpha \cdot B(x) \xor \beta \cdot B(x \xor w)}  \text{ (estimate of $C$ based on unseen inputs)}\\
    \end{align*}
\end{proof}

Observe that $N/t = \omega(1/\log N) = \omega(1/\ln N)$. Then, $C_w(\alpha,\beta)$ can be expressed as 
\begin{align}\label{eqn:1}
C_w(\alpha,\beta) = \sqrt{\tfrac{t}{N}} C^s_w(\alpha,\beta) + \sqrt{1 - \tfrac{t}{N}} C^u_w(\alpha, \beta) = C^s_w(\alpha, \beta)/\omega(\sqrt{\ln N}) + C^u_w(\alpha, \beta)(1-o(1))
\end{align}

{\em The following claims are different from Aaronson and Chen since they depend on masked cross-correlation values.}

\begin{claim}
    For a randomly chosen $B$, $|C^s_w(\alpha,\beta)|$ is low for all pairs of $\alpha, \beta$ with high probability.
\end{claim}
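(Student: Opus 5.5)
The plan is to follow the Aaronson--Chen template for the analogous claim in Fourier Fishing: a tail bound for a single pair, followed by a union bound over all pairs. The target is
\[
\Pr_B\Bigl[\exists(\alpha,\beta):\ |C^s_w(\alpha,\beta)|\ge \sqrt{c\ln N}\Bigr]=o(1)
\]
for a suitable constant $c$ (say $c=6$). Once this holds, \eqref{eqn:1} shows that the seen part contributes only $C^s_w/\omega(\sqrt{\ln N})=o(1)$ to every $C_w(\alpha,\beta)$ at once. Whether a pair lands in $V^\tau_w$ is then governed by the unseen part $C^u_w$, about which $A$ has no information.

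The first step is to condition on the transcript. Because $A$ is deterministic and adaptive, I would fix the query set $S$ that results from the interaction. For a fixed pair $(\alpha,\beta)$ with $\alpha,\beta\neq 0^n$, the quantity $\sqrt t\,C^s_w(\alpha,\beta)$ is a sum of $t$ values in $\{\pm1\}$, namely $(-1)^{\alpha\cdot B(x)\xor\beta\cdot B(x\xor w)}$ for $x\in S$. The heuristic already used in Lemma~\ref{lem:cd_w_normal_dist_case2} treats these as near-independent uniform signs.

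A technical point is that the summands involve the values $B(x\xor w)$. The transcript may reveal these (if $x\xor w\in S$) or may not, in which case they are uniformly random unseen values. The bound I want must be uniform over the queried values. I would handle this by viewing the seen values $B(S)$ as arbitrary distinct strings. For a fixed nonzero $\alpha$, the map $x\mapsto\alpha\cdot B(x)$ on $t=o(N)$ distinct outputs is close to unbiased for all but few $\alpha$. More cleanly, I would average over $\alpha,\beta$, or use that $B$ is uniform, so that the $t$ outputs are a uniformly random $t$-subset without replacement. For the terms with $\alpha=\beta$, I would pair each $x$ with $x\xor w$ as in the variance-$2$ case of Lemma~\ref{lem:cd_w_normal_dist_case2}; this yields $t/2$ independent terms and only changes the constant.

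Under this model, Hoeffding's inequality (or the Gaussian tail bound $\Phi(-s)\le e^{-s^2/2}/(s\sqrt{2\pi})$ quoted before Lemma~\ref{lemma:lem31}) gives
\[
\Pr\bigl[|C^s_w(\alpha,\beta)|\ge s\bigr]\le 2e^{-s^2/4}.
\]
Here the worse constant $4$ covers the $\alpha=\beta$ case. Taking $s=\sqrt{12\ln N}$, a union bound over the $N^2$ mask pairs bounds the failure probability by $2N^2\cdot N^{-3}=2/N=o(1)$. Pairs in which one mask is zero are excluded, since there $C^s_w$ is trivially determined (Lemma~\ref{lem:zero-masks}) and these pairs never lie in $V^\tau_w$ for $\tau\le\sqrt N$ anyway.

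I expect the main obstacle to be justifying the independence, or near-independence, of the summands. Specifically:
\begin{itemize}
\item $B$ is a permutation rather than a random function, so the outputs are sampled without replacement;
\item the queries are chosen adaptively.
\end{itemize}
My first attempt would use the standard coupling: a permutation queried on $t=o(\sqrt N)$ points is statistically close to a random function. For larger $t=o(N/\log N)$ this coupling fails, so instead I would apply a Hoeffding bound for sampling without replacement, which holds with the same exponent. I would treat adaptivity by revealing $B$ one query at a time and applying Azuma's inequality to the martingale of partial sums. Each new term has conditional mean $O(1/N)$ given the past, because its masked bit is a nonconstant linear function of a value that is uniform over the remaining $N-(\text{seen})$ strings. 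The accumulated drift is then $O(t/N)\cdot\sqrt t/\sqrt t=o(1)$, which is negligible next to $\sqrt{\ln N}$.
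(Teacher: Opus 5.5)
Your core argument is the paper's: a per-pair Gaussian-type tail bound at a threshold of order $\sqrt{\ln N}$, then a union bound over the $N^2-N$ off-diagonal and $N$ diagonal pairs, with the variance-$2$ constant on the diagonal. The paper uses threshold $2\sqrt{\ln N}$ with the Mills-ratio bound and gets failure probability at most $2/\sqrt{\pi\ln N}$. Your $\sqrt{12\ln N}$ with constant $4$ gives $2/N$, and either suffices for \eqref{eqn:1}. The paper never justifies the Gaussian model for $C^s_w$; it simply asserts $\N(0,1)$ and $\N(0,2)$. Your treatment of adaptivity and sampling without replacement therefore goes beyond the paper, but one step there is wrong as written.

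\textbf{The drift step.} You claim each new term has conditional mean $O(1/N)$. After $k$ distinct output values $R$ are revealed, a fresh value $u$ is uniform on $\F_2^n\setminus R$. For $\alpha\neq 0^n$ this gives $\E[(-1)^{\alpha\cdot u}]=-\tfrac{1}{N-k}\sum_{v\in R}(-1)^{\alpha\cdot v}$. That quantity is typically of order $\sqrt k/N$ and can be as large as $k/(N-k)$. Using the worst case, the normalised drift is $O(t^{3/2}/N)$, which is unbounded for, e.g., $t=N/\log^2 N$. The repair is a second concentration step:
\begin{itemize}
\item Each newly revealed value of a uniform permutation is uniform on the unused strings, however adaptively the point was chosen. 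Hence $R$ is a uniformly random $k$-subset.
\item Hoeffding without replacement, plus a union bound over $\alpha$ and $k\le 2t$, gives $\max_{\alpha\neq 0^n}\bigl|\sum_{v\in R}(-1)^{\alpha\cdot v}\bigr|=O(\sqrt{k\log N})$ with high probability.
\item The normalised drift is then $O(t\sqrt{\log N}/N)$, which is negligible against your threshold.
\end{itemize}

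\textbf{The filtration.} The summand for $x\in S$ involves $B(x\xor w)$, which $A$ may never query. So partial sums over queries are not adapted to the query filtration. Instead:
\begin{itemize}
\item reveal values in query order;
\item defer the never-queried partners $B(x\xor w)$ to the end;
\item let each increment be the sum of the terms that become determined at that step.
\end{itemize}
Increments of size $2$ occur exactly when both $x$ and $x\xor w$ lie in $S$. Hence $\sum_i c_i^2\le 2t$, and Azuma still yields your $2e^{-s^2/4}$ up to a $1+o(1)$ factor in the exponent.

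\textbf{Zero masks.} $C^s_w(\alpha,0^n)$ is not trivially determined: Lemma~\ref{lem:zero-masks} concerns the full sum over $\F_2^n$, not the sum over $S$. It is just another without-replacement sum and obeys the same bound. The genuine exception is $(0^n,0^n)$, where $C^s_w=\sqrt t$ and which lies in $V^\tau_w$ for every $\tau\le N$. It must be excluded by reading MAC Fishing as a problem about non-zero masks. The paper's count silently lists it among the $N$ diagonal $\N(0,2)$ pairs.
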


Recall that $C^s_w(\alpha,\beta) \sim \N(0,1)$ for $\alpha\not =\beta$ and $\sim \N(0,2)$ for $\alpha=\beta$.

Consider the case when $\alpha \not= \beta$ which can happen for $O(N^2)$ pairs. Then, 
$$\Pr\left[|C^s_w(\alpha,\beta)| \ge 2\sqrt{\ln N}\right] = \tfrac{2}{\sqrt{2\pi}} \int_{-\infty}^{-2\sqrt{\ln N}} e^{-x^2/2} dx = 2\Phi(-2\sqrt{\ln N}) \leq \frac{2e^{-2\ln N}}
    {2\sqrt{\ln N}\sqrt{2\pi}} 
    = \frac{1}{\sqrt{2\pi\ln N}}
    \cdot\frac{1}{N^2} $$

Next, consider the case of $\alpha = \beta$. There are exactly $N$ such pairs. Then too, the above probability can be written as
$$ \Pr\left[|C^s_w(\alpha,\alpha)| \ge 2\sqrt{\ln N}\right] = \tfrac{2}{\sqrt{2\pi}} \int_{-\infty}^{-\sqrt{2\ln N}} e^{-x^2/2} dx = 2\Phi(-\sqrt{2 \ln N}) \leq \frac{2e^{-\ln N}}
    {\sqrt{2\ln N}\sqrt{2\pi}} 
    = \frac{1}{\sqrt{\pi\ln N}}
    \cdot\frac{1}{N}$$

Thus, the probability that there exists some $(\alpha,\beta)$ for which $|C^s_w(\alpha,\beta)| \ge 2\sqrt{\ln N}$ can be upper bounded using union bound.
$$(N^2-N) \cdot \frac{1}{\sqrt{2\pi\ln N}}
    \cdot\frac{1}{N^2} + N \cdot \frac{1}{\sqrt{\pi\ln N}}
    \cdot\frac{1}{N} \leq (1-\tfrac{1}{N})\frac{1}{\sqrt{2\pi\ln N}}
    + \frac{1}{\sqrt{\pi\ln N}}
    = \frac{2}{\sqrt{\pi\ln N}}.$$

This proves the claim that
$$\Pr_B \left[\forall (\alpha, \beta),~ |C^s_w(\alpha,\beta)| \le 2\sqrt{\ln N}\right] \ge 1 - \frac{2}{\sqrt{\pi\ln N}}. $$ 
\qed

Let's define the event $\bad$: for all $(\alpha,\beta)$, $|C^s_w(\alpha,\beta)| \le 2\sqrt{\ln N}$. Therefore, $\Pr_B[not(\bad)] \le \tfrac{2}{\sqrt{\pi\ln N}}$. Now, lets focus on the possible accuracy of unseen part of $C_w(\alpha,\beta)$.

\begin{claim}
    For every $\alpha,\beta$ and a randomly chosen $B$, $C^u_w(\alpha,\beta)$ is low with high probability.
\end{claim}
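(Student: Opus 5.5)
The plan is to show that for each fixed $(\alpha,\beta)$, with $\alpha,\beta,w\neq 0^n$, the unseen part $C^u_w(\alpha,\beta)$ is, up to $o(1)$ corrections, distributed like the full correlation in Lemma~\ref{lem:cd_w_normal_dist_case2}. Concretely, it should behave as $\N(0,1)$ when $\alpha\neq\beta$ and as $\N(0,2)$ when $\alpha=\beta$, even after conditioning on the values $B(x)$ for $x\in S$. "Low with high probability" would then mean the following Gaussian tail statement: for every $\lambda>0$,
\[
\Pr_B\bigl[\,|C^u_w(\alpha,\beta)|\ge \lambda \,\bigm|\, B|_S\,\bigr]\;\le\; 2\Phi(-\lambda/\sigma)+o(1),\qquad \sigma\in\{1,\sqrt2\}.
\]
In particular this is $o(1)$ once $\lambda=\omega(1)$. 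Combined with Equation~\eqref{eqn:1} and the event $\bad$, it yields $C_w(\alpha,\beta)=C^u_w(\alpha,\beta)(1-o(1))+o(1)$, which is exactly what the remainder of Lemma~\ref{lemma:lem31} needs.

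\textbf{Step 1: the conditional law.} Condition on the transcript, that is, on the $t$ revealed values $B(x)$ for $x\in S$. Under a uniformly random permutation, the unseen inputs $x\notin S$ are mapped uniformly to the $N-t$ unrevealed outputs. Each summand $(-1)^{\alpha\cdot B(x)\oplus\beta\cdot B(x\oplus w)}$ with $x\notin S$ falls into one of two types.
\begin{itemize}
\item \emph{Fully unseen:} $x\oplus w\notin S$ as well. This is the bulk, at least $N-2t$ terms.
\item \emph{Half seen:} $x\oplus w\in S$, so one factor is fixed by the transcript. There are at most $t$ such terms.
\end{itemize}

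\textbf{Step 2: the mean.} A uniformly chosen unrevealed output has $\alpha$-bias equal to minus the $\alpha$-imbalance of the revealed set, divided by $N-t$. Since the revealed outputs form a uniform $t$-subset, that imbalance is $O(\sqrt t)$ with high probability, so each term has bias $O(\sqrt t/N)$. Summing over $N-t$ terms and normalising by $\sqrt{N-t}$ shifts the conditional mean by $O(\sqrt{t/N})=o(1)$, because $t=o(N/\log N)$. The half-seen terms change the sum by at most $t$ in absolute value. Before normalisation their centred contribution is $O(\sqrt t)$ by the same bias argument, so after dividing by $\sqrt{N-t}$ it is again $o(1)$.

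\textbf{Step 3: the variance and the Gaussian limit.} Here I would reuse the central-limit heuristic from Section~\ref{sec:mac}.
\begin{itemize}
\item For $\alpha\neq\beta$, the bits $\alpha\cdot B(x)$ and $\beta\cdot B(x\oplus w)$ act as independent uniform bits. This gives about $N-t$ near-independent $\pm1$ terms, hence variance $1-o(1)$.
\item For $\alpha=\beta$, the terms for $x$ and $x\oplus w$ coincide. Pairing them leaves about $(N-t)/2$ independent terms each of weight $2$, hence variance $2-o(1)$.
\end{itemize}
The Gaussian tail bound then gives the displayed inequality.

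The main obstacle is Step 1/2, justifying near-independence after conditioning on a permutation, since sampling without replacement introduces correlations. I would control this first by comparison with sampling with replacement: the total variation distance between the two for $N-t$ draws is negligible relative to the scales involved, or alternatively one can note that the pairwise covariances are $O(1/N)$. Either way, the second-moment computation is only perturbed by $O(t/N)=o(1)$, which keeps the argument at the same heuristic level already adopted for Lemma~\ref{lem:cd_w_normal_dist_case2}.
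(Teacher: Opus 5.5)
Your route is the paper's route, carried out more carefully. The paper simply asserts that $C^u_w(\alpha,\beta)$ follows the baseline law of Lemma~\ref{lem:cd_w_normal_dist_case2} because it ``does not depend on the answers of any query'', and then reads off the tail at $1-o(1)$. You instead condition on $B|_S$, separate out the half-seen terms, and bound the mean shift. That is more honest, because the paper's one-liner is not literally true: the half-seen terms involve queried values, and the permutation constraint ties the unseen outputs to the revealed set.

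The gap is in the step you yourself flag as the main obstacle: neither device you propose for it works.
\begin{itemize}
\item Comparing with sampling with replacement in total variation fails outright. $N-t$ draws with replacement from $N-t$ values collide almost surely (birthday bound), so the distance tends to $1$, not $0$.
\item A bound of $O(1/N)$ on pairwise covariances is too weak. Summed over the $\Theta(N^2)$ pairs and divided by $N-t$, it perturbs the variance by $O(1)$, not $O(t/N)$. For example, $\sum_x(-1)^{\alpha\cdot B(x)}=0$ for every permutation, even though the pairwise covariances of its terms are only $-1/(N-1)$.
\end{itemize}
What you actually need is that terms whose index pairs $\{x,x\xor w\}$ and $\{x',x'\xor w\}$ are disjoint have covariance $O(1/N^2)$. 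For $\alpha\neq\beta$ without conditioning, inclusion--exclusion over distinct $4$-tuples gives exactly $(N^2-6N)/(N(N-1)(N-2)(N-3))$. The only larger covariances then sit on the $O(N)$ pairs $x'=x\xor w$. This is precisely what separates products of two masked outputs from a single masked output.

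Even with correct first and second moments you only get Chebyshev, which is vacuous at $\lambda\approx 1$. The bound $2\Phi(-\lambda/\sigma)+o(1)$ at constant $\lambda$ needs a genuine central limit theorem for these permutation sums (e.g.\ by the method of moments). Neither you nor the paper supplies one, so at this step your argument is exactly as heuristic as the paper's.

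There are also two problems with how you plug the result into Lemma~\ref{lemma:lem31}.
\begin{itemize}
\item Your (correct) $\sigma=\sqrt2$ on the diagonal gives $\Pr[|C^u_w(\alpha,\alpha)|\ge 1-o(1)]\approx 2\Phi(-1/\sqrt2)\approx 0.48$. That is not $\mathrm{Succ}_R+o(1)=2\Phi(-1)+o(1)\approx 0.317$. So your remark that this is ``exactly what the remainder of Lemma~\ref{lemma:lem31} needs'' holds only for off-diagonal output pairs. The paper's computation ``for both the cases'' silently uses the standard normal density there as well, so this is a discrepancy in the downstream argument, not just in your write-up.
\item The algorithm picks $(\alpha,\beta)$ after seeing $B(S)$, so the imbalance bound in your Step 2 must hold uniformly over all masks, not just for a fixed one. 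A union bound (Hoeffding for sampling without replacement) gives $\max_\alpha\bigl|\sum_{y\in B(S)}(-1)^{\alpha\cdot y}\bigr|=O(\sqrt{t\log N})$ with high probability. The resulting mean shift $O(\sqrt{t\log N/N})$ is still $o(1)$ under $t=o(N/\log N)$.
\end{itemize}
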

This is expected since $C^u_w(\alpha,\beta)$ does not depend on the answers of any query made by $A$ irrespective of the pair of masks. Thus, $C^u_w(\alpha,\beta) \sim ~\N(0,1)$ for $\alpha \not= \beta$ and $\sim \N(0,2)$ for $\alpha = \beta$. For both the cases,
\begin{align*}
    \Pr_B \left[ |C^u_w(\alpha,\beta)| \ge 1 - o(1) \right] &= \frac{2}{\sqrt{2\pi}} \int_{1 - o(1)}^{\infty} e^{-x^2/2} dx\\
    & = \frac{2}{\sqrt{2\pi}} \int_{1}^{\infty} e^{-x^2/2} dx + o(1)\\
    & = Succ_R + o(1)
\end{align*}
This proves the claim. \qed

The rest of the proof follows the exact same steps as Aaronson and Chen.

Suppose that $A$ has output some specific $(\alpha,\beta)$. We will show that $A$ most likely miscalculated the masked pairs; the masked cross-correlation for that pair is not high at all.

Let's consider the case that $\bad$ happens for some $B$. We know from Equation~\ref{eqn:1} that for a random $B$ in that case,
$$C_w(\alpha, \beta) \le 2\sqrt{\ln N}/\omega(\sqrt{\ln N}) + C^u_w(\alpha,\beta)(1 - o(1)) = o(1) + C^u_w(\alpha,\beta)(1 - o(1)).$$

Therefore, $\Pr_B \left[|C_w(\alpha,\beta)| \ge 1\right] \le \Pr_B \left[|C^u_w(\alpha,\beta)| \ge 1\right]$.

The right hand side is upper bounded by $Succ_R + o(1)$ and this puts a bound on $(\alpha,\beta)$ being a good mask pair, synonymously, $A$ being successful. Now we combine both the scenarios of $\bad$ happening and not happening.

\begin{align*}
    & \Pr_B[(\alpha,\beta) \text{ is good}]\\
=   & \Pr_B[(\alpha,\beta) \text{ is good} ~|~ \bad] \cdot \Pr_B[\bad] \\
    & + \Pr_B[(\alpha,\beta) \text{ is good} ~|~ not(\bad)] \cdot \Pr_B[not(\bad)] \\
\le & (\mathrm{Succ}_R + o(1)) \cdot 1 + 1 \cdot \tfrac{2}{\sqrt{\pi\ln N}} = \mathrm{Succ}_R + o(1).
\end{align*}


This ends the proof of the lemma.\qed

%% file: final/appendix.tex
\section{Experimental Validation}\label{sec:exp-validation}

We validate the constructions of Section~\ref{sec:applications} on a deliberately
small cipher. For the capacity-family stages every quantity --- the full masked
auto-correlation spectrum, the capacity, and the key-recovery scores --- is computed
exactly by enumeration; for the quantum MAC-fishing stages we additionally simulate
the two primitives the attack relies on. The aim is to confirm that the pipeline
behaves as the theory predicts: that the masked auto-correlations follow the
$\N(0,1)$/$\N(0,2)$ baseline of Section~\ref{sec:background}, that the capacity
separates the cipher from a random permutation at the predicted data complexity, and
that the signed template recovers the last-round subkey where the unsigned energy
cannot. We do not run quantum hardware: \textsc{MACSample} is \emph{simulated} by
computing the exact distribution $\CD_w$ (the normalized spectrum
$\cor(B,w;\alpha,\beta)^2$) and sampling $\propto\cor^2$ from it, and
\textsc{QuantumVerify} by perturbing the exact signed correlation with additive
Gaussian noise of width $\varepsilon$ (its amplitude-estimation precision), so what
we validate are the statistics Algorithms~\ref{algo:cd_w}
and~\ref{alg:quantum-verify} would produce. The signed reference $v^\star$ is read
from the reduced cipher under the known key, the known-key validation setting of
Section~\ref{sec:cap-kr}.

\subsection{A toy Simon-like cipher}\label{sec:exp-cipher}

We use a Feistel cipher in the style of Simon with $6$-bit words, hence block size
$n=12$ ($N=2^{12}=4096$). Writing the state as $(L,R)\in\F_2^6\times\F_2^6$ and
$S^j$ for left rotation by $j$ within a word, one round maps
\[
  (L,R)\ \longmapsto\ \bigl(F(L)\oplus R\oplus k,\ L\bigr),
  \qquad
  F(x) = (S^1 x \wedge S^3 x)\oplus S^2 x ,
\]
with a $6$-bit round key $k$ (the rotation constants $(1,3,2)$ are the Simon
pattern $(1,n_w/2,2)$ scaled to a $6$-bit word $n_w=6$). The reduced cipher $B$ is
$r=4$ rounds under fixed round keys; the target $E_K$ appends one further round
under a secret last-round subkey $k^\star$, so $E_K=R_{k^\star}\circ B$ and the
last-round subkey space is $\kappa=6$ bits ($64$ candidates). All correlations
below are the normalized $\cor(B,w;\alpha,\beta)=2^{-n}\sum_x(-1)^{\alpha\cdot
B(x)\oplus\beta\cdot B(x\oplus w)}\in[-1,1]$ of Section~\ref{sec:mdl}; for a random
permutation $\cor\sim\N(0,2^{-n})$, equivalently $C^w=\sqrt N\,\cor\sim\N(0,1)$.

\subsection{Masked auto-correlation spectrum and sampling}\label{sec:exp-spectrum}

For a fixed difference $w$ the entire spectrum $\{C^w(\alpha,\beta)\}_{\alpha,\beta}$
is the Walsh--Hadamard transform of the distribution of
$x\mapsto(B(x),B(x\oplus w))$, so all $2^{2n}$ values are obtained exactly in one
transform. Sweeping $w$ and ranking by the strongest non-trivial pair selects
$w=\mathtt{0x8f}$. Figure~\ref{fig:spectrum} plots the spectrum of a \emph{random}
permutation at this $w$: it matches the predicted baseline of
Section~\ref{sec:background} --- $\N(0,1)$ for off-diagonal pairs $\alpha\neq\beta$
and the heavier $\N(0,2)$ for diagonal pairs $\alpha=\beta$. Against this baseline
the reduced cipher concentrates its correlation in a handful of heavy pairs, listed
in Table~\ref{tab:heavy} and reaching $|C^w|\approx 24$, far into the tails; the
heaviest are diagonal pairs, exactly the differential-linear (DLCT) entries of
Section~\ref{sec:mdl}. Figure~\ref{fig:macsample} illustrates the simulated quantum
\textsc{MACSample}. Sampling $\propto\cor^2$ over the \emph{full} spectrum is
bulk-dominated --- by Parseval the $2^{2n}$ pairs each carry $\cor^2\sim 1/N$, so
the heavy pairs are a vanishing fraction of the total mass --- which is precisely
why the quantum sampler relies on amplitude amplification to concentrate on the
heavy set $V_w=\{\cor^2\ge\tau\}$. Conditioned on $V_w$ the returned pair is
distributed $\propto\cor^2$, which the empirical draws reproduce, as
Algorithm~\ref{algo:cd_w} prescribes.

\begin{minipage}[t]{0.48\textwidth}
    \centering
    \includegraphics[width=\linewidth]{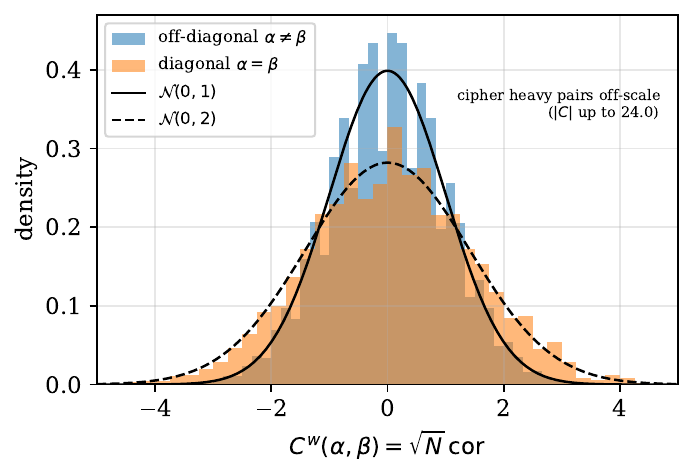}
    \captionof{figure}{Masked auto-correlation spectrum at
      $w=\mathtt{0x8f}$}
    \label{fig:spectrum}
\end{minipage}
\hfill
\begin{minipage}[t]{0.48\textwidth}
    \centering
    \includegraphics[width=\linewidth]{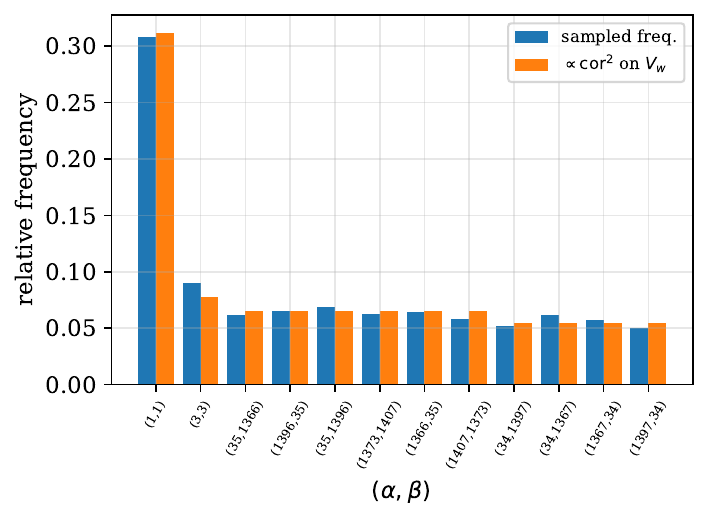}
    \captionof{figure}{Simulated quantum \textsc{MACSample} conditioned on the heavy set
      $V_w$.}
    \label{fig:macsample}
\end{minipage}


\begin{table}[h]
\centering
\begin{tabular}{lllrr}
\hline
$w$ & $\alpha$ & $\beta$ & $\mathrm{cor}$ & $N\,\mathrm{cor}^2$ \\
\hline
0x8f & 0x01 & 0x01 & -0.3750 & 576.0 \\
0x8f & 0x03 & 0x03 & +0.1875 & 144.0 \\
0x8f & 0x23 & 0x56 & +0.1719 & 121.0 \\
0x8f & 0x74 & 0x23 & +0.1719 & 121.0 \\
0x8f & 0x23 & 0x74 & +0.1719 & 121.0 \\
0x8f & 0x5d & 0x7f & +0.1719 & 121.0 \\
\hline
\end{tabular}
\caption{Heaviest masked auto-correlation tuples of the reduced toy cipher (baseline $N\,\mathrm{cor}^2\approx 1$).}
\label{tab:heavy}
\end{table}


\subsection{The capacity distinguisher}\label{sec:exp-capacity}

For the capacity distinguisher we stack $t=4$ pairs into the projected function
$g^w$. The heaviest pairs are the diagonal DLCT entries
of Table~\ref{tab:heavy}, but a diagonal pair carries the variance-$2$ baseline of
Section~\ref{sec:background}, and diagonal linear combinations would inflate the
degrees of freedom of the null; we therefore take the heaviest \emph{off-diagonal}
pairs whose masks are linearly independent, so that every nonzero combination is
off-diagonal and the null is a clean $\chi^2_{2^t-1}$. This gives capacity
$\Cap(g^w)=0.128$ for the cipher ($N\Cap\approx 523$) against the random baseline
$(2^t-1)/N\approx 0.0037$. Figure~\ref{fig:capacity} plots the statistic
$N\cdot\widehat{\Cap}(g^w)$ over $4000$ random permutations: it matches the
predicted $\chi^2_{2^t-1}$ density (empirical mean $15.1$ against $2^t-1=15$), while
the cipher lies far beyond the entire null. Figure~\ref{fig:datacomplexity} shows the empirical distinguishing success against
the number of plaintext pairs $N$; reliable distinguishing sets in at a small
constant multiple of the predicted scaling $\sqrt{2^t}/\Cap\approx 31$ pairs (dashed
line), consistent with the $N=\Omega(\sqrt{2^t}/\Cap)$ law of
Equation~\eqref{eq:data-dist}.

\begin{minipage}[t]{0.42\textwidth}
    \centering
\centering
\includegraphics[width=0.62\linewidth]{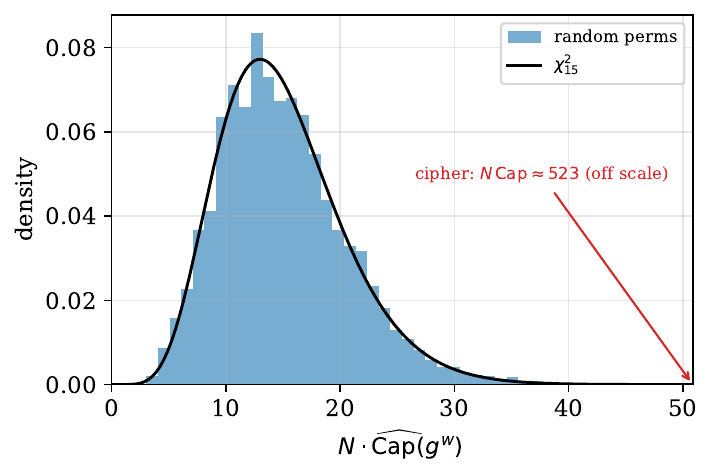}
\captionof{figure}{Capacity distinguisher at $w=\mathtt{0x8f}$, $t=4$ off-diagonal
  independent pairs. Histogram of $N\cdot\widehat{\Cap}(g^w)$ over random
  permutations with the $\chi^2_{2^t-1}$ density; the cipher ($N\Cap\approx523$)
  is far off-scale to the right.}
\label{fig:capacity}
\end{minipage}
\hfill
\begin{minipage}[t]{0.42\textwidth}
    \centering
\centering
\includegraphics[width=0.62\linewidth]{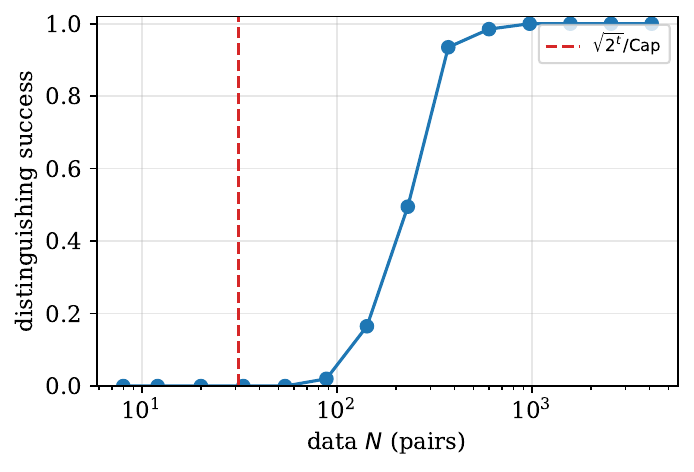}
\captionof{figure}{Distinguishing success versus data $N$. The dashed line marks the
  predicted complexity $\sqrt{2^t}/\Cap(g^w)$.}
\label{fig:datacomplexity}
\end{minipage}

\subsection{Last-round key recovery}\label{sec:exp-keyrecovery}

Finally we recover the last-round subkey of the $5$-round cipher
$E_K=R_{k^\star}\circ B$. Because the Simon subkey enters the masked output bits as
a constant phase, $\cor_i(k)=(-1)^{\langle k,m_i\rangle}\cor_i^\star$ with key-mask
$m_i=(\alpha_i\oplus\beta_i)$ on the subkey word, so the magnitudes
$|\cor_i(k)|$ --- and hence the energy $\widehat{\adv}(k)=\tfrac1m\sum_i\cor_i(k)^2$
--- are \emph{exactly} identical for all $64$ candidates (measured spread $0$).
The energy screen therefore carries no key information whatsoever: all of it lives
in the signs. The diagonal pairs that dominated the distinguisher have
$m_i=0$ and are key-blind; we build the key-recovery template instead from $m=6$
off-diagonal heavy pairs whose key-masks are linearly independent
($\mathrm{rank}\,\{m_i\}=6=\kappa$). With this template the signed cosine score
$\mathrm{Score}(k)$ of Equation~\eqref{eq:cosine} is maximised on the single coset
$k^\star\oplus\langle m_i\rangle^\perp=\{k^\star\}$, recovering the true subkey
$k^\star=2$ uniquely. Figure~\ref{fig:keyrecovery} contrasts the flat energy
profile with the sharply peaked signed score, and Table~\ref{tab:keyrec} ranks the
top candidates. Had the template masks failed to span, the score would instead
recover $k^\star$ up to the coset $\langle m_i\rangle^\perp$ --- the residual
ambiguity is exactly $\kappa-\mathrm{rank}\,\{m_i\}$ bits, the standard
key-information limit of a masked linear attack.

\begin{figure}[h]
\centering
\includegraphics[width=0.85\linewidth]{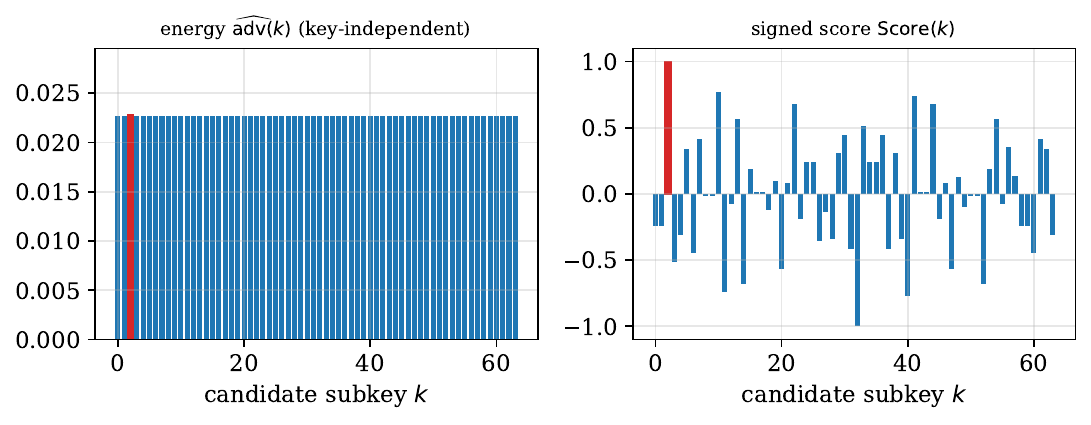}
\caption{Last-round key recovery over the $64$ candidate subkeys. Left: the energy
  $\widehat{\adv}(k)$ is identical for every candidate. Right: the signed score
  peaks uniquely at the true subkey $k^\star=2$ (highlighted).}
\label{fig:keyrecovery}
\end{figure}

\begin{table}[h]
\centering
\begin{tabular}{rrrl}
\hline
$k$ & $\widehat{\mathrm{adv}}(k)$ & $\mathrm{Score}(k)$ &  \\
\hline
2 & 0.0227 & +1.0000 & $\leftarrow k^\star$ \\
10 & 0.0227 & +0.7709 &  \\
41 & 0.0227 & +0.7414 &  \\
22 & 0.0227 & +0.6770 &  \\
44 & 0.0227 & +0.6770 &  \\
54 & 0.0227 & +0.5669 &  \\
\hline
\end{tabular}
\caption{Last-round key recovery. The energy $\widehat{\mathrm{adv}}(k)$ is identical for all candidates (key-independent magnitudes), so the screen alone is uninformative; the signed score singles out $k^\star$ when the template key-masks span the subkey space.}
\label{tab:keyrec}
\end{table}

\subsection{The MAC-fishing distinguisher}\label{sec:exp-macfish-dist}

The capacity stages above used the classical pipeline of
Section~\ref{sec:cap-attacks}: enumerate the spectrum, rank, aggregate. We now
validate the quantum MAC-fishing pipeline of Section~\ref{sec:mac-attacks}, which
replaces enumeration by sampling and exact correlations by amplitude-estimated
ones, simulating its two primitives classically. \textsc{MACSample} returns a pair
drawn $\propto\cor^2$ from the heavy set $V_w^{N\tau}=\{\cor^2\ge\tau\}$ (the
amplitude-amplified output of Section~\ref{sec:exp-spectrum}); at threshold
$\tau=10^{-2}$ the heavy set has $|V_w^{N\tau}|=145$ pairs and density
$\adv(w,N\tau)=|V_w^{N\tau}|/2^{2n}\approx 8.6\times10^{-6}$, whose inverse is the
amplification overhead. \textsc{QuantumVerify} (Algorithm~\ref{alg:quantum-verify})
returns a \emph{signed} estimate of $\cor(\alpha,\beta)$ to additive precision
$\varepsilon$ in $O(1/\varepsilon)$ queries; resolving membership in $V_w^{N\tau}$,
i.e.\ $|\cor|\ge\sqrt\tau$, fixes the budget at $\varepsilon\sim\sqrt\tau$
($\approx10$ queries here). We model it as $\widehat\cor=\cor+\N(0,\varepsilon^2)$
and study the pipeline as $\varepsilon$ varies.

The \textsc{MACFishingDistinguisher} (Algorithm~\ref{alg:mac-fishing-dist}) draws
$m=16$ pairs, verifies each, and thresholds the averaged energy
$\tfrac1m\sum_i\widehat\cor_i^2$. For the cipher these are genuine heavy pairs; a
random permutation has no heavy set, so its verified correlations are
$\cor\sim\N(0,1/N)$ plus the $\varepsilon$ noise floor.
Figure~\ref{fig:macfish-dist} (left) shows the two statistic distributions at
$\varepsilon=\sqrt\tau$: they separate cleanly, and at a $1\%$ false-positive
operating point the cipher is detected with probability $0.88$, in line with the
Gaussian union bound of Proposition~\ref{prop:mac-dist}. The right panel sweeps the
verification precision: detection is near-certain for $\varepsilon\lesssim\sqrt\tau$
and degrades once the noise floor $\varepsilon^2$ overtakes the heavy energy.

\begin{figure}[h]
\centering
\includegraphics[width=\linewidth]{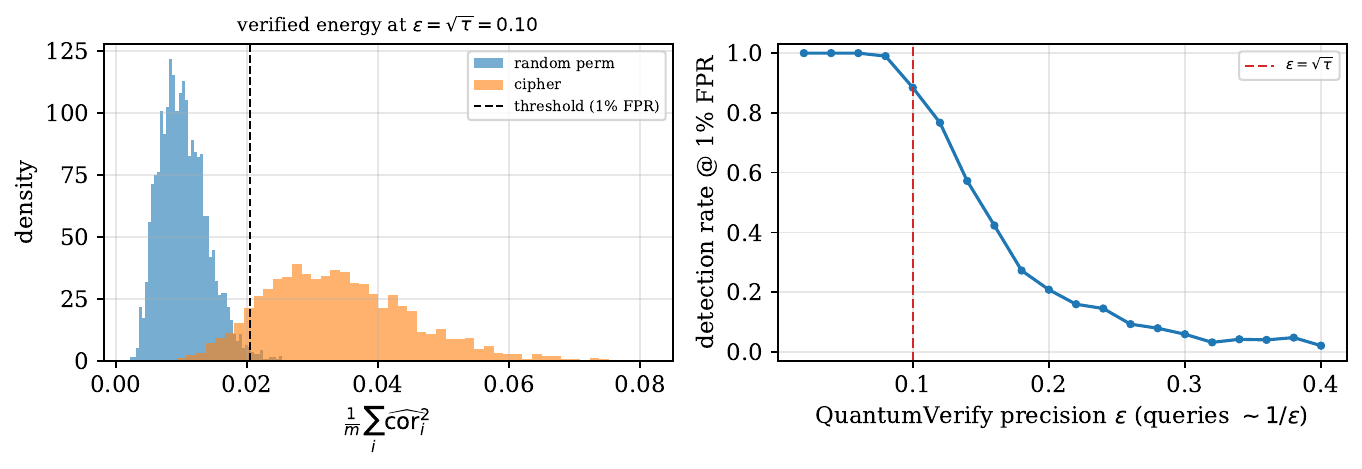}
\caption{MAC-fishing distinguisher. Left: averaged verified energy
  $\tfrac1m\sum_i\widehat\cor_i^2$ for the cipher and for random permutations at
  $\varepsilon=\sqrt\tau$, with the $1\%$ false-positive threshold. Right: detection
  rate versus \textsc{QuantumVerify} precision $\varepsilon$ (query cost
  $\sim1/\varepsilon$); the minimal budget $\varepsilon=\sqrt\tau$ is marked.}
\label{fig:macfish-dist}
\end{figure}

\subsection{MAC-fishing key recovery}\label{sec:exp-macfish-kr}

Key recovery reuses the spanning template of Section~\ref{sec:exp-keyrecovery}
($m=6$ off-diagonal pairs, $\mathrm{rank}\,\{m_i\}=6=\kappa$) but now obtains its
correlations from \textsc{QuantumVerify}. \textsc{TemplateEstablishment}
(Algorithm~\ref{alg:template-establishment}) fixes the same-key signed reference
$v^\star$ offline at high precision; for each subkey candidate,
\textsc{MACFishingKeyRecovery} (Algorithm~\ref{alg:mac-fishing-kr}) verifies the
template on the peeled cipher and scores it against $v^\star$ by the signed cosine
of Equation~\eqref{eq:cosine-mac}. As before the energy is exactly key-blind
(measured spread $0$); recovery rests entirely on the verified signs.

Figure~\ref{fig:macfish-keyrec} (left) shows the score profile: at an adequate
budget the true subkey $k^\star=2$ is the unique maximum. The right panel sweeps
$\varepsilon$ and exposes a precision requirement sharper than the distinguisher's:
because recovery must resolve correlation \emph{signs} rather than aggregate energy,
it needs $\varepsilon$ below the template magnitude $|\cor|\approx0.15$. At the
distinguisher's budget $\varepsilon=\sqrt\tau$ recovery is only marginal ($0.68$),
whereas halving $\varepsilon$ (doubling the query cost) makes it reliable
($\approx1$). The gap is structural: the heaviest correlations of this cipher span
only a rank-$2$ subspace of the subkey-mask space, so four of the six key directions
ride on weaker correlations that demand finer amplitude estimation.
The Grover-style search of
Remark~\ref{grrem} marks on this same signed score and, as there, returns the
recovered coset rather than a unique $k^\star$ unless the template masks span.

\begin{figure}[h]
\centering
\includegraphics[width=\linewidth]{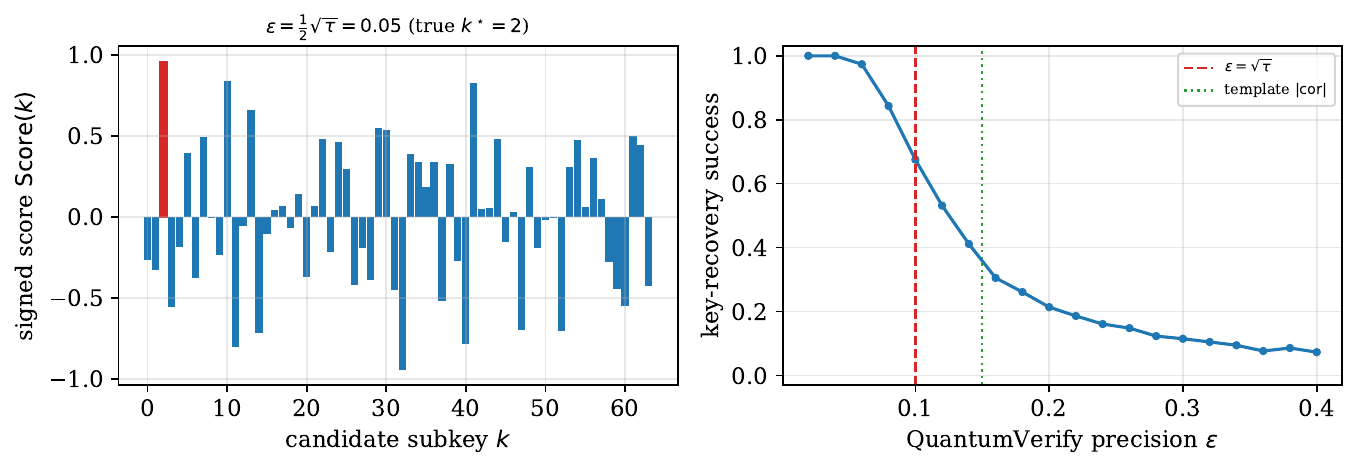}
\caption{MAC-fishing key recovery over the $64$ subkeys. Left: signed score
  $\mathrm{Score}(k)$ at $\varepsilon=\tfrac12\sqrt\tau$, peaking uniquely at
  $k^\star=2$. Right: recovery success versus \textsc{QuantumVerify} precision
  $\varepsilon$, with $\sqrt\tau$ and the template magnitude $|\cor|$ marked.}
\label{fig:macfish-keyrec}
\end{figure}


These experiments reproduce, on a genuine cipher structure, the two phenomena that
shape Section~\ref{sec:applications}: the capacity aggregates the heavy masked
auto-correlations into a clean distinguisher at the predicted data complexity, and
the last-round subkey is invisible to the unsigned energy but recovered exactly by
the signed template once its masks span the subkey space. The MAC-fishing stages
reproduce the same two attacks under the quantum primitives of
Section~\ref{sec:mac-attacks} --- sampled heavy pairs and amplitude-estimated signed
correlations --- and expose a precision hierarchy between them: distinguishing
tolerates the minimal $\varepsilon\sim\sqrt\tau$ budget, while sign-based key
recovery requires finer verification.

\section{Classical Verification and Scoring for MAC Fishing}\label{app:classical-mac}

These are the classical Phase-2 procedures for the MAC-Fishing attacks of
Section~\ref{sec:mac-kr}. 

\begin{algorithm}[h]
\caption{\textsc{ClassicalVerify}}
\label{alg:classical-verify}
\begin{algorithmic}[1]
\Require oracle $B:\mathbb{F}_2^n\to\mathbb{F}_2^n$; difference $w$; candidate
  mask pair $(\alpha,\beta)$; sample count $N$
\Ensure estimated squared correlation $\widehat{C}_w(\alpha,\beta)^2$
\State $\mathit{acc}\gets 0$
\For{$j = 1$ \textbf{to} $N$}
  \State Choose $x_j\in\mathbb{F}_2^n$ uniformly at random
  \State $y_j\gets B(x_j)$, \quad $y'_j\gets B(x_j\oplus w)$
  \State $\mathit{acc}\gets \mathit{acc} + (-1)^{\alpha\cdot y_j\oplus\beta\cdot y'_j}$
\EndFor
\State $\widehat{C}_w(\alpha,\beta)\gets \mathit{acc}/N$
\State \Return $\widehat{C}_w(\alpha,\beta)^2$
\end{algorithmic}
\end{algorithm}


\begin{algorithm}[h]
\caption{\textsc{ClassicalKeyScore}}
\label{alg:classical-key-score}
\begin{algorithmic}[1]
\Require plaintext--ciphertext oracle $E_K$; difference $w$; key guess $k$;
  template $\{(\alpha_i,\beta_i)\}_{i=1}^m$; reference vector $v^\star$;
  sample count $N$
\Ensure scores $\widehat{\mathrm{adv}}(k)$, $T(k)$, $\mathrm{Score}(k)$
\State $v(k)\gets ()$
\For{$i = 1$ \textbf{to} $m$}
  \State $\mathit{acc}\gets 0$
  \For{$j = 1$ \textbf{to} $N$}
    \State Choose $x_j\in\mathbb{F}_2^n$ uniformly at random
    \State $c_j\gets E_K(x_j)$, \quad $c'_j\gets E_K(x_j\oplus w)$
    \State $y_j\gets R_k^{-1}(c_j)$, \quad $y'_j\gets R_k^{-1}(c'_j)$
      \Comment{partial decryption under $k$}
    \State $\mathit{acc}\gets \mathit{acc} +
      (-1)^{\alpha_i\cdot y_j\oplus\beta_i\cdot y'_j}$
  \EndFor
  \State append $\mathit{acc}/N$ to $v(k)$
\EndFor
\State $\widehat{\mathrm{adv}}(k)\gets \tfrac{1}{m}\sum_i v_i(k)^2$
\State $T(k)\gets \langle v(k),v^\star\rangle$
\State $\mathrm{Score}(k)\gets T(k)/(\|v(k)\|\,\|v^\star\|)$
\State \Return $\widehat{\mathrm{adv}}(k)$, $T(k)$, $\mathrm{Score}(k)$
\end{algorithmic}
\end{algorithm}

\section{Use of Wilson's Score}\label{app:Wilson}

\paragraph{The estimation problem.}
Each candidate approximation has an unknown \emph{support probability}
\[
  \rho \;=\; \Pr_{k}\!\bigl[\,\lvert C_k\rvert > Z\sigma\,\bigr],
\]
the fraction of keys for which its correlation $C_k$ clears the
random-permutation floor (here $\sigma=2^{-n/2}$ and $Z=2$). From a sample of
$K$ keys we observe $X$ successes and the point estimate $\phat=X/K$; we want
a \emph{one-sided lower confidence bound} on $\rho$ so that we can keep a pair
only when we are confident its true support exceeds a threshold.
 
\paragraph{The Wilson score interval.}
The Wilson interval is obtained by inverting the score test rather than the
Wald test. Testing $H_0:\rho=p$ with the score statistic
$(\phat-p)\big/\sqrt{p(1-p)/K}$ and collecting all $p$ not rejected at level
$z$ gives
\[
  \bigl\{\,p:\ (\phat-p)^2 \le z^2\,p(1-p)/K\,\bigr\},
\]
a quadratic inequality in $p$,
\[
  \Bigl(1+\tfrac{z^2}{K}\Bigr)p^2
  -\Bigl(2\phat+\tfrac{z^2}{K}\Bigr)p
  +\phat^2 \;\le\; 0 .
\]
Solving the quadratic yields an interval centred at
$\bigl(\phat+\tfrac{z^2}{2K}\bigr)\big/\bigl(1+\tfrac{z^2}{K}\bigr)$ with the
half-width below; the \emph{one-sided lower bound} is
 
\begin{equation}
  \boxed{\;
  \rho_{\mathrm{LB}}
  \;=\;
  \frac{\phat+\dfrac{z^2}{2K}}{1+\dfrac{z^2}{K}}
  \;-\;
  \frac{z}{1+\dfrac{z^2}{K}}
  \sqrt{\dfrac{\phat(1-\phat)}{K}+\dfrac{z^2}{4K^2}}
  \;}
  \label{eq:wilson}
\end{equation}
(clamped at $0$). Three properties matter for the acceptance criteria: the bound always lies
in $[0,1]$; its small-sample coverage is far better than Wald's; and the
$z^2/K$ terms pull the estimate toward $1/2$, i.e.\ the bound is
\emph{conservative} for a high target $\rho$ which makes it conservative when accepting or rejecting against a threshold. The experiment uses
$z=\Phi^{-1}(0.95)=1.6449$ (a one-sided $95\%$ bound, appropriate for a
one-sided ``$\rho\ge\tau$'' claim) and accepts a pair only when
$\rho_{\mathrm{LB}}\ge 0.70$.

Several of our experiments share one statistical structure. Each estimates the
fraction of a population too large to enumerate either the key space, or a part of $2^{2n}$ total mask pairs. On these differential-linear approximation needs to clears a
fixed threshold, where either the key or the mask-pairs are sampled uniformly at random and counting successes. Every draw
is classified by a cross-correlation evaluated \emph{exactly} over all $2^n$
inputs rather than sub-sampled, so the per-draw outcome is noise-free and the
success count is a clean binomial of i.i.d.\ trials; a Wilson interval therefore
applies directly to the proportion and, after rescaling, to
the population count itself. The estimate stays informative only because each
threshold is pinned to the random-permutation noise floor (per-correlation
variance ${\approx}\,1/N$), fixing the qualifying fraction at a
modest tail mass independent of block size instead of letting it vanish
exponentially, so a uniform sample reliably lands in the set. Since the interval
captures only the sampling uncertainty of a single estimate, any spread observed
across keys or input differences is separated from it and read as genuine
population variation against that same baseline.

%% file: final/appendix-simple-proofs.tex
\section{Proof of Lemma~\ref{moments}}\label{app:lem1}

\lemsimple*

\begin{proof}

The first two cases follow directly from the fact that if $X \sim \N(0,\sigma^2)$, then $X^2/\sigma^2$ follows a $\chi^2(1)$ distribution with degree of freedom 1, and that has mean 1 and variance 2.

For the 3rd case, we will simply average the above values over all $(\alpha,\beta) \in \F_2^n \times \F_2^n$; among those pairs, there are $N$ pairs with $\alpha = \beta$ and $N^2-N$ pairs with $\alpha \neq \beta$. We will also require $\E_B[C_w(\alpha, \beta)^4]$ which can be obtained from $\Var_B[C_w(\alpha, \beta)^2]$ and $\E_B[C_w(\alpha,\beta)^2]$.
\begin{itemize}
    \item For $\alpha \not=\beta$, $\E_B[C_w(\alpha, \beta)^4] = \Var_B[C_w(\alpha, \beta)^2] + \E_B[C_w(\alpha,\beta)^2]^2 = 2 + 1 = 3$.
    \item For $\alpha =\beta$, $\E_B[C_w(\alpha, \beta)^4] = \Var_B[C_w(\alpha, \beta)^2] + \E_B[C_w(\alpha,\beta)^2]^2 = 8 + 2^2 = 12$.
\end{itemize}

\begin{align*}
    \mathbb{E}_{\alpha,\beta}\mathbb{E}_B[C_w(\alpha,\beta)^2] 
    &= \frac{N-1}{N} \cdot 1 + \frac{1}{N} \cdot 2 
     = 1 + \frac{1}{N}.
\end{align*}
\begin{align*}
    \mathbb{E}_{\alpha,\beta}\mathbb{E}_B[C_w(\alpha,\beta)^4] 
    &= \frac{N-1}{N} \cdot 3 + \frac{1}{N} \cdot 12 
     = 3 + \frac{9}{N}.
\end{align*}
\begin{align*}
    \mathrm{Var}_{\alpha,\beta,B}[C_w(\alpha,\beta)^2] 
    &= \mathbb{E}_{\alpha,\beta}\mathbb{E}_B[C_w(\alpha,\beta)^4] 
     - \left(\mathbb{E}_{\alpha,\beta}\mathbb{E}_B[C_w(\alpha,\beta)^2]\right)^2 \notag\\
    &= \left(3 + \frac{9}{N}\right) - \left(1 + \frac{1}{N}\right)^2 \notag\\
    &= 2 + \frac{7}{N} - \frac{1}{N^2}.
\end{align*}

\end{proof}

\section{Proof of Lemma~\ref{lem:vtaulemma}}
\label{sec:vtaulemmasection}

\vtaulemma*

\begin{proof}
    By Definition~\ref{def:heavy_pairs}, we have
    \begin{equation*}
    V_w^\tau = \{(\alpha,\beta)\in\mathbb{F}_2^n\times\mathbb{F}_2^n : C_w(\alpha,\beta)^2\geq\tau\}
\end{equation*}
So, 
\begin{equation*}
    |V_w^\tau| = \sum_{(\alpha,\beta)}\delta(\alpha,\beta), 
    \qquad 
    \text{where } \delta(\alpha,\beta) = \mathbb{I}[C_w(\alpha,\beta)^2\geq\tau].
\end{equation*}

Note that $(\alpha,\beta)\in V_w^\tau \iff C_w(\alpha,\beta)^2 \geq\tau$. So:
\begin{equation*}
    p_{\alpha,\beta} 
    = \Pr_B\left[(\alpha,\beta)\in V_w^\tau\right] 
    = \Pr_B\left[C_w(\alpha,\beta)^2\geq\tau\right].
\end{equation*}
\paragraph{Case 1: $\alpha\neq\beta$.} 
$C_w(\alpha,\beta)\sim\mathcal{N}(0,1)$\\

\begin{align*}
    p_{\alpha,\beta}&= \Pr_B\left[C_w(\alpha,\beta)^2\geq\tau\right]\\
    &= \Pr_B\left[|C_w(\alpha,\beta)|\geq \sqrt{\tau}\right]\\
    &=2\Pr_B\left[C_w(\alpha,\beta)\geq\sqrt{\tau}\right] ~~~\text{since,}~\mathcal{N}(0,1)~\text{is symmetric around}~0.\\
    &=2\left(1-\Pr_B\left[C_w(\alpha,\beta)\leq\sqrt{\tau}\right]\right)\\
    &=2\left(1-\Phi(\sqrt{\tau})  \right) \\
    &=2\Phi(-\sqrt{\tau})\\
\end{align*}

\paragraph{Case 2: $\alpha=\beta\neq 0^n$.}

$C_w(\alpha,\alpha)\sim\mathcal{N}(0,2)$. So, we can write $C_w(\alpha,\alpha)$ in terms of standard normal random variable as, $C_w(\alpha,\alpha)=\sqrt{2}Z,$ where $Z \sim \mathcal{N}(0,1).$\\

\begin{align*}
    p_{\alpha,\alpha} 
    &= \Pr_B\left[C_w(\alpha,\alpha)^2\geq\tau\right] \notag\\
    &= \Pr_B\left[2Z^2\geq\tau\right] \notag\\
    &= \Pr_B\left[Z^2\geq\frac{\tau}{2}\right] \notag\\
    &= \Pr_B\left[|Z|\geq\sqrt{\frac{\tau}{2}}\right] \notag\\
    &= 2\Pr_B\left[Z\geq\sqrt{\frac{\tau}{2}}\right] \notag\\
    &= 2\Phi\!\left(-\sqrt{\frac{\tau}{2}}\right)
\end{align*}
Combining both cases, we get
\begin{equation*}
    p_{\alpha,\beta} = 
    \begin{cases}
        2\Phi(-\sqrt{\tau}) & \alpha\neq\beta\\
        2\Phi\!\left(-\sqrt{\tau/2}\right) & \alpha=\beta\neq 0^n
    \end{cases}
\end{equation*}

\paragraph{Expectation of $|V_w^\tau|$}

\begin{align*}
    \mathbb{E}_B\left[|V_w^\tau|\right] 
    &= N\cdot 2\Phi\!\left(-\sqrt{\tau/2}\right) 
    + (N^2-N)\cdot 2\Phi(-\sqrt{\tau}) \notag\\
    &= 2N^2\Phi(-\sqrt{\tau}) 
    + 2N\Phi\!\left(-\sqrt{\tau/2}\right) 
    - 2N\Phi(-\sqrt{\tau}) \notag\\
    &= 2N^2\Phi(-\sqrt{\tau}) 
    + 2N\left(\Phi\!\left(-\sqrt{\tau/2}\right) 
    - \Phi(-\sqrt{\tau})\right)\\
    &=N^2 \cdot p_1 + N \cdot (p_2-p_1)
\end{align*}

where, $p_1 = 2\Phi(-\sqrt{\tau})$ and $p_2 = 2\Phi(-\sqrt{\tau/2})$.

\paragraph{Variance of $|V_w^\tau|$}

$C_w(\alpha,\beta)$ are independent for different pairs of $(\alpha, \beta)$, so the indicator functions $\delta(\alpha, \beta)$ are independent across different mask pairs $(\alpha, \beta)$ and thus the covariance terms vanish. Using the fact that the variance of a Bernoulli$(p)$ random variable is equal to $p(1-p),$ we get the variance of $|V_w^\tau|$ as

\begin{equation*}
    \mathrm{Var}_B\left[|V_w^\tau|\right] 
    = (N^2-N)\cdot p_1(1-p_1) + N\cdot p_2(1-p_2).
\end{equation*}

\end{proof}